\documentclass[
    aps,
    pra,
    longbibliography,
    superscriptaddress,
    amsmath,
    amssymb,
    amsfonts, 
    tightenlines,
    twocolumn,  
    ]{revtex4-2} 

\usepackage{graphicx} 
\usepackage[utf8]{inputenc}
\usepackage{amssymb}
\usepackage{float}
\usepackage{amsmath}
\usepackage{amsthm}
\usepackage{amsfonts}
\usepackage{cancel}
\usepackage{physics}
\usepackage{bbold}
\usepackage{bm}
\usepackage{braket}
\usepackage{graphicx}
\usepackage[title]{appendix}%
\usepackage{xcolor}
\usepackage[colorlinks=true,citecolor=teal,linkcolor=teal,urlcolor=teal]{hyperref}

\usepackage{charter}
\usepackage{times}

\newcommand{\nn}{\nonumber}

\newcommand{\EV}[1]{\left < #1 \right >}
\newcommand{\bk}[1]{\left ( #1\right )}

\newcommand{\eqn}[1]{\begin{eqnarray} \newline #1 \end{eqnarray}}
\newcommand{\ee}{&=&}
\newcommand{\hs}{\hspace{0.2cm}}
\newcommand{\adag}{\hat{a}^{\dag}}

\definecolor{darkgreen}{rgb}{0.0, 0.42, 0.24}

\definecolor{rsblue}{rgb}{0.0, 0.5, 0.69}

\newtheorem{definition}{Definition}
\newtheorem{theorem}{Theorem}
\newtheorem{lemma}{Lemma}
\newtheorem{corollary}{Corollary}
\begin{document}


\title{Certification of linear optical quantum state preparation}

\author{Riko Schadow}
\affiliation{Dahlem Center for Complex Quantum Systems, Freie Universit{\"a}t Berlin, 14195 Berlin, Germany}

\author{Naomi Spier}
\affiliation{MESA+ Institute for Nanotechnology, University of Twente, 7500 AE Enschede, The Netherlands}

\affiliation{Department of Applied Physics and
Science Education, Eindhoven
University of Technology, P. O. Box 513, 5600 MB Eindhoven, The
Netherlands}

\author{Stefan N. van den Hoven}
\affiliation{MESA+ Institute for Nanotechnology, University of Twente, 7500 AE Enschede, The Netherlands}

\author{Malaquias Correa Anguita}
\affiliation{MESA+ Institute for Nanotechnology, University of Twente, 7500 AE Enschede, The Netherlands}

\author{Redlef B.G. Braamhaar}
\affiliation{MESA+ Institute for Nanotechnology, University of Twente, 7500 AE Enschede, The Netherlands}

\author{Sara Marzban}
\altaffiliation{Current affiliation: PiCard Systems B.V., Toernooiveld 100,
6525EC, Nijmegen, The Netherlands.}
\affiliation{MESA+ Institute for Nanotechnology, University of Twente, 7500 AE Enschede, The Netherlands}

\author{Jens Eisert}
\affiliation{Dahlem Center for Complex Quantum Systems, Freie Universit{\"a}t Berlin, 14195 Berlin, Germany}
\affiliation{Helmholtz-Zentrum Berlin fur Materialien und Energie, 14109 Berlin, Germany}
\affiliation{Fraunhofer Heinrich Hertz Institute, 10587 Berlin, Germany}

\author{Jelmer J. Renema}
\affiliation{MESA+ Institute for Nanotechnology, University of Twente, 7500 AE Enschede, The Netherlands}

\affiliation{Department of Applied Physics and
Science Education, Eindhoven
University of Technology, P. O. Box 513, 5600 MB Eindhoven, The
Netherlands}

\affiliation{Department of Electrical Engineering, Eindhoven
University of Technology, P. O. Box 513, 5600 MB Eindhoven, The
Netherlands}

\author{Nathan  Walk}
\affiliation{Dahlem Center for Complex Quantum Systems, Freie Universit{\"a}t Berlin, 14195 Berlin, Germany}

\date{\today}
\begin{abstract}
Certification is important to guarantee the correct functioning of quantum devices. A key certification task is verifying that a device has produced a desired output state. In this work, we study this task in the context of photonic platforms, where single photons are propagated through linear optical 
interferometers to create large, entangled resource states for metrology, communication, quantum advantage demonstrations and for so-called linear optical quantum computing (LOQC). This setting derives its computational power from  the indistinguishability of the photons, i.e.,  their relative overlap. Therefore, standard fidelity witnesses developed for distinguishable particles (including qubits) do not apply directly, because they merely certify the closeness to some fixed target state. We introduce a measure of fidelity suitable for this setting and show several different ways to witness it, based on earlier proposals for measuring genuine multi-photon indistinguishability. We argue that a witness based upon the discrete Fourier transform is an optimal choice. We experimentally implement this witness and certify the fidelity of several multi-photon states.
\end{abstract}

\maketitle

\section{Introduction}

A crucial ingredient for developing useful, large-scale quantum devices is certifying that they function  as intended. {After all, we cannot hope to have useful large-scale quantum devices if the development is not accompanied by concomitant certification and benchmarking procedures
\cite{CertificationReview,PRXQuantum.2.010201,PropertyTesting}.}
Simple as this task may appear, {putting it into 
practice can be a 
challenging endeavor; for large-scale quantum system, learning the full process or 
quantum state 
in a variant of a tomographic procedure is infeasible, simply for the exponential sample complexity accompanying this task. 

Fortunately, in many meaningful situations and prospective applications of quantum technology, such full tomographic 
knowledge is not required.}
{Oftentimes -- in fact, this is more the rule rather than the exception --
one is merely interested in providing a meaningful answer to how close a given experimental preparation is to an anticipated one. In most
practically relevant}
cases, {one already has a pretty good}
idea of what the system is doing, i.e., what state it should prepare, and merely wants to check whether and to what extent that is indeed the case. 
{So while} traditional full state tomography remains impractical as quantum devices scale up, one can {substantially} reduce the computational resources required for certification by making some assumptions about the system in question or easing the requirements on the information content of the certification method. 

{The mindset of \emph{property testing} captures the idea that 
given a large object that either has a certain property or is somehow `far' from having that property, a tester should efficiently distinguish between these two 
cases. For state preparation tasks, the \emph{fidelity} of state preparations
is a meaningful quantity to capture closeness, and has led to the development of \emph{fidelity witnesses}, which are experimental procedures to lower bound the fidelity to a desired target state \cite{Leandro,Gluza:2018}.}

{While the basics of this are well understood and good reviews are available \cite{PropertyTesting,PRXQuantum.2.010201,CertificationReview}, for certain \emph{photonic platforms} -- one of the promising avenues for realizing scalable quantum computers} is via so-called \emph{linear optical quantum computing} (LOQC) \cite{KLM,Oneway,Browne:2005,Probabilistic2,Bartolucci:2023,Maring:2024,PsiQ:2025} or sub-universal quantum advantage devices \cite{aaronson_computational_2010,Wang:2019,Renema:2021sampling} {-- this task is marred by technical and conceptual difficulties.} These platforms proceed by encoding information in a particular degree of freedom of single photons (a common example is spatial, `dual-rail' encoding) that are then injected into linear optical interferometers that manipulate that degree of freedom or mode, commonly referred to as an \emph{external} mode. Complex multi-photon interference is then engineered to prepare a desired state which is subsequently measured by photon counting detectors. However, ideal bosonic interference will only occur when the photons involved are perfectly \emph{indistinguishable} in all other degrees of freedom, even those that the experimentalist cannot directly resolve or manipulate, which are typically referred to as \emph{internal}. Naturally, perfect indistinguishability is unachievable in practice and substantial work has been carried out showing the significant implications of distinguishability errors for the hardness of quantum sampling tasks \cite{Rohde:2012,Shchesnovich:2014,Rohde:2015,Shchesnovich:2015,renema2018efficient,Moylett:2018,Renema:2018,Moylett:2020,Shchesnovic:2021} and the reliability of fault-tolerant LOQC 
\cite{JelmerPhotonDistillation,MindTheGaps,saied:2025,Somhorst:2025,Bartolucci:2023}.

More broadly, in this context, there is a degree to which standard notions of state certification and tomography no longer make operational sense. When we write down a `perfect' implementation of some interference experiment, we typically specify an input state vector of $n$ identical photons to be injected into a particularly interferometer to prepare some target state. However, this description only contains \emph{relative} information, in that we have specified that the photons are indistinguishable from one another but have said nothing about the actual structure of the mode of which they are an excitation. In this sense, one might object that we have failed to fully specify a target state with respect to which we could evaluate a fidelity! More practically speaking, the fact that standard detectors are typically only capable of resolving external modes also creates difficulties for the would-be certifier wishing to use only equipment from the natural LOQC toolbox.


Although several proposals tailored to photonic systems have appeared, they all fall into one of two categories, neither of which serves to easily address these issues. The first are those designed with continuous variable systems in mind that rely on the availability of homodyne  or heterodyne detection to achieve fidelity witnesses \cite{Leandro,Chabaud:2020,BosonSamplingVerification,Upreti:2024} or even state \cite{Bittel:2024,Mele:2025,Bittel:2025} or process \cite{Fanizza:2025} tomography. However, these protocols are for certifying the fidelity to a fixed internal state and so are conceptually unsatisfactory for the reasons given above. Perhaps more importantly, although pulsed homodyne detection can be achieved in some hybrid implementations \cite{Xanadu:2025,Larsen:2025}, it remains highly challenging and is generally unavailable in most LOQC implementations. The second class of proposals utilize only standard LOQC components for tomography \cite{RahimiKeshari:2013,LaingOBrien,Banchi:2018,Hoch:2023} or to implement modern techniques such as randomized benchmarking \cite{Wilkens:2024,Arienzo:2025} and shadow tomography \cite{Thomas:2025} but essentially focus only on verifying that the interferometer correctly manipulates the external degrees of freedom, whilst assuming the input photons are perfectly indistinguishable. 

However, thanks to the enormous amount of work on the physics of partial photon distinguishability \cite{hong_measurement_1987,Ou:2006,Rohde:2007if,Tichy:2010,Tichy:2012gt,Tan:2013,Tichy:2014,Tillmann:2015,Rohde:2015,Shchesnovich:2015,Tichy:2015,Dittel:PRA,Dittel:PRL,Dittel:PRX,Menssen:2017,Stanisic:2018,Jones:2020,Jones:2023,Shchesnovich:2018triad,annoni2025} it has now been clarified how the observed level of bosonic interference is determined entirely by the permutation symmetries of the detected photons. It has been shown that a perfect LO interference experiment is realized not only by $n$ photons with identical internal degrees of freedom, but by any state from an equivalence class of $n$-photon states that are symmetric under all possible particle permutations. From the perspective of an observer who can only carry out LOQC experiments it makes no difference which precise state in the class is prepared. Similarly, from the perspective of a certifier, it is only operationally necessary to bound the fidelity with \emph{any} state in the target equivalence class. Our first contribution is therefore to recast the task of certification in the linear optics context by defining an equivalence class of target states, $\mathcal{C}_\mathrm{LO}$, as all states made by processing a fiducial state of $n$ indistinguishable single photons and $m-n$ vacuum modes through a given $m$-mode interferometer, $U$. We then define an LOQC version of the fidelity (trace distance) as the standard quantity suitably maximized (minimized) over that class of states that are effectively equivalent to a specific target state,
reminiscent of other metrics, such as the LOCC trace distance.

Defining the LOQC fidelity in this way provides a quantity that is not only operationally meaningful, but experimentally accessible using well chosen, trusted interferometers. Here, trust should be understood as the property that the interferometer used for certification can be safely assumed to act upon the external degrees of freedom as programmed while leaving the internal degrees of freedom undisturbed. 

Intuitively, we would expect a witness to require verification of both the external degrees of freedom manipulated by the interferometer and the permutation symmetry properties of all input photons. Given an interferometer that reliably manipulates external modes, the first property could be checked by applying the inverse of the unitary defining the target state (i.e.,  $U^\dag$) and monitoring how often the original arrangement of single-photons and vacuum states is observed. We will refer to this as measuring the \emph{photon reversibility}. To ascertain the photon indistinguishability we could perform an interference experiment with distinct behavior for (in)distinguishable photons. Here, we would also require that the certification interferometer leaves the internal modes untouched, so that observed interference behaviour can be safely attributed to the experiment being certified. Given these requirements, the observed interference statistics can be used to bound the projection in the $n$-photon symmetric subspace, and we show in Thm.~\ref{thm_sym} that the LOQC fidelity can be meaningfully bounded by separate measurements of photon indistinguishability and reversibility.

Indeed, a protocol along these lines was previously proposed \cite{Somhorst:2023} that applied the discrete Fourier transform and its well studied \emph{zero-transmission} properties \cite{Tichy:2010,Tichy:2012gt,Tichy:2014,Dittel:PRL,Dittel:PRA} to bound the multi-photon indistinguishability. We go beyond \cite{Somhorst:2023} in several ways, not only by defining and motivating the LOQC fidelity, which is not made explicit in Ref.\ \cite{Somhorst:2023}, but also by obtaining concrete performance bounds for arbitrary $n$. In Thm.~\ref{thm_wit} we provide a general method to transform from an indistinguishability measure (when applicable) into a fidelity witness and then consider three other proposals for quantifying indistinguishability, based upon bunching statistics of super-posed Hong-Ou-Mandel experiments \cite{Brod:2019cf}, cyclic interferometry \cite{Pont:2022} and a two-mode correlation measurement \cite{vanderMeer:2021}. Lastly, in Thm.~\ref{thm_exp} we derive a tighter bound from a witness that is more economical in terms of the optical depth of the trusted interferometer at the price of introducing stronger assumptions on the input states.

The properties of the fidelity witnesses are primarily determined by their constituent indistinguishability measures. Thus, we carry out a theoretical and numerical comparison in terms of tightness, assumptions required to apply the indistinguishability witnesses, partial device-independence with respect to the trusted interferometers and sample complexity. These comparisons are then validated by an experimental implementation of all four methods using spontaneous parametric down-conversion sources and a programmable integrated photonic device. 

While the cyclic and Fourier methods were already known to provide tight bounds, we show the others only provide good approximations in the limit of high photon indistinguishability. The assumptions used to derive the various witnesses, which correspond to restricting to certain kinds of incoherent distinguishability errors, can be unified via recent work on the so-called \emph{partition representations} \cite{annoni2025}. All witnesses can be applied if the input states possess a \emph{positive} partition representation when projected on the correct photon-number subspace, which is a reasonable assumption in many implementations and a condition that can be enforced by applying randomized permtutation channels analogous to qubit twirling \cite{PhysRevLett.76.722,Knill:2004,Wallman:2016}. We provide explicit examples of how the cyclic method fails for non-partition states and how the bunching method fails for states with a negative partition representation. Moreover, we demonstrate that a negative partition representation can be observed by a relatively simple time-delay model of distinguishability error, thereby adding to the recent literature on the deviation between bosonic and `bunched' behavior \cite{Seron:2023,Pioge:2023,Rodari:2024,Geller:2025}. Interestingly, we show that the Fourier method is invariant under permutation twirling and hence can be used to bound the fidelity that would be observed for any input state subjected to permutation twirling. 

The two-mode correlator method was shown to be semi-\emph{device-independent} (DI) in the sense that any errors in the trusted interferometer act only to lower the certified indistinguishability \cite{vanderMeer:2021}. We show that this property does not hold for the cyclic witness but provide numerical evidence that the Fourier and bunching methods are semi-DI. In 
Ref.\ \cite{Somhorst:2023}, the relevant detection probabilities are only explicitly computed for the case of three input photons so that the sample-complexity of the scheme was not clear. Here, for a slightly more restricted class of distinguishability noise (the so-called ``orthogonal bad bit" model \cite{Sparrow:2017}), 
we show the Fourier witness has an $\mathcal{O}(1)$ sample complexity and we conjecture this holds generally. This compares favourably to $\mathcal{O}(n^2)$ for the bunching witness, $\mathcal{O}(n^4)$ for the two-mode correlator and $\mathcal{O}(2^{2n})$ for the cyclic method.

Finally, based on these comparisons we conclude the Fourier indistinguishability witness enjoys the best combination of features and experimentally implement the corresponding fidelity witness derived in Thm.~\ref{thm_exp} to certify 
a $3$-photon, $4$-mode state preparation device for a selection of target states corresponding to Haar random interferometers.

This work is organized as follows: in Sec.~\ref{background} we present some necessary background and notation before defining the equivalence class of target states and the LOQC fidelity in Sec~\ref{sec:LOQC_fid}. In Sec.~\ref{witness} we derive fidelity witnesses under a different tradeoffs between assumptions on the input state and overall practicality. In Sec.~\ref{comparison} we present a detailed theoretical and numerical comparison of the four witnesses and finally in Sec.~\ref{experiment} we describe the experimental implementation and results of the four indistinguishability witnesses as well as the full fidelity witness utilising the Fourier method.

\section{Certification in linear optics \label{theory}}

This section establishes the background theoretical concepts and notation underlying our certification results.
In Sec.~II~A we review the description of partial photon indistinguishability in linear-optical networks and introduce the generalized indistinguishabilities \cite{Shchesnovich:2015}, defined as expectation values of permutation operators, which completely characterize LOQC output statistics. In Sec.~II~B we formalize the notion of LOQC equivalence-class of target states, $\mathcal{C}_{LO}$, and introduce operational fidelity and distance measures appropriate for this setting.
Finally, in Sec.~II~C we derive a general lower bound on the LOQC fidelity for various levels of assumptions (particularly the partition representation) and practicality. These bounds are formulated in terms of two experimentally accessible quantities: the probability observing the correct external-mode photon occupation, and a bound on the weight of the fully indistinguishable sector of the input state. 

\subsection{Partial indistinguishability theory \label{background}}
Consider an arbitrary Fock basis state vector with a total of $n$ photons over $m$-modes \eqn{\ket{\mathbf{n}}:=\ket{n_1}\ket{n_2}\cdots\ket{n_m} = \frac{1}{\sqrt{\mu(\mathbf{n})}} \prod_{i=1}^m \bk{\adag_{i}}^{n_i}\ket{0} \label{indist} } 
where $\mathbf{n} = [n_1,n_2, \dots, n_m]$ is called a \emph{mode occupation} list, $\sum_{i=1}^m n_i = n$ and  $\mu(\mathbf{n}) = \prod_{i=1}^m n_i!$. Of particular interest to us will be states where the first $n$ modes are occupied by a single photon and the remaining ($m$-$n$) modes are in a vacuum state, which we denote as $\ket{\mathbf{1}_{n,m}}$ (we further define the special case $\ket{\mathbf{1}_{n}}:=\ket{\mathbf{1}_{n,n}}$). Such states describe the input to a \textsc{BosonSampling} problem and typically serve as an ideal initial state for universal LOQC and other applications.  Alternatively, we could specify this state by a unique corresponding vector $\mathbf{d_n}$ of length $n$ (called a \emph{mode assignment} list) where $\mathbf{d_n}(i)$ specifies in which mode the $i^\mathrm{{th}}$ photon lies with which 
the state vector could equivalently be written
\eqn{\ket{\mathbf{n}} = \frac{1}{\sqrt{\mu(\mathbf{n})}} \prod_{i=1}^n \adag_{\mathbf{d_n}(i)} \ket{0}.}
The remainder of an LOQC computation consists of propagation through a linear optical interferometer which can be represented by an $m\times m$ unitary matrix, $U$, which transforms the creation operators as 
\eqn{\adag_i \mapsto  \sum_{k=1}^m U_{i,k}\hat{b}^\dag_k \label{Ua}.}
After propagation, the output state is measured by photon number resolving detectors at each output port, which again project onto Fock basis states
\eqn{\ket{\mathbf{s}} = \frac{1}{\sqrt{\mu(\mathbf{s})}} \prod_{i=1}^m \bk{\hat{b}^\dag_{i}}^{s_i}\ket{0}.}
The probability of observing at output $\mathbf{s}$ given an input $\mathbf{n}$ propagated through $U$ is then
\eqn{p(\mathbf{s}|\mathbf{n})_U = \left |\bra{\mathbf{s}}\hat{V}_U\ket{\mathbf{n}}\right|^2 =\frac{|\mathrm{perm}(M)|^2}{\mu(\mathbf{s})\mu(\mathbf{n})},\label{pn}} 
where $M$ is constructed from the elements 
of $U$ via
\eqn{M_{i,j} = U_{\mathbf{d_n}(i),\mathbf{d_s}(j)} \label{M}}
and $\hat{V}_U$ is the representation of $U$ acting in the appropriate Hilbert space for $\ket{\mathbf{n}}$. Recalling that $U_{i,j}$ is the amplitude of a photon transitioning from input mode $i$ to output mode $j$ we can interpret each $M_{i,j}$ as the amplitude of a collection of `paths' by which the photons in an initial mode assignment, $\mathbf{d_n}$, collectively transition so as to be found in a final mode assignment $\mathbf{d_s}$. However, for bosons that are entirely indistinguishable one cannot just consider a single $M_{i,j}$ (as this would be effectively assigning physical meaning to the mathematical label ascribed to individual photons). 
Instead, we must coherently add all of the amplitudes corresponding to all permutations of the $n$ photons that give rise to the same $\mathbf{d_s}$. This is achieved via the matrix permanent $\mathrm{perm}(M) = \sum_{\sigma \in \mathcal{S}_n} \prod_{i=1}^mM_{i,\sigma(i)}$.

In general, it is not possible to create photons that are perfectly mode-matched in all degrees of freedom and hence truly indistinguishable. A more realistic description must include so-called \emph{internal} degrees of freedom (modes) which cannot be directly manipulated or resolved by the experimentalist. A typical case in integrated photonic experiments is that the external degrees of freedom are the spatial modes with all other degrees of freedom 
(e.g.,  polarisation, orbital angular momenutum, frequency) being internal. Mathematically, this can be represented by writing creation operators $\hat{a}^\dag_{i,\kappa}$ where the Greek subscript $\kappa$ represents an internal mode. Without loss of generality, all internal degrees of freedom can be `pooled' and described by a single internal Hilbert space consisting of at most $n$ elements (to allow the distinguishability behavior to range from all photons being indistinguishable to completely distinguishable and all intermediate situations). An arbitrary state (mixed or entangled) of exactly $n$-photons with a mode occupation list {\bf n} can thus be written as
\eqn{\rho_\mathbf{n} \ee\nn \frac{1}{\mu(\mathbf{n})} \sum_{\boldsymbol \kappa \boldsymbol \kappa'} G(\boldsymbol \kappa,\boldsymbol \kappa') \\
&\times& \prod_{j=1}^n \adag_{\mathbf{d_n}(j),\kappa_j} \ket{0}\bra{0} \prod_{j=1}^n \hat{a}_{\mathbf{d_n}(j),\kappa'_j} \label{rho_n}}
where the sums run over all lists $\boldsymbol \kappa = [\kappa_1,\kappa_2, \dots, \kappa_n]$ with $1\leq \kappa_i\leq n$ of internal basis states and $G(\boldsymbol \kappa,\boldsymbol \kappa')$ are the corresponding amplitudes. 

The fact that the experimentally controllable unitaries act only upon the external degrees of freedom means the transformation under $U$ now reads
\eqn{\hat{a}^\dag_{i,\kappa} \mapsto \sum_{j=1}^m U_{i,j} \hat{b}^\dag_{j,\kappa}.}
Another notable property of LOQC experiments is that measurement are typically carried out with detectors which only resolve external degrees of freedom whilst summing over all internal degrees of freedom. The probability of observing a particular pattern on such detectors is then given by
\eqn{p(\mathbf{s}|\rho_\mathbf{n}) = \mathrm{tr} \left (\rho_\mathbf{n} \Pi(\mathbf{s}) \right)}
where the POVM elements are given by
\eqn{\Pi(\mathbf{s}) = \frac{1}{\mu(\mathbf{s})} \sum_{\boldsymbol \kappa} \prod_{i=1}^n \hat{b}^\dag_{\mathbf{d_s}(i),\boldsymbol \kappa} \ket{0}\bra{0} \prod_{i=1}^n \hat{b}_{\mathbf{d_s}(i),\boldsymbol \kappa}  .}
Evaluating these probabilities is challenging in general. However, using the path-based formalism of Shchesnovich \cite{Shchesnovich:2015} one can derive the expression (see Appendix~\ref{partial_app} for more details)
\eqn{P(\mathbf{s}|\rho_\mathbf{n}) = \frac{1}{\mu(\mathbf{n})\mu(\mathbf{s})} \sum_{\sigma \in S_n} \EV{J_\sigma}_{\rho_\mathbf{n}} \mathrm{perm} \bk{M * M^*_{\sigma,\mathbb{I}},} \label{Shchesnovich1}} 
where $*$ denotes element-wise multiplication, $M^*_{\sigma,\mathbb{I}}$ denotes the matrix defined by (\ref{M}) with the columns permuted according to $\sigma$ and $J_\sigma$ is a the permutation operator over the Hilbert spaces of the internal degrees of freedom of each photon, i.e., 
\eqn{J_\sigma  \prod_{j=1}^n  \adag_{\mathbf{d_n}(j),\kappa_j} =  \prod_{j=1}^n \adag_{\mathbf{d_n}(j),\kappa_{\sigma(j)}}.\label{Jsig}} For input state vectors of the form $\ket{\mathbf{n}}$ where all photons are perfectly indistinguishable we have that $\EV{J_\sigma}_{\rho_\mathbf{n}} = \mathrm{tr}(J_\sigma \rho_\mathbf{n})=1$ $\forall \sigma$. In fact, this condition holds for any permutationally invariant state which is any state invariant under the projector onto the completely symmetric subspace defined as 
\eqn{\Pi_\mathrm{sym} = \frac{1}{n!} \sum_{\sigma \in \mathcal{S}_n} J_\sigma \label{psym}.}
For all such states, the expression in 
Eq.~(\ref{Shchesnovich1}) reduces to (\ref{pn}) as expected.

The expectation values of the $n!$ permutation operators (referred to as \emph{generalized indistinguisabilities}) provide a complete specification of the behavior of arbitrary, partially distinguishable states of definite photon number of the form of Eq.~(\ref{rho_n}). In general, the permutation operator may not be physically 
easily realizable. However, in the special case of states that we denote as \emph{single-occupation states}, $\rho_{\mathbf{1}_n}$, where $m=n$ and all modes are occupied by exactly one photon ($\mathbf{n} = \mathbf{1}_n$ and hence $\mathbf{d_n} = [1,2, \dots, n]$) then the permutation of the external modes
\eqn{P_\sigma \prod_{j=1}^n  \adag_{\mathbf{d_n}(j),\kappa_j} =  \prod_{j=1}^n \adag_{\mathbf{d_n}(\sigma(j)),\kappa_{j}} \label{Psig}} results in the same output as $J_{\sigma^{-1}}$ (see Appendix~ \ref{partial_app}). Importantly, external mode permutations can be realized straightforwardly by simple switching operations. By making this observation \cite{annoni2025} were able to identify a family of LOQC experiments (generalisations of the cyclic interferometer in
Ref.\ \cite{Pont:2022}) that directly measure the $n!$ $\EV{P_\sigma}$ values which would enable a kind of `LOQC' tomography for this class of single photon states.

In our certification protocols we will frequently be concerned with projection onto the single-occupation external-mode subspace with one photon in each of the first $n$ modes and vacuum in the remaining $m-n$ modes, irrespective of the internal degrees of freedom. This projector can be written as
\begin{equation}
P_{1_{n,m}}
:=
\sum_{\boldsymbol{\kappa}\in\mathcal{K}^n}
\left(
\prod_{j=1}^{n}
a_{j,\kappa_j}^\dagger \,
|0\rangle\!\langle 0|\,
a_{j,\kappa_j}
\right)
\;\otimes\;
\bigotimes_{j=n+1}^{m}
|0\rangle\!\langle 0|_j ,
\label{eq:P_1nm_def}
\end{equation}
where $\kappa_j$ labels an orthonormal basis of the internal single-photon Hilbert space and the sum runs over all internal configurations.

\subsection{Certification metrics for LOQC \label{sec:LOQC_fid}}
Perhaps the conceptually simplest quantum certification task is to ascertain if the output state of an experiment or device is `close' to a hypothesized \emph{target state}. A commonly used tool in this context are \emph{witnesses} - measurement protocols that provide a certificate that the experimentally generated state $\rho$ achieves at least some threshold closeness with a target state except with some failure probability. The most common metrics used for this purpose are the quantum fidelity \cite{Leandro,Gluza:2018}
\eqn{F(\rho,\rho_t): = \mathrm{tr} \left[ \bk{\sqrt{\rho_t}\rho\sqrt{\rho_t}}^{1/2} \right]^2}
or the trace distance
\eqn{D(\rho,\rho_t) := \frac{1}{2} 
\mathrm{tr}(|\rho-\rho_t|),}
for which the Fuchs–van de Graaf inequalities hold,
\eqn{1-\sqrt{F(\rho,\rho_t)}\leq D(\rho,\rho_t) \leq \sqrt{1-F(\rho,\rho_t)}.}
Following Ref.\ \cite{Leandro}, we can formalize these notions as follows.

\begin{definition}[Certification protocols]
 Let $\rho_t$ be a target state, $0<T<1$ a threshold, and $\varepsilon>0$ a failure probability. A certification test, $\mathcal{T}$, which measures $k$ copies of an experimentally prepared state $\rho$ and either `accepts' or 'rejects' is called a, $(T,\varepsilon)$-certification test for the fidelity (resp. trace distance) if, with probability at least $1-\varepsilon$, it both rejects every state for which $F(\rho,\rho_t)<T$ (resp. $D(\rho,\rho_t)>T)$ and accepts if $\rho = \rho_t$. 
 \end{definition}
Such tests are referred to as witnesses because an acceptance witnesses a desirable threshold quality (e.g., $F(\rho,\rho_t)>T$ except with probability $\varepsilon$) but a rejection tells us essentially nothing about the quality of our state or the potential error processes that are occurring.

As outlined earlier, the quality of a photonic state for the purposes of LOQC is determined by its symmetry properties, which are determined by the relative arrangement of the internal degrees of freedom. As a consequence, certifying the fidelity to a specific state with precisely specified internal degrees of freedom is not strictly necessary. Like many aspects of quantum optics this issue can be effectively understood in the two-photon case by the canonical \emph{Hong-Ou-Mandel} (HOM) interference experiment \cite{hong_measurement_1987}. When two indistinguishable photons interfere on a balanced beamsplitter, the destructive interference between the paths leading to coincident detection results in only bunched outputs being observed. This is true even if the detectors at each beamsplitter output port cannot distinguish any internal degrees of freedom. If the photons are rendered completely distinguishable by some internal mode then suppression of coincidences vanishes and for partially distinguishable photons a corresponding `dip' in the coincidence probability is observed. A key insight is that output statistics are determined not by the internal degrees of freedom in an absolute sense but solely by the photon distinguishability which is a function of how one photon's internal modes are arranged relative to the other. For example, it is immediately apparent that perfect quantum interference will be observed as long as both photons are in the same internal mode, regardless of what that mode is. Moreover, as illustrated in Fig.~(\ref{fig:C_LO}), perfect interference will also be observed for a mixture of all of the photons being simultaneously in any two orthogonal internal modes (e.g., all `blue' or all `green') or in an even superposition. The reason for this behavior is that all of these states are completely symmetric under permutation - the hallmark of bosonic behavior. The observation of photonic interference due to symmetric, anti-symmetric and even anyonic exchange symmetries have been experimentally demonstrated in two-photon \cite{Exter:2012} and multi-photon \cite{Kumar:2025} experiments.

\begin{figure}[htbp]
    
    \includegraphics[width=1.05\linewidth]{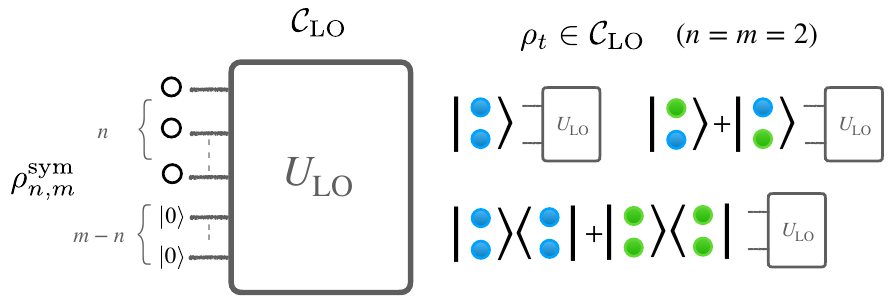}
    \caption{{\bf Certification target class for linear optics.} The class $\mathcal{C}_{\mathrm{LO}}$ contains all of the $m$-mode states consisting of the first $n$ modes containing a completely permutationally invariant state with exactly 1 photon in each mode and the remaining $m-n$ modes in the vacuum state (we denote such a state $\rho^\mathrm{sym}_{n,m}$) that is evolved through a linear optical unitary $U_\mathrm{LO}$. 
    We provide some examples of states in $\mathcal{C}_{\mathrm{LO}}$ for the case ($n=m=2$) to 
    highlight that as well as the canonical case of two identical (blue) photons entering the unitary we may also have a statistical mixture all photons being in orthogonal internal modes (all blue or all green) or even an even superposition of each photon being in one internal mode or the other. As explained in the main text, so long as the input state is symmetric under all permutations then all these states are functionally equivalent in the sense that they realize a `perfect' LOQC experiment.}
    \label{fig:C_LO}
\end{figure}

In fact, we can see directly from Eq.~(\ref{Shchesnovich1}) that this behavior generalizes to arbitrary interference experiments. The internal state of the photons only affects the outcome of LOQC circuits through the generalized indistinguishabilities, so all permutationally invariant states with a mode occupation list $\mathbf{n}$, injected into a linear optical unitary and then measured by photon-number detectors that are insensitive to internal degrees of freedom, will result in the same statistics as the input state vector $\ket{\mathbf{n}}$. This notion can also be formalized.

\begin{definition}[LOQC equivalence]
Any two states $\rho_1$ and $\rho_2$ are said to be LOQC equivalent, $\rho_1\sim\rho_2$, iff they produce identical probability distributions for all LOQC experiments such that
\eqn{p(\mathbf{s}|\rho_1)_U = p(\mathbf{s}|\rho_2)_U \hspace{1mm} \forall \mathbf{s},U  \Leftrightarrow \EV{J_\sigma}_{\rho_1} = \EV{J_\sigma}_{\rho_2} \forall \sigma.}
\end{definition}
The equivalence between these two conditions follows immediately from (\ref{Shchesnovich1}). Based on these considerations we must slightly amend our notions of certification to the LOQC setting. The motivation here is to extend the target certification class, $\mathcal{C}_\mathrm{LO}$, to include all possible states that could be regarded as operationally perfect LOQC state preparations. This can be achieved by defining the class as all states that are LOQC equivalent to the canonical preparation of the state $\ket{\mathbf{1}_n}$, or equivalently as any permutationally invariant state with the correct $\mathbf{d}_n$, propagated with vacuum modes through the appropriate $U$.

\begin{definition}[Class of LOQC target states $\mathcal{C}_\mathrm{LO}$]
 Let $U \in U(m)$ be an $m$-mode linear optical unitary (and $\hat{V}_U$ its metaplectic representation) we define the class of LOQC target states 
 \eqn{\mathcal{C}_\mathrm{LO}\ee \left\{\rho_t :\rho_t \sim \hat{V}_U\ket{\mathbf{1}_{n,m}}\right \}}
as the set of all states that are LOQC equivalent to $n$ indistinguishable photons appended with $(m-n)$ vacuum modes and then evolved through $U$. 

Equivalently, defining $\rho_{\mathbf{1}_n}$ as an $n$-mode state with exactly one photon per mode, $\mathcal{C}_{\mathbf{1}_n}^\mathrm{sym}$ be the set of all such states that are completely symmetric under all permutation and hence satisfying
\eqn{\mathcal{C}_{\mathbf{1}_n}^\mathrm{sym} = \left\{\rho_{\mathbf{1}_n} : \Pi_\mathrm{sym} \rho_{\mathbf{1}_n} \Pi_\mathrm{sym} = \rho_{\mathbf{1}_n} \Leftrightarrow \EV{P_\sigma} = 1 \hs \forall \sigma\right\}\label{c1sym}}
and using the shorthand $\rho_{\mathbf{1}_{n,m}} = \rho_{\mathbf{1}_{n}}\otimes\ketbra{0}{0}^{\otimes(m-n)}$ we could define this target class
\eqn{\mathcal{C}_\mathrm{LO}= \left\{\rho_t = V_U \rho_{\mathbf{1}_{n,m}}V_U^\dag : \rho_{\mathbf{1}_n}\in \mathcal{C}_{\mathbf{1}_n}^\mathrm{sym}\right \}}
as the set of all permutationally invariant $n$-mode single photon states appended with $(m-n)$ vacuum modes and then evolved through $U$. 

We similarly define a slightly more general class where the target states are LOQC equivalent to arbitrary permutation invariant states of fixed photon number inputs. For a state $\rho_\mathbf{m}$ be an $m$-mode state of the form Eq.~(\ref{rho_n}) let $\rho_{\mathbf{n},m} = \rho_{\mathbf{n}}\otimes \ket{0}\bra{0}^{(m-n)}$ and then define
\eqn{\mathcal{C}_{\mathbf{n}}^\mathrm{sym} = \left\{\rho_{\mathbf{n}} : \Pi_\mathrm{sym} \rho_{\mathbf{n}} \Pi_\mathrm{sym} = \rho_{\mathbf{n}} \Leftrightarrow \EV{P_\sigma} = 1 \hs \forall \sigma\right\}\label{cnsym}} and hence
\eqn{\mathcal{C}_\mathrm{LO}^\mathrm{Fock}= \left\{\rho_t = V_U \rho_{\mathbf{m}}V_U^\dag : \rho_{\mathbf{m}}\in \mathcal{C}_{\mathbf{m}}^\mathrm{sym}\right \}}
or equivalently
 \eqn{\mathcal{C}_\mathrm{LO}^\mathrm{Fock}\ee \left\{\rho_t :\rho_t \sim \hat{V}_U\ket{\mathbf{m}}\right \}.}
\end{definition}

Equipped only with photodetectors that cannot resolve internal degrees of freedom we could not hope to resolve any one particular state in $\mathcal{C}_\mathrm{LO}$ and certify its fidelity with an experimentally prepared state. However, such certification is also operationally unnecessary when assessing the quality of an LOQC device. In the context of LOQC, we are only really interested in the correctness of the output photon counting statistics. As such, for any target state $\rho_t\in\mathcal{C}_\mathrm{LO}$ an appropriate metric should assign an ideal score to a device that produces any state $\rho\sim\rho_t$.

These considerations can be seen as analogous to other works that consider distinguishability and certification in other well-motivated, operationally restricted contexts such as LOCC operations \cite{Bennett:1999vm,Walgate:2002,Mathews:2009,lancien:2013} or efficient (poly-depth) circuits \cite{Ji_pseudorandom,AaronsonPseudoentanglement,arnon:2023,Leone:2025}. The 
common thread is that, when attempting to certify a resource for our own use, we may only be capable of utilising and distinguishing resources in some restricted setting, and the question of real interest is not whether we have prepared a specific target state but rather if we have prepared \emph{any} state that is operationally `just as good' in the relevant setting. Consequently, although we may sometimes be interested in ultimate limits on distinguishability (e.g.,  when considering an all-powerful opponent in some cryptographic protocol), it may sometimes be more useful to consider equivalence classes of states that are indistinguishable under certain well-motivated restrictions. This has led to various flavours of restricted distinguishing metrics and motivates the following definitions tailored to the LOQC setting.

\begin{definition}[LOQC certification metrics]
\label{BigDefinition}
For any $\rho_t \in \mathcal{C}_\mathrm{LO}$ the effective LOQC fidelity is defined as
\eqn{F_\mathrm{LO}(\rho,\rho_t) \ee \max_{\tau\sim\rho_t} F(\rho,\tau) \label{fiddef1}}
where
$F(\rho,\rho_t)$ is the standard quantum fidelity. Similarly, the LOQC trace distance is defined as
\eqn{D_\mathrm{LO}(\rho,\rho_t) = \min_{\tau\sim\rho_t} D(\rho,\tau) \label{trdef1}}
where $D(\rho,\rho_t)$ is the trace distance. These quantities can equivalently be written as 
\eqn{F_\mathrm{LO}(\rho,\rho_t) =  \max_{\rho_{\mathbf{1}_n} \in \mathcal{C}_{\mathbf{1}_n}^\mathrm{sym}} F\bk{\rho,\hat{V}_U\rho_{\mathbf{1}_{n,m}}\hat{V}_U^\dag)}, \label{fiddef2}}
\eqn{D_\mathrm{LO}(\rho,\rho_t) =  \min_{\rho_{\mathbf{1}_n} \in \mathcal{C}_{\mathbf{1}_n}^\mathrm{sym}} D\bk{\rho,\hat{V}_U\rho_{\mathbf{1}_{n,m}}\hat{V}_U^\dag)}. \label{tfdef2}}
\end{definition}
Considering the optimization in the above definition we see that these quantities fulfill our goal of quantifying the distance to the closest state that is operationally equivalent to the target state for the purposes of implementing an LOQC circuit.

Lastly, we mention here another notion of the fidelity and trace distance that could also be seen as natural in the context of LOQC. It is well known that both quantum metrics are equal to the corresponding classical quantities defined on probability distributions obtained by measuring the two states to be distinguished with appropriately optimized measurements. We can immediately define an equivalent notion for observers restricted to only making LOQC measurements by defining 

\eqn{\tilde{F}_{\mathrm{LO}}(\rho,\sigma)  \ee \min_{\{ F_\mathbf{s}\}}\left(\sum_\mathbf{s}\sqrt{\mathrm{tr}\left (F_\mathbf{s} \rho\right )\mathrm{tr}\left (F_\mathbf{s} \sigma\right )}\right)^2 ,\\
\tilde{D}_{\mathrm{LO}}(\rho,\sigma)  \ee \max_{\{ F_\mathbf{s}\}} \frac{1}{2} \sum_\mathbf{s} \left | \mathrm{tr}\left (F_\mathbf{s} \rho\right ) - \mathrm{tr}\left (F_\mathbf{s} \sigma\right )\right|, \label{alt_cert}}
where we restrict the optimization to LOQC measurements of the form $F_{\mathbf{s}} = U_\mathrm{LO}\ket{\mathbf{s}}\bra{\mathbf{s}}U_\mathrm{LO}^\dag$ rather than optimizing over arbitrary measurements. Performing the optimization over LOQC measurements to obtain these quantities exactly is non-trivial, however, we note that that they are usefully bounded by the quantities we certify in this work, namely $\tilde{F}_{\mathrm{LO}}(\rho,\sigma) \geq F_{\mathrm{LO}}(\rho,\sigma)$ and $\tilde{D}_{\mathrm{LO}}(\rho,\sigma) \leq D_{\mathrm{LO}}(\rho,\sigma)$ (see App.~\ref{alt_defs}.

\subsection{Certification setup and assumptions \label{witness}}
Our goal now is to obtain measurement protocols that will allow us to meaningfully lower bound the LOQC fidelity, hence obtaining rigorous ($T,\varepsilon$)-certificates for the LOQC regime. We will also explain in detail what equipment and assumptions are required. To that end, we first provide a certification protocol with slightly abstracted measurements and then an exact specification of assumptions and experimental procedures that can achieve LOQC certification with present-day technology.

 \begin{theorem}[LOQC fidelity witness]
 \label{thm_sym}Given constants $\varepsilon,\delta_1,\delta_2 >0$ a target state $\rho_t = \rho_{\mathbf{1}_{n,m}}^U \in \mathcal{C}_\mathrm{LO}$ and $k$ copies of unknown state $\rho$, let $\mathcal{C}$ be a certification protocol that can measure the observables: 
 \eqn{p_1 = \tr(\tilde{\rho} P_{\mathbf{1}_{n,m}})}
 which we denote the \emph{photon reversiblity} with $k_1$ samples to obtain an estimate $\bar{p}_1$ where $\tilde{\rho} = \hat{V}_U^\dag \rho \hat{V}_U$, is the unknown experimental state `reversed' under the target interferometer and $P_{\mathbf{1}_{n,m}}$ is the projection operator onto $n$ single photons and $m-n$ vacuum modes defined in Eq.~(\ref{eq:P_1nm_def}); and 
 \eqn{p_2 = \tr(\tilde{\rho} (\Pi_\mathrm{sym}^\mathbf{1}\otimes\mathbb{I}
 _{m-n})),} with $k_2$ samples to obtain an estimate $\bar{p}_2$, where
 \eqn{\Pi_\mathrm{sym}^\mathbf{1} = \frac{1}{n!} \sum_{\sigma \in \mathcal{S}_n} P_\sigma}
 is the mode permutation operator acting on the first $n$ modes. 
Then $\mathcal{C}$ is an ($T,\varepsilon$)-fidelity witness for the LOQC fidelity with
\eqn{T=  \bar{p}_1 + \bar{p}_2 - 1 - \delta_1 - \delta_2 \label{fidthresh}}
and a minimum number of copies $k=k_1+k_2$ where,
\eqn{k \geq \frac{\mathrm{ln}(2/\varepsilon)}{2} \bk{\frac{1}{\delta_1^2} + \frac{1}{\delta_2^2}},}

 \end{theorem}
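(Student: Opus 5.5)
Our plan is to reduce the theorem to a statement about a single projector in the ``reversed'' frame and then add Hoeffding bounds. Because $\hat V_U$ is unitary and the fidelity is unitarily invariant, Eq.~(\ref{fiddef2}) gives
\[
F_\mathrm{LO}(\rho,\rho_t)=\max_{\rho_{\mathbf{1}_n}\in\mathcal{C}^\mathrm{sym}_{\mathbf{1}_n}} F\bk{\tilde\rho,\rho_{\mathbf{1}_{n,m}}}.
\]
Define the projector $Q := P_{\mathbf{1}_{n,m}}(\Pi^\mathbf{1}_\mathrm{sym}\otimes\mathbb{I}_{m-n})$. Since $\Pi^\mathbf{1}_\mathrm{sym}$ only permutes the first $n$ modes, it preserves the occupation pattern, so it commutes with $P_{\mathbf{1}_{n,m}}$. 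Restricted to the single-occupation subspace, $P_\sigma$ acts as $J_{\sigma^{-1}}$, as noted after Eq.~(\ref{Psig}). Hence the range of $Q$ is exactly the set of states with one photon in each of the first $n$ modes, vacuum elsewhere, and completely symmetric internal state. Every normalized state supported on this range lies in $\{\rho_{\mathbf{1}_{n,m}}:\rho_{\mathbf{1}_n}\in\mathcal{C}^\mathrm{sym}_{\mathbf{1}_n}\}$, by Eq.~(\ref{c1sym}).

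The first main step is the lower bound $F_\mathrm{LO}\ge \tr(Q\tilde\rho)$. We take the candidate $\tau = Q\tilde\rho Q/\tr(Q\tilde\rho)$, which is admissible by the previous paragraph. Then
\[
F(\tilde\rho,\tau)\ge \bigl(\tr\sqrt{\sqrt{\tau}\tilde\rho\sqrt{\tau}}\bigr)^2 .
\]
We use the standard fact that for $\tau$ supported on the range of $Q$, $F(\tilde\rho,\tau)=F(Q\tilde\rho Q,\tau)$ up to the normalization of $Q\tilde\rho Q$; more simply, $F(\tilde\rho,\tau)\ge\tr(\tilde\rho\,\cdot)$ estimates of the gentle-measurement type. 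Concretely,
\[
F(\tilde\rho,\tau)\ge F\bigl(Q\tilde\rho Q,\tau\bigr)\cdot 1 = \tr(Q\tilde\rho),
\]
since $\sqrt\tau\,\tilde\rho\,\sqrt\tau=\sqrt\tau\, Q\tilde\rho Q\,\sqrt\tau$ and $F(c\,\tau,\tau)=c$ for $c=\tr(Q\tilde\rho)$. Alternatively, one could maximize over pure states in the range of $Q$, using $F\ge\langle\psi|\tilde\rho|\psi\rangle$ and averaging; either route yields $\tr(Q\tilde\rho)$.

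The second step relates $\tr(Q\tilde\rho)$ to the measured quantities via a union-bound (quantum) inequality for commuting projectors $A,B$: $AB\ge A+B-\mathbb{I}$, because $(\mathbb{I}-A)(\mathbb{I}-B)\ge0$ when they commute. Applying this with $A=P_{\mathbf{1}_{n,m}}$ and $B=\Pi^\mathbf{1}_\mathrm{sym}\otimes\mathbb{I}$ gives $\tr(Q\tilde\rho)\ge p_1+p_2-1$. The step we expect to be the main obstacle is making the commutation and the identification of $P_\sigma$ with the internal permutation fully rigorous off the single-occupation subspace. Here $\Pi^\mathbf{1}_\mathrm{sym}$ also acts nontrivially on sectors with bunched photons or different photon number, so we must check that it still commutes with $P_{\mathbf{1}_{n,m}}$. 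It does, because mode permutations map the single-occupation sector to itself, and only this sector enters $Q$.

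Finally, we handle the statistics. Each of $p_1$ and $p_2$ is the mean of a bounded ($\{0,1\}$-valued, or $[0,1]$) random variable, so Hoeffding's inequality gives $\Pr[\bar p_i-p_i>\delta_i]\le e^{-2k_i\delta_i^2}$. Choosing $k_i\ge\ln(2/\varepsilon)/(2\delta_i^2)$ and applying a union bound shows that, with probability at least $1-\varepsilon$, $F_\mathrm{LO}\ge \bar p_1+\bar p_2-1-\delta_1-\delta_2=T$. Summing the $k_i$ gives the stated $k$. Completeness is immediate: for $\rho=\rho_t$ we have $p_1=p_2=1$, so the test accepts with the same confidence.
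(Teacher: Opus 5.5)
Your proposal is correct and follows the paper's proof essentially step for step. You use unitary invariance to pass to $\tilde\rho$, then identify the optimal fidelity over the symmetric single-occupation subspace with $\tr(Q\tilde\rho)$; you make the needed direction explicit via the candidate $\tau = Q\tilde\rho Q/\tr(Q\tilde\rho)$, which in fact gives equality. You then apply the commuting-projector inequality $AB \geq A+B-\mathbb{I}$, a cleaner form of the paper's $\tr(\rho P_2P_1)\geq \tr(\rho P_1)+\tr(\rho P_2)-1$ step, and finish with one-sided Hoeffding bounds and a union bound.

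Drop the aside about maximizing over pure states and averaging. That route only reaches the largest eigenvalue of $Q\tilde\rho Q$, which can be much smaller than $\tr(Q\tilde\rho)$, so the mixed candidate $\tau$ is genuinely needed.
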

 \emph{Proof}: We begin by proving a slightly more general result. Starting from the LOQC fidelity definition in Eq.~(\ref{fiddef2}) we can generalize the quantity to target states in $\mathcal{C}_\mathrm{LO}^\mathrm{Fock}$ to obtain,
\eqn{F_\mathrm{LO}^\mathrm{Fock}(\rho,\rho_t) \ee  \max_{\rho_{\mathbf{m}} \in \mathcal{C}_{\mathbf{m}}^\mathrm{sym}} F\bk{\rho,\hat{V}_U\rho_{\mathbf{m}}\hat{V}_U^\dag }\nn \\
\ee \max_{\rho_{\mathbf{m}} \in \mathcal{C}_{\mathbf{m}}^\mathrm{sym}} F\bk{\hat{V}_U^\dag\rho \hat{V}_U,\rho_{\mathbf{m}}}}
where we have utilized the unitary invariance of the fidelity in the second equality. Recalling that maximizing the fidelity over all states in a subspace is the same as projection onto that subspace, we can replace the maximization over the subspace, $\mathcal{C}_{\mathbf{1}_n}^\mathrm{sym}$, by its projection. This is 
given by $P_{\mathbf{n}}\Pi_\mathrm{sym}$ where $P_{\mathbf{n}}$ is the projection onto a photon number occupation specified by mode occupation list $\mathbf{n}$ (regardless of internal degrees of freedom) and $\Pi_\mathrm{sym}$ is the projection on the symmetric subspace of the $n$ photons. Note that since $\Pi_\mathrm{sym}$ acts only upon the internal modes and $P_{\mathbf{n}}$ only upon the external modes, these two projectors commute. Setting $\tilde{\rho} = \hat{V}_U^\dag\rho \hat{V}_U$, we then have 
\eqn{F_\mathrm{LO}^\mathrm{Fock}(\rho,\rho_t) = \tr \bk{\tilde{\rho} P_{\mathbf{m}}\bk{ \Pi_\mathrm{sym}\otimes\mathbb{I}_{(m-n)} }}.}
For any two outcome projector $P_1 + P_1^\perp = \mathbb{I}$ and any positive operator $P_2$ it holds that 
\begin{eqnarray}
\tr(\rho P_2) &=& \tr(\rho P_2(P_1 + P_1^\perp )) \nonumber\\
&=&  \tr(\rho P_2P_1)+\tr(\rho P_2P_1^\perp)\geq \tr(\rho P_2P_1)+\tr(\rho P_1^\perp)\nonumber\\ 
&=&\tr(\rho P_2P_1)+1-\tr(\rho P_1)
\end{eqnarray}
which implies $\tr(\rho P_2P_1) \geq \tr(\rho P_1)+\tr(\rho P_2)-1$ which in turn yields
\eqn{F_\mathrm{LO}^\mathrm{Fock}(\rho,\rho_t) \geq  \tr \bk{\tilde{\rho}P_{\mathbf{n}}} +\tr\bk{ \tilde{\rho} \Pi_\mathrm{sym}} - 1. \label{fidfock}}
This result forms the basis of a fidelity witness for all states $\rho_t \in \mathcal{C}_\mathrm{LO}^\mathrm{Fock}$ which we report here as it may be of future interest. As explained earlier, it is unclear how to measure the projection on the symmetric subspace for such states because of the difficulty of implementing the `photon' permutation operators $J_\sigma$. However, for special case of LOQC states we have that $P_{\mathbf{n}} = P_{\mathbf{1}_{n,m}}$ where $P_{\mathbf{1}_{n,m}}$ is the projection onto exactly one photon in the first $n$ modes, irrespective of internal degrees of freedom, and vacuum in the following $(m-n)$ modes. This yields
\eqn{F_\mathrm{LO}(\rho,\rho_t) \ee  \max_{\rho_{\mathbf{1}_n} \in \mathcal{C}_{\mathbf{1}_n}^\mathrm{sym}} F\bk{\rho,\hat{V}_U\rho_{\mathbf{1}_{n,m}}\hat{V}_U^\dag)} \nn \\
\ee \max_{\rho_{\mathbf{1}_n} \in \mathcal{C}_{\mathbf{1}_n}^\mathrm{sym}} F\bk{\hat{V}_U^\dag\rho\hat{V}_U,\rho_{\mathbf{1}_{n,m}})} \nn\\
\ee \tr \bk{\tilde{\rho} P_{\mathbf{1}_{n,m}}\bk{ \Pi_\mathrm{sym}\otimes\mathbb{I}_{(m-n)} }},}
where have again projected onto the appropriate subspace in the final line. 

Furthermore, given the $P_{\mathbf{1}_{n,m}}$ projection, we could now define the projector onto the symmetric subspace via mode permutation operators,  $\Pi_\mathrm{sym}^\mathbf{1}$. This is because, after the $P_{\mathbf{1}_{n,m}}$ projection, the state $\tilde{\rho}$ will by definition be in the special case of one photon in the first $n$ modes. Thus, we can bound the LOQC fidelity as
\eqn{F_\mathrm{LO}(\rho,\rho_t) \ee  \tr \bk{\tilde{\rho} P_{\mathbf{1}_{n,m}}\bk{ \Pi_\mathrm{sym}^\mathbf{1}\otimes\mathbb{I}_{(m-n)} }} \nn \\
&\geq &  \tr \bk{\tilde{\rho}P_{\mathbf{1}_{n,m}}} +\tr\bk{ \tilde{\rho} \bk{ \Pi_\mathrm{sym}^\mathbf{1}\otimes\mathbb{I}_{(m-n)} }} - 1 \nn \\
\ee p_1 + p_2 - 1.\label{fidbound}}
The final part of the proof simply consists of applying the Hoeffding's inequalities \cite{Hoeffding1963} which relate the observed frequencies, $\bar{p}$, of $k$ independent trials of a random variable to the true 
probability, $p$, via
\eqn{\mathrm{Pr}[|\bar{p}-p|>\delta]\leq \varepsilon:= \exp(-2 \delta^2 k).\label{hoeff}}
This result implies that, given 
\eqn{k_1 = \frac{\mathrm{ln}(2/\varepsilon)}{2\delta_1^2}} 
samples, estimating $\bar{p}_1$ we can say that with probability at least ($1-\varepsilon/2$) that
\eqn{p_1 \geq \bar{p}_1 - \delta_1}
and similarly for $p_2$. Putting this together with Eq.~(\ref{fidbound}) we have that with probability at least ($1-\varepsilon)$ that $F_\mathrm{LO}(\rho,\rho_t)>T$, with $T$ given by 
Eq.~(\ref{fidthresh}). Equivalently we could say that a protocol that accepts only upon achieving a threshold equal to the observed value of $\bar{p}_1 + \bar{p}_2$ will reject every state for which $F_\mathrm{LO}(\rho,\rho_t)<T$ with probability at least $(1-\varepsilon)$ thus completing the proof.
$\square$

A few comments about the reasonableness of measuring these quantities are in order. Firstly, it should be noted that given access to fully mode-resolving photodetectors, we would essentially obtain a fidelity witness immediately by simply measuring the first term in Eq.~(\ref{fidbound}). Given such detectors we could specify a precise mode (i.e., all internal degrees of freedom) for target state and avoid the necessity of the notion of an LOQC fidelity entirely albeit at the cost of potentially throwing away some states that would be functionally identical to our target states. Indeed, this is precisely how the witness of Ref.\ \cite{Leandro} has been obtained. There, the authors have defined their linear optical target class as $\{\hat{V}_U\ket{\mathbf{1}_{n,m}} \}$ where a particular internal degrees of freedom for the single photons was specified and then bounded the standard fidelity as \eqn{F(\rho,\rho_t) = F\bk{\hat{V}_U^\dag\rho \hat{V}_U,\ketbra{\mathbf{1}_{n,m}}{\mathbf{1}_{n,m}}}} 
which could be immediately measured with fully mode-resolving photodetectors. Although such detectors are not available, the key insight of Ref.\ \cite{Leandro} was that, given the additional ingredient of homodyne or heterodyne detectors (which are fully mode resolving due to the fact that only light in the same mode as the local oscillator is amplified during homodyne detection) the necessary expectation values could be inferred from the homo-/heterodyne data. Moreover, they showed that the time reversal unitary can be performed virtually via data post-processing of the single mode measurements homodyne measurements. A similar property holds for the work based upon heterodyne measurements in Refs.\ \cite{Chabaud:2020,Upreti:2024,BosonSamplingVerification}.

However, homodyne detection is often infeasible or unavailable in LOQC setups due to a lack of the requisite detectors or local oscillators at appropriate wavelengths, particularly as pulsed detection is essential in this context. Even if available, it is arguably an undesireable overhead to introduce an entirely new detection scheme purely for certification that is not utilized at all in the actual quantum computing architecture. 

Alternatively, the first term in our lower bound can be directly measured in an LOQC experiment consisting of an interferometer which can be reliably programmed to manipulate the external degrees of freedom and physically implement the inverse interferometer, $U^\dag$, followed by external mode resolving photo-detection. We call this measurement the photon-reversibility. In terms of resources and assumptions, in one sense this is simply the standard scenario for tomography or certification where a trusted measurement apparatus is used to certify an untrusted state preparation procedure. However, in some certification or tomography protocols, it is possible to learn or certify complicated states made from complex unitaries where the trusted devices are only simple, single-mode measurements. Here, we are requiring a close-to-ideal version of an interferometer of at least the same size and complexity as the untrusted one which is part of the state preparation procedure we are attempting to certify. Nevertheless, certification remains well-motivated in this setting. A real-world certification scenario could involve a single, exhaustively characterized and calibrated interferometer (a process which can also be done with entirely classical light \cite{RahimiKeshari:2013}) that is then used to certify the functioning of a large number of untrusted state-preparation modules (photon  source + interferometer). This process can be made robust to residual calibration errors in the detector by considering the degree to which these errors propagate into the certification. By this protocol, we can effectively bootstrap our trust from one device to many. 

Considering the second term in (\ref{fidbound}), it turns out that this can also be measured employing only the detectors and trusted interferometer already assumed to carry out the photon reversibility measurement. However, to correctly infer the multi-photon indistinguishability of the unknown state we must also assume that, as well as manipulating the external degrees of freedom reliably, the trusted interferometer acts as the identity on the internal modes. If this assumption was not made then, in the worst case, it would be possible that the characterization interferometer could act to improve the indistinguishability of the photons in the unknown state leading to an overestimate of the LOQC fidelity.

Finally, in certain circumstances, some of these trust assumptions on the interferometer can be partially relaxed. Firstly, if imperfections in the trusted interferometer occur with some random distribution that is independent of the state preparation process, then one could reasonably conclude that they would, with high probability, only act to lower the certified fidelity. Furthermore, some of the indistinguishability witnesses we analyse are either provably semi-device-independent, or can at least be shown to be so via numerical simulations of realistic noise models. In the context of our work, semi-device-independence is the property that if the characterization interferometer does introduce errors in the external degrees of freedom, these act only to lower the certified fidelity threshold. This strengthens the case for certification as a well-posed problem: using a semi-device independent protocol removes the need for assumptions on the degree of control over the interferometer.

In addition to trust assumptions for the certification interferometer, we may also ask for some additional assumptions on the unknown input state. Whilst measuring $\Pi_\mathrm{sym}^\mathbf{1}$ in an LOQC experiment is challenging in the most general case, we can obtain some useful bounds in certain instances by utilizing the substantial literature on so-called \emph{indistinguishability witnesses} \cite{Brod:2019cf,Pont:2022,vanderMeer:2021,Somhorst:2023}. Essentially, each of these witnesses takes as input a photonic state that is promised to be an $n$-photon single-occupation state, $\rho_{\mathbf{1}_n}$ and outputs a lower bound to the weight of the projection onto the symmetric subspace. These witnesses must also assume the input has only a somewhat simplified form of distinguishability noise acting on the internal degrees of freedom which we now explain. 

The effects of partial distinguishability on multi-photon interference experiments are very subtle \cite{Menssen:2017,Jones:2020,Jones:2023,Shchesnovich:2018triad,Rodari:2024}. Although we have formal expressions for all probabilities, a general understanding of the observed deviations from ideal bosonic behavior, computational complexity or even the different classes of distinguishability noise is still lacking. To deal with this, many works \cite{Brod:2019cf,Giordani:2020,Pont:2022} analyse $n$-photon single-occupation states $\rho_{\mathbf{1}_n}$ with a simplified, discrete, `incoherent' distinguishability models where the state is assumed to be of the form,
 \eqn{\rho_{\mathbf{1}_n} =  c_n\rho_{\mathbf{1}_n}^{\parallel} + \sum_ic_i \rho^\perp_i, \label{Broddecomp}}
 with $\sum_i c_i=1$ and $0\leq c_i\leq 1$ where $\rho_{\mathbf{1}_n}^{\parallel} $ is a state of $n$ indistinguishable single photons and $\rho^\perp_i$ is a state where at least two of the photons are orthogonal with any non-orthogonal photons being perfectly indistinguishable and the sum runs over all of the possible configurations of this form. Given this assumption, various LOQC protocols have been devised that directly measure the coefficient $c_n$. 
 
This concept was pursued further by Annoni et al, via an illuminating result they denote the \emph{partition} representation \cite{annoni2025}.  A partition $\underline{\Lambda}$ of a set of $\Omega = \{1,2, \dots, n\}$ is a list of subsets $\underline{\Lambda} = [\Lambda_1,\Lambda_2,\dots]$ such that 
$\bigcup_i \Lambda_i = \Omega$. For $n=3$ there are 5 possible partitions given by
 \eqn{\underline{\Lambda} =\left [ (1)(2)(3), (1,2)(3), (1,3)(2),(1)(2,3),(1,2,3) \right].\label{part3}} If we consider a set of $n$ photons with a discrete distinguishability structure such that all photons are either identical or orthogonal then we see that this naturally induces a `distinguishability' partition. Considering Eq.~(\ref{part3}) in that we can interpret each partition as denoting a state
 vector $\ket{\psi_{\Lambda_i}}$ with identical photons grouped in brackets (e.g.,  the first term means that the first two photons are identical whilst the third is orthogonal, the last term corresponds to 3 identical photons and so on). We could therefore equivalently re-write the in Eq.~(\ref{Broddecomp}) as mixture of partition states (see Fig.~(\ref{fig:LOQC}a) for an $n=3$ example),
  \eqn{\rho_{\underline{\Lambda}} \ee  \sum_{\Lambda_i\in \underline{\Lambda}}p_{\Lambda_i} \ket{\psi_{\Lambda_i}}\bra{\psi_{\Lambda_i}} \label{part_state} \\
  \ee p_{(1, \dots, n)}\ket{\psi_{(1, \dots, n)}}\bra{\psi_{(1, \dots, n)}}  + \sum_{\stackrel{\Lambda_i\in \underline{\Lambda}}{/(1, \dots, n)}}p_{\Lambda} \ket{\psi_{\Lambda_i}}\bra{\psi_{\Lambda_i}}\nn.}
 In the second line, we have pulled out the term corresponding to full $n$-photon indistinguishability. For states of this form we can see that the coefficient $p_{(1, \dots, n)}$ is the same as $c_n$ in Eq.~(\ref{Broddecomp}) and can be interpreted as a weight describing the proportion of the state that will exhibit perfect $n$-photon interference and hence a bound to the probability of all photons behaving ideally in an LOQC experiment.
  
  The partition representation generalizes this further to define a class of states that can be effectively described (for LOQC purposes) as linear combinations of partition states
  \begin{definition}[Partition representation]
     An $n$-photon single-occupation state, $\rho_{\mathbf{1}_n}$ is said to have a partition representation if it is LOQC equivalent to a linear combination of partition 
     states
    \eqn{\rho_{\mathbf{1}_n} \sim \sum_{\Lambda_i\in \underline{\Lambda}}p_{\Lambda_i} \ket{\psi_{\Lambda_i}}\bra{\psi_{\Lambda_i}},}
     with $\sum_{\Lambda_i\in \underline{\Lambda}}p_{\Lambda_i}=1$. It is said to have a positive partition representation if additionally all $p_{\Lambda_i}$ are positive. Equivalently, $\rho_{\mathbf{1}_n}$ has a partition representation iff there exists a partition state, $\rho_\Lambda$ such that 
     \eqn{\tr(P_\sigma \rho_{\mathbf{1}_n} ) = \tr(P_\sigma \rho_\Lambda) \hspace{1mm} \forall \sigma .\label{part_con}}
 \end{definition} 
 Partitions also arise naturally from permutations, in that any permutation written in so-called cycle notation is a partition, e.g., the terms in Eq.~(\ref{part3}) could also be read as the cycle notation for the trivial permutation (identity), all possible pairwise swaps, and the clockwise cycle of all three elements. Given a permutation $\sigma$, let the induced partition be defined as $\underline{\sigma}$ (note that the mapping from permutations to partitions is many-to-one \cite{part_note}). One can also define the \emph{cycle structure} of a permutation $\sigma$, which we denote $\overline{\sigma}$, which is simply the number of elements affected by each disjoint cycle in the permutation, e.g. the permutation (12)(34) has a structure of two 2-cycles. Cycle structures are typically represented by graphically via Young diagrams. This is an even more `coarse-grained' property,  e.g., the permutations $\sigma_{(1,2)(3)}$ and $\sigma_{(1)(2,3)}$ have different partitions \begin{equation}\underline{\sigma_{(1,2)(3)}} \neq \underline{\sigma_{(1)(2,3)}} 
 \end{equation}
 but the same cycle structure $\overline{\sigma_{(1,2)(3)}} = \overline{\sigma_{(1)(2,3)}}$.
 
The existence of a partition representation can be related to symmetries amongst observables $\EV{P_\sigma}$ of permutations with the same partition. A state, $\rho$, is called \emph{orbit invariant} if all permutations corresponding to the same partition have the same expectation value, i.e., if $\underline{\sigma} = \underline{\tau} \Rightarrow \EV{P_\sigma}_\rho = \EV{P_\tau}_\rho$ and all orbit invariant states are guaranteed to have a partition representation. $\rho$ is said to be \emph{cycle-structure invariant} if $\overline{\sigma} = \overline{\tau} \Rightarrow \EV{P_\sigma}_\rho = \EV{P_\tau}_\rho$ and all such states possess a positive partition representation. Although many photonic states do not admit a partition representation (prominent examples include investigations of the so-called triad phase \cite{Menssen:2017,Shchesnovich:2018triad}), there are many instances where one might justifiably assume the output states possess such a description. Two prominent examples would be a collection of independent, identical photonic sources (such that $n$-photon state produced may be written as $\rho^{\otimes n}$ and simultaneously diagonalized) or incoherent noise-models such as the `orthogonal-bad-bit' model \cite{Sparrow:2017} which is widely used to model many present-day experiments \cite{renema2018efficient,Renema:2021sampling,Marshall:2024}. Lastly, and importantly for our purposes, a positive partition representation can also be enforced by implementing an average of mode permutations to obtain
\eqn{\mathcal{P}(\rho) = \frac{1}{n!} \sum_{\sigma \in \mathcal{S}_n} P_\sigma\rho P_{\sigma^{-1}} . }
This permutation-twirled state, analogous to a Pauli twirled qubit state, can be proven to be cycle-structure invariant and hence possess a positive partition representation for any initial state \cite{annoni2025}.

\begin{figure*}[htbp]
    \begin{center}
\includegraphics[width=.85\textwidth]{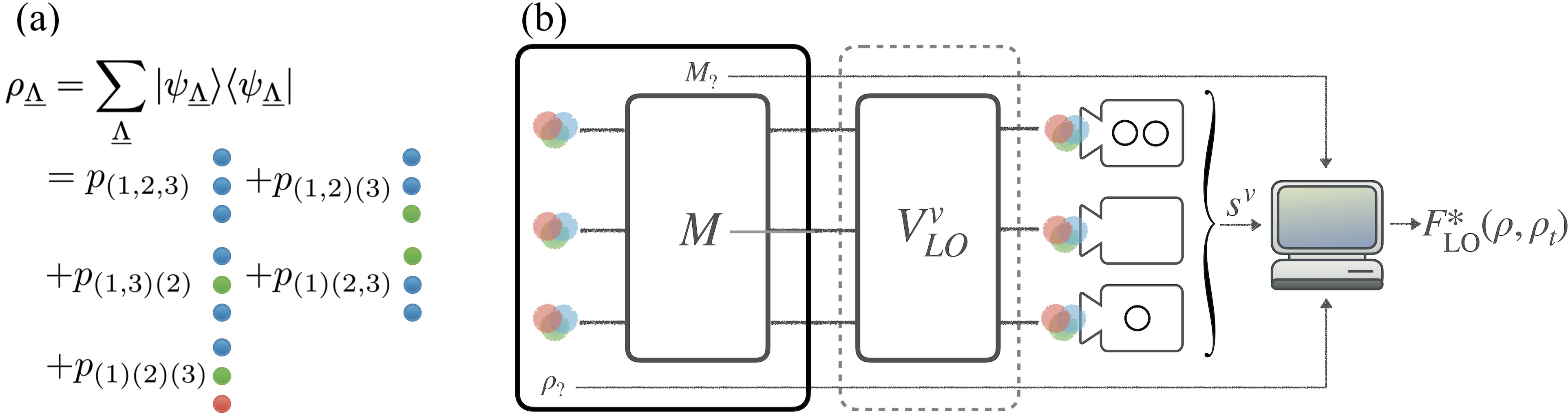}
\end{center}

    \caption{{\bf Certification setup and assumptions. (a)}  States with an incoherent, discrete distinguishability structure can be written as a linear combination of partition states which are those where all photons are partitioned into groups of identical or orthogonal internal modes, illustrated here for the case of $n=3$. Any state, $\rho$, that satisfies 
    a property called orbit invariance (see Section~\ref{sec:LOQC_fid}) is indistinguishable from some state of this form for all LOQC experiments and is said to have a positive partition representation which we denote by $\rho\sim \rho_{\underline{\Lambda}}$. We say that $\rho$ has a negative partition representation is we can find an equivalence $\rho\sim \sum_{\underline{\Lambda}} p_{\underline{\Lambda}} \ket{{\underline{\Lambda}}}\bra{{\underline{\Lambda}}}$ where $\sum_{\underline{\Lambda}} p_{\underline{\Lambda}} = 1$ but not all $p_{\underline{\Lambda}}$ are 
    non-negative. {\bf(b)} The certification setup considered in Thm.~\ref{thm_wit} for witnessing the LOQC fidelity takes an unknown state preparation device (indicated by the solid black box) and evaluates its LOQC fidelity with a target state in $\rho^U_{\mathbf{1}_{n,m}} \in \mathcal{C}_\mathrm{LO}$, thereby simultaneously verifying the quality of the photonic source and interferometric processing. The unknown state is fed into a programmable, trusted interferometer although these trust requirements can be somewhat relaxed for some witnesses indicated by the grey dashed enclosing box. The interferometer will be programmed to implement the inverse of the unitary defined by the target state or one of the $\nu = \mathrm{rev}$ of the inverse followed by one of the indistinguishability witnesses $\nu \in \{\mathrm{Four},\mathrm{cyc},\mathrm{2cor},\mathrm{sHOM} \}$. The output of each measurement setting is detected using detectors that resolve the number of photons in each external mode, but not their internal degrees of freedom as indicated by the colourless circle in each detector generating outputs $\mathbf{s}^\nu$. These outputs are then classically processed to obtain a witness for the LOQC fidelity. Depending upon the witness, obtaining a bound will require additional information in the form of assumptions on the unknown state represented by the input labelled $\rho_?$ and we indicate this assumption dependence by writing $F_\mathrm{LO}^*$ for the certified LOQC fidelity. Depending upon the witness the assumption will typically be whether the the state $P_{\mathbf{1}_{n,m}}\rho P_{\mathbf{1}_{n,m}}$ has a (positive) partition representation. In Thm.~\ref{thm_exp} we present an alternative protocol where the black box of the device is opened and the indistinguishability witness directly photonic source under the stronger assumptions on the source and interferometric map $M$.}
    \label{fig:LOQC}
\end{figure*}

Returning to the topic of genuine indistinguishability measures, it becomes apparent that a unifying (sufficient) condition that allows any of them to be valid is for the input to be a single-occupation state with a (positive) partition representation and hence a well-defined value of $c_n = p_{(1,\dots ,n)}$. Then, an appropriate linear function of certain output probabilities of a well chosen interferometric experiment can be combined to obtain a witness, $\mathcal{W}$, which outputs an estimate $c_n^*$ such that
\eqn{ \tr\bk{\rho_{\mathbf{1}_n} \mathcal{W}} = c_n^*\leq c_n .\label{ind_def}}
Interestingly, the cyclic interferometer of Ref.~\cite{Pont:2022} requires only that the input state is assumed to have a (not necessarily positive) partition representation. Furthermore, we will show that the Fourier method of Ref.~\cite{Somhorst:2023} does not require the partition assumption at all (see Section \ref{Fourier_wit} and Appendix~\ref{app:TwirlingCovariance}). Instead, it gives the value of $c_n^*$ that would be observed if a partition representation had been enforced on the state via permutation twirling. Therefore, provided the input state is assumed to have a distinguishability structure such that, when projected into single occupation subspace, it possessses a positive partition representation, an indistinguishability measurement can be combined with a photon reversibility measurement as before to obtain an experimentally feasible witness as we now show. 

 \begin{theorem}[LOQC fidelity via indistinguishability witnesses]
  \label{thm_wit}Given constants $\varepsilon,\delta_1,\xi >0$ a target state $\rho_t = \rho_{\mathbf{1}_{n,m}}^U \in \mathcal{C}_\mathrm{LO}$ and $k$ copies of unknown state $\rho^\mathrm{PP}$ with a distinguishability structure such that its (normalized) projection into the single-occupation subspace in the first $n$ modes is promised to have a positive partition representation so that we may write
 \eqn{\frac{\tr\bk{P_{\mathbf{1}_{n,m}}\rho^{PP} P_{\mathbf{1}_{n,m}}}_{m-n}}{\tr(\rho^{PP}P_{\mathbf{1}_{n,m}})} \sim  c_n \rho_{\mathbf{1}_n}^{\parallel}  + \sum
 _i c_i \rho_i^\perp .\label{ppp}}
 Let $\mathcal{T}^\nu$ be a measurement setup consisting of the following.
 \begin{enumerate}
\item Detectors that can resolve photon number in external modes but are insensitive to internal degrees of freedom.
 \item A trusted, programmable linear optical interferometer. 
  \end{enumerate}
 that is used to measure,
  \eqn{\bar{p}_1 = \tr(\tilde{\rho} P_{\mathbf{1}_{n,m}})} where $\tilde{\rho} = \hat{V}_U^\dag \rho \hat{V}_U$ is experimental state `reversed' under the target interferometer,  and 
 \eqn{\bar{c}_n^* = \tr(\tilde{\rho} \mathcal{W}^\nu),}
 where $\mathcal{W}^\nu, \nu \in \{\mathrm{Four},\mathrm{cyc},\mathrm{2cor},\mathrm{sHOM} \}$ is the observable corresponding to one of the indistinguishability witnesses based upon the Fourier transform \cite{Somhorst:2023}, cyclic interferometer \cite{Pont:2022}, two-mode correlator \cite{vanderMeer:2021} or super-posed HOM dips \cite{Brod:2019cf}, with $k/2$ samples each that lower bounds $c_n$ coefficient (see Fig.~(\ref{fig:LOQC}b)). Then $\mathcal{T}^\nu$ is an ($T,\varepsilon$)-fidelity witness for the LOQC fidelity with
\eqn{T=  \bar{p}_1 + \bar{c}_n^* - 1 - \delta_1 - \xi \label{threshPP}} and a minimum number of copies
\eqn{k \geq \frac{\mathrm{ln}(2/\varepsilon)}{2} \bk{\frac{1}{(\delta_1)^2} + \frac{\kappa_n^2}{\xi^2}},}
where $\kappa_n$ is a protocol dependent coefficient that quantifies how many copies of $\rho$ are required to obtain an additive estimate of $c_n^*$ from the measured probabilities in a given witness.

Further, for any $\rho$ the bound above obtained from the Fourier method is a valid witness for the LOQC fidelity for the `twirled' state $\mathcal{P}(\rho)$ that would be obtained if the permutation twirling channel were applied to $\rho$, i.e., for any state with a partition representation the cyclic interferometer bound is a witness for $F_\mathrm{LO}(\mathcal{P}(\rho),\rho_t)$.
 \end{theorem}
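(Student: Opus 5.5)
The plan is to reduce Thm.~\ref{thm_wit} to the bound already obtained in the proof of Thm.~\ref{thm_sym}, namely Eq.~(\ref{fidbound}),
\[
F_\mathrm{LO}(\rho,\rho_t) = \tr\bk{\tilde{\rho}P_{\mathbf{1}_{n,m}}(\Pi^\mathbf{1}_\mathrm{sym}\otimes\mathbb{I}_{m-n})} \ge p_1 + p_2 - 1 .
\]
The only new ingredient is to replace $p_2$ by a quantity the indistinguishability witnesses actually deliver. I would not use the crude product bound for the projected term. Instead I would write it exactly as
\[
\tr\bk{\tilde{\rho}P_{\mathbf{1}_{n,m}}(\Pi^\mathbf{1}_\mathrm{sym}\otimes\mathbb{I})} = p_1\,\tr\bk{\hat\rho\,\Pi^\mathbf{1}_\mathrm{sym}},
\]
where $\hat\rho$ is the normalized projected state of Eq.~(\ref{ppp}). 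Because $\Pi^\mathbf{1}_\mathrm{sym}$ is an average of $P_\sigma$, the expectation value $\tr(\hat\rho\,\Pi^\mathbf{1}_\mathrm{sym})$ depends only on the generalized indistinguishabilities. It is therefore invariant under LOQC equivalence, and we may evaluate it on the positive partition mixture $c_n\rho^\parallel + \sum_i c_i\rho_i^\perp$.

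On $\rho^\parallel$ every $\EV{P_\sigma}=1$. Each partition state $\rho_i^\perp$ is a genuine state, so $\tr(\rho_i^\perp\Pi_\mathrm{sym})\ge 0$. With $c_i\ge0$ (positivity is exactly what is needed here), this gives $\tr(\hat\rho\,\Pi^\mathbf{1}_\mathrm{sym})\ge c_n\ge c_n^*$, using the defining property Eq.~(\ref{ind_def}) of each witness $\mathcal{W}^\nu$.

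Hence $F_\mathrm{LO}\ge p_1 c_n^*$. Then $p_1 c_n^* \ge p_1 + c_n^* - 1$, since $(1-p_1)(1-c_n^*)\ge 0$ for $p_1,c_n^*\in[0,1]$. One point needs care: the witness is applied to $\tilde\rho$, which may have components outside the single-occupation sector. I would take $\tr(\tilde\rho\,\mathcal{W}^\nu)$ to be defined so that it lower bounds $p_1 c_n$, for example because each $\mathcal{W}^\nu$ is built from postselected coincidence patterns that vanish off the $n$-photon sector. If this fails, the fallback is to argue that $\tr(\tilde\rho\,\mathcal{W}^\nu)\le p_1 c_n + (1-p_1)$, which still yields the stated threshold.

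Statistics follow as in Thm.~\ref{thm_sym}. $\bar p_1$ is a Bernoulli frequency, so Hoeffding's inequality with $k_1 = \ln(2/\varepsilon)/(2\delta_1^2)$ applies directly. The quantity $c_n^*$ is a fixed linear combination of output probabilities whose coefficients have range bounded by $\kappa_n$. Each shot then gives a bounded random variable of range $\kappa_n$, and Hoeffding gives additive error $\xi$ with $k_2=\kappa_n^2\ln(2/\varepsilon)/(2\xi^2)$ samples. A union bound yields failure probability $\varepsilon$ and the threshold Eq.~(\ref{threshPP}).

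For the final twirling claim about the Fourier witness, I would show that the Fourier observable is covariant under mode permutations: $\tr(P_\sigma\hat\rho P_{\sigma^{-1}}\mathcal{W}^\mathrm{Four}) = \tr(\hat\rho\mathcal{W}^\mathrm{Four})$. This should follow because the relevant zero-transmission suppression laws and output probabilities of the DFT, summed over the outcome sets used, depend on $\EV{P_\sigma}$ only through class functions (cycle-structure averages). Therefore $\tr(\rho\,\mathcal{W}^\mathrm{Four}) = \tr(\mathcal{P}(\rho)\mathcal{W}^\mathrm{Four})$. Since $\mathcal{P}(\rho)$ is cycle-structure invariant, it has a positive partition representation, and the first part applies to it.

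I would also show that the reversibility and fidelity terms are unchanged by twirling. Twirling commutes with $P_{\mathbf{1}_{n,m}}$ and with $\Pi_\mathrm{sym}$, provided the twirl is understood as acting on the reversed state.

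I expect this covariance step to be the main obstacle, since it requires the explicit structure of the Fourier witness (deferred to Appendix~\ref{app:TwirlingCovariance}). Verifying that each individual witness indeed satisfies $\tr(\cdot\,\mathcal{W}^\nu)\le c_n$ on positive partition states is a second, witness-by-witness check.
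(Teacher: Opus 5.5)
Your proposal follows the paper's proof essentially step for step:
\begin{itemize}
\item the rewriting $F_\mathrm{LO}=p_1\tr(\tilde\rho_\mathbf{1}\Pi^\mathbf{1}_\mathrm{sym})$;
\item the positive-partition bound $\tr(\tilde\rho_\mathbf{1}\Pi^\mathbf{1}_\mathrm{sym})\ge\tilde c_n\ge c_n^*$ (using $\tr(\rho_i^\perp\Pi_\mathrm{sym})\ge 0$ instead of the paper's $\ge 1/n!$ makes no difference);
\item the passage to $p_1+c_n^*-1$, then Hoeffding with $\xi=\kappa_n\delta_2$;
\item the Fourier twirling claim via the class-function property of the coefficients $C_\sigma$ in $p_\mathrm{f}$ (Appendix~\ref{app:TwirlingCovariance}).
\end{itemize}

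The step you flag is relating the measured $\tr(\tilde\rho\,\mathcal{W}^\nu)$ to the projected sector. The paper bridges it only by reusing the inequality from Thm.~\ref{thm_sym}. That inequality tacitly needs $\mathcal{W}^\nu\le\mathbb{I}$ and $[\mathcal{W}^\nu,P_{\mathbf{1}_{n,m}}]=0$, and under those conditions it is precisely what yields your fallback $\tr(\tilde\rho\,\mathcal{W}^\nu)\le p_1c_n+(1-p_1)$. Your first option does not work. The Fourier observable $1-\frac{n}{n-1}p_\mathrm{f}$ equals $1$ on components that never trigger forbidden events, so it does not vanish off the $n$-photon sector. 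Postselecting on $n$-photon coincidences also leaves multiply occupied $n$-photon components of $\tilde\rho$ in play. So the fallback is the version to keep, and it rests on the same tacit assumption as the paper's argument.
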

 \emph{Proof:} We begin by returning to our fidelity bound
 \eqn{F_\mathrm{LO}(\rho^\mathrm{PP},\rho_t) &\geq& \tr \bk{\tilde{\rho} P_{\mathbf{1}_{n,m}}\bk{ \Pi_\mathrm{sym}^\mathbf{1}\otimes\mathbb{I}_{(m-n)} }} \nn \\
\ee \tr \bk{P_{\mathbf{1}_{n,m}}\tilde{\rho} P_{\mathbf{1}_{n,m}}\bk{ \Pi_\mathrm{sym}^\mathbf{1}\otimes\mathbb{I}_{(m-n)} }} \label{fidproof}}
 using that $P_{\mathbf{1}_{n,m}}$ squares to the identity and commutes with $\Pi_\mathrm{sym}^\mathbf{1}$. Now, have an additional assumption on the input state. Note that the condition in Eq.~(\ref{ppp}) is not the the claim that the input state itself has a positive partition representation because this requires that the input is guaranteed to have exactly one photon in the first $n$ modes. The assumption we are actually making is that, however many photons there may be in the input state, any noise processes on internal degrees of freedom are of an incoherent nature such that when the state is projected into the appropriate photon number subspace, \emph{the projected state} possess a partition representation. Furthermore, since the trusted interferometer does not disturb the internal degrees of freedom, then if the input state is promised to satisfy Eq.~(\ref{ppp}) then so too will $\tilde{\rho}$. Defining $p_1 = \tr\bk{\tilde{\rho} P_{\mathbf{1}_{n,m}}}$ we can then write this projected state with the vacuum modes traced out as as
 \eqn{\tilde{\rho}_\mathbf{1} = \tr\bk{\frac{P_{\mathbf{1}_{n,m}}\tilde{\rho} P_{\mathbf{1}_{n,m}}}{p_1}}_{m-n} \label{rproj}}
 the input assumption Eq.~(\ref{ppp}) means we can write (using condition for LOQC equivalence given in Eq.~(\ref{part_con}))
 \eqn{\tr\bk{P_\sigma \tilde{\rho}_\mathbf{1}}  = \tr\bk{P_\sigma \bk{ \tilde{\rho}_{\mathbf{1}_n}^{\parallel}  + \sum
 _i \tilde{c}_i \rho_i^\perp }} \hspace{1mm} \forall \sigma .\label{ppp1}}
 Importantly, for partition states, the quantity $c_n$ was shown to be a lower bound to $\tr(\rho \Pi_\mathrm{sum}^\mathbf{1})$ in Ref.\ \cite{annoni2025}, so we could intuitively expect these indistinguishability measures to serve as useful tools for fidelity witnesses in this case. Explicitly, we have that,
\eqn{\tr(\Pi_\mathrm{sym}^\mathbf{1}\tilde{\rho}_\mathbf{1}) \ee \tr\bk{\sum_{\sigma} \frac{P_\sigma }{n!}\tilde{\rho}_\mathbf{1}} \nn \\
 \ee \tr\bk{\sum_{\sigma} \frac{P_\sigma }{n!} \bk{\tilde{c}_n \tilde{\rho}_{\mathbf{1}_n}^{\parallel} + \sum
 _i \tilde{c}_i \rho_i^\perp}} \nn \\
 \ee \frac{n!\tilde{c}_n}{n!} + \tr\bk{\sum_{\sigma} \frac{P_\sigma }{n!} \bk{ \sum
 _i \tilde{c}_i \rho_i^\perp}}\nn \\
 &\geq & \tilde{c}_n +  \frac{\sum_i \tilde{c}_i }{n!}\nn \\
\ee \tilde{c}_n + \frac{1-\tilde{c}_n}{n!} \nn \\
 &\geq & \tilde{c}_n \nn \\
 &\geq& \tr\bk{\tilde{\rho}_\mathbf{1} \mathcal{W}} \label{ind_wit}}
where the third line follows from the fact that $\tilde{\rho}_{\mathbf{1}_n}^{\parallel}$ is completely permutationally invariant, the fourth from the fact that for any partition state $\rho_i^\perp$ it holds that $\tr\bk{\sum_\sigma P_\sigma \rho_i^\perp}\geq 1$ since even in the worst case where all $n$ photons are orthogonal the trivial permutation will still result in a positive contribution and the final line from the defintion of an indistinguishability witness. 

Substituting these expressions into Eq.~(\ref{fidproof}) we obtain,
 \eqn{F_\mathrm{LO}(\rho^\mathrm{PP},\rho_t) &\geq& \tr \bk{P_{\mathbf{1}_{n,m}}\tilde{\rho} P_{\mathbf{1}_{n,m}}\bk{ \Pi_\mathrm{sym}^\mathbf{1}\otimes\mathbb{I}_{(m-n)} }} \nn \\
 &\geq& p_1 \tr \bk{ \tilde{\rho}_\mathbf{1}\Pi_\mathrm{sym}^\mathbf{1} } \nn \\
 &\geq & p_1 \tilde{c}_n .\label{fidPP}}
This result is intuitive, representing the joint coefficient of the part on the state of having the appropriate photon number and being fully permutationally invariant (note that this is still a lower bound to the fidelity as even distinguishable states have a non-zero projection onto the symmetric subspace). Although $\tilde{c}_n$ could be lower bounded as per Eq.~(\ref{ind_wit}), we do not have capability to simultaneously project onto the single-occupation subspace whilst measuring any of witnesses. However we can obtain an experimentally achievable lower bound by writing,
 \eqn{F_\mathrm{LO}(\rho^\mathrm{PP},\rho_t) &\geq& \tr \bk{P_{\mathbf{1}_{n,m}}\tilde{\rho} P_{\mathbf{1}_{n,m}}\bk{ \Pi_\mathrm{sym}^\mathbf{1}\otimes\mathbb{I}_{(m-n)} }} \nn \\
 &\geq& \tr \bk{ \tilde{\rho} P_{\mathbf{1}_{n,m}}\bk{ \mathcal{W}\otimes\mathbb{I}_{(m-n)} }} \nn \\
 &\geq & \tr \bk{ \tilde{\rho} P_{\mathbf{1}_{n,m}}}  + \tr \bk{ \tilde{\rho} \bk{ \mathcal{W}\otimes\mathbb{I}_{(m-n)} }} \nn \\
 \ee p_1 + c_n^* - 1 .\label{fidPP_bnd}}
 
For the finite size analysis, we need to propagate the standard analysis from the actual probabilities measured in an experiment to $c_n^*$ using Hoeffding's inequality. These will be given explicitly for each witness in the following section, but in all cases, we will find that obtaining an estimate of the relevant experimental probabilities to an additive precision $\delta_2$ leads to a bound on $c_n^*$ to an additive precision of $\xi = \kappa_n\delta_2$ such that we must obtain experimental estimates with additive error scaling as $\xi/\kappa_n$. The constant $\kappa_n$, which ranges from constant scaling to exponential in $n$, therefore crucially effects the sampling complexity of each witness. Applying Eq.~(\ref{hoeff}) to obtain estimates with additive precision $\delta_1$ and $\xi/\kappa_n$ each with probability at least (1-$\varepsilon/2$) in combination with Eq.~(\ref{fidPP}) yields the final result in Eq.~(\ref{threshPP}). 

Lastly, we show in Appendix~\ref{app:TwirlingCovariance}, the Fourier transform witness is invariant under permutation twirling. Consequently, this means that the statistics and hence the bound on $c_n$ measured directly upon the input state are identical to those that would be observed for the twirled state and hence this method provides a bound on $F_\mathrm{LO}(\mathcal{P}(\rho),\rho_t)$ for any $\rho$. Similarly, it has been shown that the cyclic interferometer implements a direct measurement of a maximal cycle permutation \cite{annoni2025}, i.e.,  $c_n$, regardless of its sign (or the sign of any other partition coefficients) and as such this witness only requires the assumption of a (not necessarily positive) partition representation on the input state. 
$\square$

The protocol(s) given above allow for rigorous certification of LOQC fidelity, requiring only the assumption of a partition-like, incoherent behavior for the internal degrees of freedom of the unknown state, simultaneously certifying the correct functioning of the photonic source and the interferometer. To achieve this, the resources required were trusted, external-mode photon-number resolving detection and a programmable interferometer. More specifically, the protocol above requires a programmable unitary with a depth at least as great as the unitary used to create the target state (since this must be undone) plus the depth required to implement whichever indistinguishability witness is to be implemented on the backwards-evolved state. Now, we present an alternative protocol which reduces these optical depth requirements and provides a tighter bound at the price of some additional assumptions on the experimental state to be certified.

The essential idea is that to make use of the fact that the experiment itself is made up of two components: a photonic source that attempts to make an appropriate input state of $n$ indistinguishable single photons resulting in a state $\rho_s$; and an interferometer that attempts to implement $U$, resulting in the output state $\rho = M(\rho_s)$ (Fig.~(\ref{fig:LOQC}b)). Typically, these components that can be separately accessed and, given some additional assumptions, these measurements can be combined into an overall witness.

\begin{theorem}[LOQC witness via source measurement]
 \label{thm_exp}Given all of the same definitions as Th.~\ref{thm_wit} but with the additional assumptions that the experimental state is of the form $\rho = M(\rho_s)$ where
\begin{enumerate}
 \item the photonic source state, $\rho_s$ possesses a partition representation
 \item the interferometer map, $M$, is a passive operation that also does not alter the internal degrees of freedom, in other words, it can be written as a mixture of interferometer channels such that,
 \eqn{M(\rho) = \sum_i m_i \hat{V}_{M_i} \rho \hat{V}_{M_i} ^\dag, \hspace{1mm} M_i\in U(m), \sum_im_i=1 \label{Uassump}}
\end{enumerate} 
a test $\mathcal{T}^\nu$ that replaces the second measurement with the quantity
\eqn{c_n^* = \tr\bk{\rho_s W^\nu}}
is ($T,\varepsilon$)-witness for the LOQC fidelity with
\eqn{T = (\bar{p}_1 - \delta_1)(\bar{c}_n^*-\xi) \label{Tsource}}
\end{theorem}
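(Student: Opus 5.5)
Our plan is to reuse the structure of the proof of Thm.~\ref{thm_wit}, but to replace the additive bound $p_1 + c_n^* - 1$ by the product bound $p_1\tilde c_n$ of Eq.~(\ref{fidPP}). The extra assumptions exist precisely to let us identify the indistinguishability coefficient of the projected, reversed state with that of the source. We start from Eq.~(\ref{fidproof}). By Eq.~(\ref{rproj}),
\begin{equation}
F_\mathrm{LO}(\rho,\rho_t)\geq p_1\,\tr\bk{\tilde\rho_\mathbf{1}\Pi^\mathbf{1}_\mathrm{sym}} ,
\end{equation}
with $p_1=\tr(\tilde\rho P_{\mathbf{1}_{n,m}})$ and $\tilde\rho=\hat V_U^\dag M(\rho_s)\hat V_U$.

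The key step is to control $\tr(\tilde\rho_\mathbf{1}\Pi^\mathbf{1}_\mathrm{sym})$ using assumption 2. Writing $M$ as in Eq.~(\ref{Uassump}) gives
\begin{equation}
\tilde\rho=\sum_i m_i \hat V_{U^\dag M_i}\rho_s\hat V_{U^\dag M_i}^\dag .
\end{equation}
Each term is a passive linear optical unitary acting only on external modes. It therefore commutes with every internal permutation $J_\sigma$, and hence with $\Pi_\mathrm{sym}$. Decompose $\rho_s$ via its partition representation: $\rho_s\sim c_n\rho^\parallel+\sum_i c_i\rho_i^\perp$, with coefficients read off from $\EV{J_\sigma}$ on the single-occupation source state. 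We then use the following facts, which follow from Eq.~(\ref{Shchesnovich1}) together with the observation that LOQC-equivalent inputs give identical statistics after any $U$.
\begin{itemize}
\item The component $c_n\rho^\parallel$ is fully symmetric, so after $\hat V_{U^\dag M_i}$ and projection $P_{\mathbf{1}_{n,m}}$ it remains in the symmetric subspace.
\item On the single-occupation subspace, $\Pi_\mathrm{sym}$ coincides with $\Pi^\mathbf{1}_\mathrm{sym}$.
\item Every $\rho_i^\perp$ contributes nonnegatively, since $\Pi_\mathrm{sym}\ge0$. If only a (not necessarily positive) partition representation is assumed, we instead bound the contribution via $\tr(\Pi_\mathrm{sym}\rho_i^\perp)\ge 1/n!$, as in Eq.~(\ref{ind_wit}).
\end{itemize}
From these, $\tr(\tilde\rho P_{\mathbf{1}_{n,m}}\Pi_\mathrm{sym})\gtrsim c_n\,\tr(\tilde\rho^\parallel P_{\mathbf{1}_{n,m}})$.

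To reach the clean product $p_1c_n$, we would argue that the reversibility probability is the same for the symmetric and the distinguishable components. That fails in general. We therefore expect to use instead that $\tr(\tilde\rho_\mathbf{1}\Pi^\mathbf{1}_\mathrm{sym})\ge c_n$ holds when the symmetric weight is inherited, i.e.\ $\tilde c_n\ge c_n$, because passive internal-preserving maps cannot decrease the symmetric-subspace weight conditioned on the correct occupation.

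\textbf{This identification of $\tilde c_n$ (after projection) with the source $c_n$ is the main obstacle.} We would first try to prove it by noting that $P_{\mathbf{1}_{n,m}}\hat V\,\rho\,\hat V^\dag P_{\mathbf{1}_{n,m}}$ maps each partition state to a combination of the same partition type with weights given by permanents. If this only gives the product bound $F_\mathrm{LO}\ge \tr(\tilde\rho P_{\mathbf{1}_{n,m}}\Pi_\mathrm{sym})$ with the symmetric component weighted at least $c_n$ relative to $p_1$, we would accept that form as the intended statement.

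Given $F_\mathrm{LO}\ge p_1c_n\ge p_1c_n^*$, with $c_n^*=\tr(\rho_sW^\nu)\le c_n$ by the definition in Eq.~(\ref{ind_def}) applied directly to the source, the finite-size part is routine. Hoeffding's inequality Eq.~(\ref{hoeff}) gives $p_1\ge\bar p_1-\delta_1$ and $c_n^*\ge\bar c_n^*-\xi$, each with probability $1-\varepsilon/2$, using the same sample counts and $\kappa_n$ as in Thm.~\ref{thm_wit}. A union bound and monotonicity of the product (truncating at zero if needed) yield Eq.~(\ref{Tsource}). Note that this product is tighter than the additive threshold whenever both factors are below one.
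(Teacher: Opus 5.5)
\textbf{Where your proposal breaks.} Your skeleton is the paper's. You start from $F_{\mathrm{LO}}\ge p_1\,\mathrm{tr}(\tilde\rho_{\mathbf 1}\Pi^{\mathbf 1}_{\mathrm{sym}})\ge p_1\tilde c_n$, expand $\tilde\rho$ via Eq.~(\ref{Uassump}), identify $\tilde c_n$ with the source coefficient, and finish with Hoeffding. The step you flag is a genuine gap. Worse, the principle you propose to close it is false: it is not true that a passive, internal-preserving map cannot lower the symmetric weight conditioned on the correct occupation.

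For single-occupation input, each term acts after projection as $\rho_s\mapsto K_i\rho_sK_i^\dagger$. Here $K_i=\sum_\sigma A^{(i)}_\sigma P_\sigma$ with $A^{(i)}_\sigma=\prod_j (W_i)_{j,\sigma(j)}$, and $W_i$ is the leading $n\times n$ block of the transfer matrix of $\hat V_U^\dagger\hat V_{M_i}$. Since $\Pi^{\mathbf 1}_{\mathrm{sym}}P_\sigma=\Pi^{\mathbf 1}_{\mathrm{sym}}$, this gives exactly $F_{\mathrm{LO}}=\bar\pi\,\mathrm{tr}(\rho_s\Pi^{\mathbf 1}_{\mathrm{sym}})$ with $\bar\pi=\sum_i m_i|\mathrm{perm}(W_i)|^2$. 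Your intermediate bound $F_{\mathrm{LO}}\ge c_n\,\mathrm{tr}(\tilde\rho^\parallel P_{\mathbf{1}_{n,m}})=\bar\pi c_n$ is therefore correct. However, $\bar\pi$ is not the measured $p_1$, which also collects the non-symmetric part of $\rho_s$ with different weights, exactly as you suspected.

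A counterexample: take $n=m=2$, $M=\mathrm{id}$, and $U=\bigl(\begin{smallmatrix}\cos\theta&\sin\theta\\ \sin\theta&-\cos\theta\end{smallmatrix}\bigr)$. Let $\rho_s=c\,|\phi\phi\rangle\langle\phi\phi|+(1-c)\,|\phi\phi^\perp\rangle\langle\phi\phi^\perp|$ and $s=(1+c)/2$. Then $\mathrm{perm}(W)=-\cos 2\theta$ and $\det W=-1$, so $p_1=\cos^2(2\theta)\,s+(1-s)$ while $F_{\mathrm{LO}}=\cos^2(2\theta)\,s$.
\begin{itemize}
\item At $\theta=\pi/4$ (balanced splitter) you get $p_1=(1-c)/2$, an antisymmetric $\tilde\rho_{\mathbf 1}$, and $F_{\mathrm{LO}}=0$. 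Yet the HOM (two-mode Fourier) witness on the source is tight, $c_n^*=c$, so the threshold tends to $c(1-c)/2>0$.
\item With $\cos^2 2\theta=1/6$ and $c=0.8$ you get $p_1=0.25$ and $F_{\mathrm{LO}}=0.15<p_1c=0.2$, even though $\tilde\rho_{\mathbf 1}$ has a positive partition representation ($\tilde c_2=0.2$).
\end{itemize}
So neither your inheritance claim nor accepting ``whatever form comes out'' yields Eq.~(\ref{Tsource}). Separately, with a non-positive representation, the $1/n!$ estimate of Eq.~(\ref{ind_wit}) multiplies negative coefficients and points the wrong way.

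\textbf{The paper's proof has the same gap.} It asserts that after $P_{\mathbf{1}_{n,m}}$ the reversed state is the incoherent mixture $\sum_i m_i\sum_\sigma\mu_\sigma P_\sigma\rho_sP_\sigma$. From this, $\tilde c_n=c_n^s$ is immediate because the symmetric component is permutation invariant. That formula drops the cross terms $A_\sigma\overline{A_{\sigma'}}\,P_\sigma\rho_sP_{\sigma'}^\dagger$, which are precisely the multi-photon interference. The examples above satisfy every hypothesis of the statement, so the product threshold needs an extra assumption. It does hold, for instance, when each $W_i$ is supported on a single permutation, such as $M=U$ up to relabelling; there the mixture formula is literally true.

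\textbf{What the stated hypotheses do support.} They give the additive version. $K_i$ commutes with $\Pi^{\mathbf 1}_{\mathrm{sym}}$ and $\|K_i\|\le 1$. Hence, with $s=\mathrm{tr}(\rho_s\Pi^{\mathbf 1}_{\mathrm{sym}})\ge c_n\ge c_n^*$ (positive representation), you get $p_1\le\bar\pi s+(1-s)$. This yields $F_{\mathrm{LO}}\ge p_1+c_n^*-1$ with the witness measured on the source, and that bound can be combined with your Hoeffding step.
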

\emph{Proof}: The key point is that now considering our fidelity bound given by combining Eq.~(\ref{fidPP}) and Eq.~(\ref{ind_wit}) we have that
\eqn{F_\mathrm{LO}(\rho,\rho_t) &\geq& p_1 \tilde{c}_n}
 where is $ \tilde{\rho}_\mathbf{1}$ defined by Eq.~(\ref{rproj}) and was assumed to have a positive partition representation such that $\tilde{c}_n$ is well-defined.

The time reversed state appearing in this expression can be written as \eqn{\tilde{\rho}\ee \hat{V}_{U}^\dag M\bk{\rho_s\otimes \ket{0}\bra{0}^{\otimes (m-n)}}\hat{V}_U\nn \\
 \ee \sum_i m_i \hat{V}_{U}^\dag V_{M_i} \bk{\rho_s\otimes \ket{0}\bra{0}^{\otimes (m-n)}}V_{M_i}^\dag\hat{V}_U . \label{ev}}
Now, by Assumption 2 (specifically Eq.~(\ref{Uassump})), the operator $\hat{V}_{U}^\dag V_{M_i} $ is simply a linear optical unitary that acts only upon external degrees of freedom. Furthermore, if $\rho_s$ is assumed to have a partition representation then this means it possesses exactly one photon per mode and each term in the evolution in specified in Eq.~(\ref{ev}) simply distributes these photons amongst the $m$ external modes without changing their internals states. Consequently, when the evolved state is subjected to projection $P_{\mathbf{1}_{n,m}}$, assuming that $p_1 \neq 0$ and the projection is non-trivial, the only terms that survive are external mode permutations of $\rho_s$ over the first $n$ modes and vacuum in the remaining $m-n$. Thus, we have that
\eqn{\tilde{\rho}_\mathbf{1} \ee  \tr\bk{\frac{P_{\mathbf{1}_{n,m}}\tilde{\rho} P_{\mathbf{1}_{n,m}}}{p_1}}_{m-n}\nn \\
\ee \frac{1}{p_1}\tr\left ( P_{\mathbf{1}_{n,m}} \sum_i m_i \hat{V}_{U}^\dag V_{M_i} \bk{\rho_s\otimes \ket{0}\bra{0}^{\otimes (m-n)}}\right . \nn \\
& & \left . V_{M_i}^\dag\hat{V}_U P_{\mathbf{1}_{n,m}} \right )_{m-n}  \nn \\
\ee\frac{1}{p_1} \tr \bk{\sum_im_i \sum_{\sigma_i} \mu_{\sigma_i}  \bk{P_{\sigma_i} \rho_s P_{\sigma_i} \otimes \ket{0}\bra{0}^{\otimes (m-n)}} }_{m-n} \nn \\
\ee \sum_im_i \sum_{\sigma_i} \mu_{\sigma_i}  P_{\sigma_i} \rho_s P_{\sigma_i}  }
for some coefficients $\sum_{\sigma_i} \mu_{\sigma_i}=1$. Using again the fact that $\rho_s$ is assumed to have a positive partition representation then it necessarily holds that for any permutation $P_{\sigma_i}$,
\eqn{ P_{\sigma_i} \rho_s P_{\sigma_i}  \ee  P_{\sigma_i} \bk{c_n^s \tau_n + \sum_j c_j^s \tau_j^\perp} P_{\sigma_i} \nn \\
\ee c_n^s \tau_n +  \sum_j c_j^s P_{\sigma_i} \tau_j^\perp P_{\sigma_i} }
since by definition the state $\tau_n$ is permutationally invariant and hence
\eqn{\tilde{\rho}_\mathbf{1}  = c_n^s \tau_n + \sum_i m_i \sum_{\sigma_i}\sum_j c_j^s P_{\sigma_i}\tau_j^\perp P_{\sigma_i}}
and we can see that not only does $\tilde{\rho}_\mathbf{1}$ have a positive partition representation, it has precisely the same weight for its indistinguishability coefficient such that $\tilde{c}_n = c_s$. Therefore we can substitute a measurement of an indistinguishability witness on photonic source state and write
\eqn{F_\mathrm{LO}(\rho,\rho_t) &\geq& p_1 c_s \geq p_1 \tr{\rho_s\mathcal{W}^\mu}.}
A standard analysis using Hoeffding's inequality yields the finite-size expression in Eq.~(\ref{Tsource}) completing the proof. $\square$

To summarise, we have presented a series of LOQC fidelity witnesses of increasing practicality, which can be applied to the output of any linear optics experiment either directly or, as is frequently the case, when post-selected on $n$-fold coincidences. The final theorem provides a somewhat tighter bound using shallower trusted interferometers in situations where the untrusted interferometer can be regarded as preserving both the photon number (passivity) and the indistinguishability of any input photons, or at least only introducing distinguishability errors that are negligible relative to the photon source which is a commonly made approximation. Similarly, assuming that the photon source state has a partition representation assumes that, as well as an incoherent distinguishability structure, the experimental state has negligible higher order photon number terms. Again this is generally a reasonable approximation for quantum dots and in SPDC sources this error can be arbitrarily controlled albeit at the cost of lower count rates. Several different indistinguishability witnesses have been proposed in the literature, each of which could be combined with a fixed photon reversibility measurement to obtain the LOQC fidelity witnesses given here. In the following section we compare four prominent examples.

\section{Comparing LOQC fidelity witnesses} \label{comparison}

In this section we compare four different distinguishability witnesses -- published in Refs.\ \cite{Brod:2019cf,vanderMeer:2021,Pont:2022,Somhorst:2023} -- that get promoted to fidelity witnesses via Eq.~(\ref{fidPP_bnd}) and Eq.~(\ref{fidPP}) respectively, depending on the technical setup at hand. First, we describe each witness and revise how it estimates or lower-bounds the indistinguishable weight $c_n$. Then, we analyse the witnesses based on a set of criteria that capture both their theoretical assumptions and their practical performance in linear optical experiments.

The first criterion is \emph{tightness}, describing how close the witness can, in principle, approach the true LOQC fidelity. A tighter witness yields a stronger and more informative lower bound, reducing the gap between the certified value and the underlying fidelity to the target linear-optical computation.

The second criterion is \emph{semi–device independence} \cite{PhysRevLett.125.150503,Blind}, which, in our setting, quantifies the robustness of a witness to deviations in the implementation of the interferometer. A reliable witness should not overestimate the true LOQC fidelity even when the optical network departs from the ideal one. This form of stability is highly desirable for certification methods intended for realistic photonic platforms.

Each witness also relies on structural \emph{assumptions about the input states}. To make these explicit, we distinguish three levels of assumptions: (i) no assumptions, (ii) witness requires a state with a partition representation, and (iii) witness requires a state with a \emph{positive} partition representation.

Finally, we assess the \emph{sample complexity} of each method, i.e., how many measurements are required to estimate the witness with a prescribed error confidence $1-\epsilon$. We can think of $c_n(\epsilon)$ as a lower bound on $c_n$ with confidence $1-\epsilon$ and write
\begin{equation}
    c_n(\epsilon) \geq c_n^* - \xi(N,\epsilon)
\end{equation}
where $c_n^*$ is the measured value and $\xi(N,\epsilon)$ is the generalized statistical error we would like to control by choosing an appropriately large sample size $N$. This generalized statistical error can be calculated using Hoeffding's inequality \cite{Hoeffding1963}.
To quantify the difference between the methods, we split the generalized statistical error into the witness-dependent factor $\kappa_n$ and the statistical error in the observables, $\delta$ 
\begin{equation}
    \xi(N,\epsilon) \equiv \kappa_n \delta(N,\epsilon)
\end{equation}
and compute this witness-dependent factor. For the complete definition and calculations of the witness-dependent factors, see Appendix \ref{app:sample_complexity}. 

These criteria jointly characterize the trade-offs between generality, accuracy, and practical feasibility, providing a clear and operationally relevant basis for comparing fidelity witnesses for photonic linear-optical quantum computing. We will now describe in turn how each witness functions and its performance relative to these criteria with the result summarized in Table~\ref{tab:witness_comparison}.\\


\begin{figure*}[htbp]
    \centering
    \includegraphics[width=0.74\linewidth]{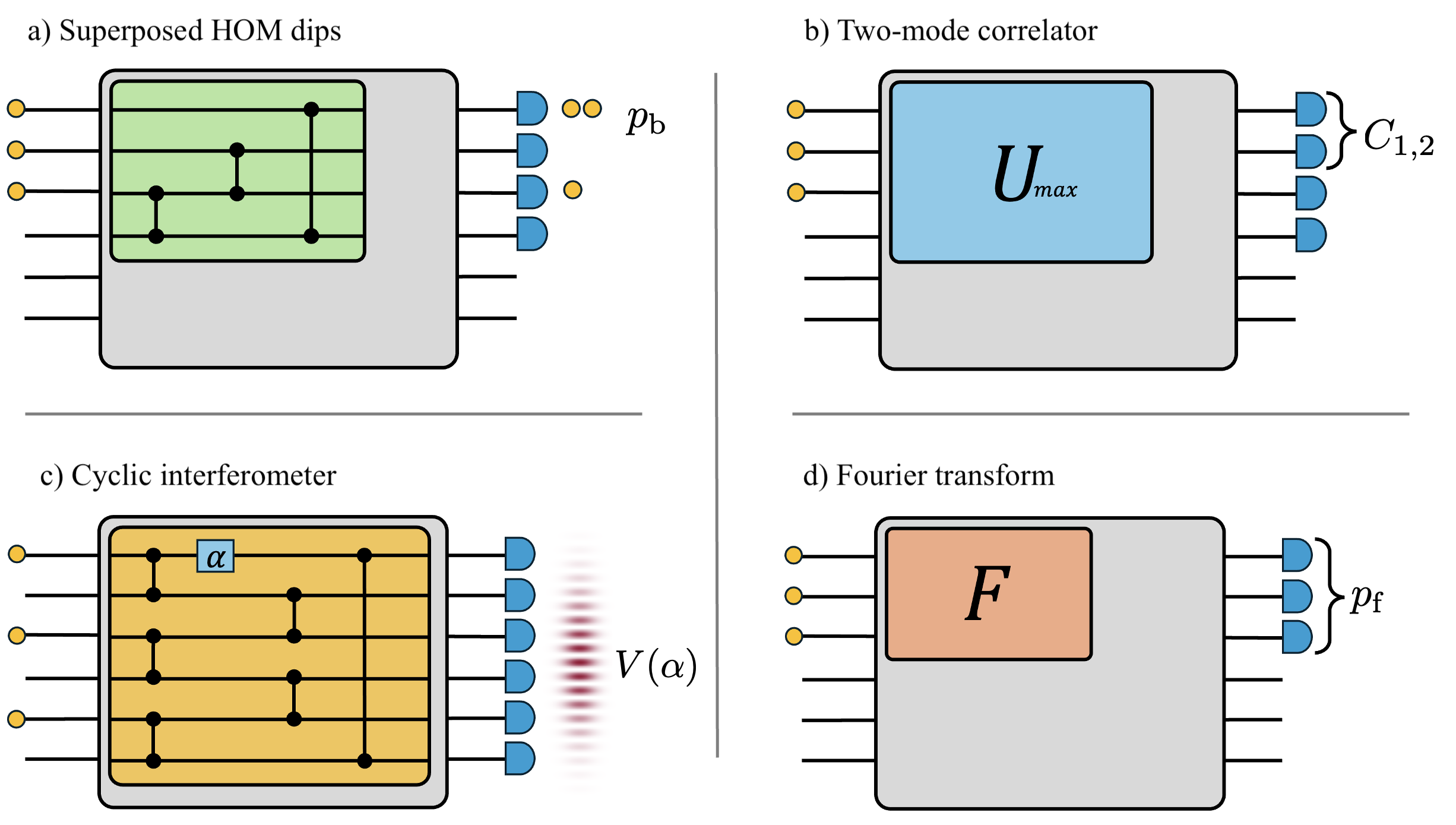}
    \caption{
    \textbf{Interferometers for the four fidelity witnesses.}
    \textbf{(a)} \emph{Superposed HOM dips:} A network of balanced beam splitters implementing multiple Hong-Ou-Mandel-type interference experiments between a reference photon, that was evenly split by a discrete Fourier transform, and the remaining photons. The witness is based on the measured bunching probability \(p_{\text{b}}\).
    \textbf{(b)} \emph{Two-mode correlator interferometer:} A programmable multiport unitary optimized to maximize two-mode number correlations \(C_{i,j}\), providing an estimate of pairwise interference contributions.
    \textbf{(c)} \emph{Cyclic interferometer:} A shallow two-layer beam-splitter network with a tunable phase \(\alpha\), arranged such that photons interfere along a cyclic pattern. The resulting interference fringe yields a visibility \(V(\alpha)\) that probes cyclic permutation symmetry.
    \textbf{(d)} \emph{Fourier interferometer:} A discrete Fourier transform acting on the external modes, exploiting exact many-particle suppression laws. The witness is obtained from the total probability \(p_\mathrm{f}\) of strictly suppressed output events.
    }
    \label{interferometers}
\end{figure*}

\subsection{Superposed HOM dips}

The witness proposed by Brod {\emph{et al.}}~\cite{Brod:2019cf} is based on a sequence of balanced beam splitters implementing a quantum Fourier transform followed by a set of pairwise \emph{Hong-Ou-Mandel} (HOM) interference experiments, as illustrated in Fig.~(\ref{interferometers}a). The measured observable is the bunching probability $p_{\text{b}}$, defined as the probability that photons emerge bunched at the output ports of the interferometer. The protocol relies on performing HOM-type interference between a chosen reference photon and each of the remaining photons, effectively probing a collection of two-photon interference visibilities.

A key structural feature of this approach is the choice of a reference photon. All interference information is extracted relative to this photon, while interference between non-reference photons is not directly accessed. As a consequence, the witness probes only a specific subset of pairwise overlaps rather than symmetrically testing all photon pairs. This asymmetry will be relevant when discussing the robustness of the witness under realistic distinguishability models.

Ref.~\cite{Brod:2019cf} showed that for any state in which at most $n-1$ photons are mutually indistinguishable, the bunching probability is upper bounded by
\begin{equation}
p_{\text{b}} \le p^{\ast} := \frac{2^{n-3}}{2^{n-2}} .
\end{equation}
Exceeding this threshold, i.e., $p_{\text{b}} > p^{\ast}$, certifies genuine $n$-photon indistinguishability. For states that can be written as a convex mixture over distinguishability partitions (see also Eq.~(\ref{Broddecomp})),
\begin{equation}
\rho_{\mathbf{1}_n} = c_n \rho_{\parallel} + (1 - c_n)\rho_{\perp},
\end{equation}
where $\rho_{\parallel}$ denotes the fully indistinguishable state and $\rho_{\perp}$ a classical mixture over all other partitions, the bunching probability takes the form
\begin{equation}
p_{\text{b}} = c_n + (1 - c_n)p^{\ast}.
\end{equation}
Solving for $c_n$ yields the bounds
\begin{equation}
\frac{p_{\text{b}} - p^{\ast}}{1 - p^{\ast}} \le c_n \le 2p_{\text{b}} - 1 ,
\end{equation}
which provides a lower bound on the weight of the fully indistinguishable partition. This bound is in general not tight (only when $c_n$ is close to one), as is shown in Fig.~\ref{fig:comparisionfigure}(e). Our numerical simulations, shown in Fig.~\ref{fig:comparisionfigure}(d), further indicate that the witness is robust to small random perturbations of the ideal interferometer, indicating semi-device independent, although we do not prove this analytically.

The validity of the witness relies crucially on the assumption that the input state admits a \emph{positive} partition representation. This convex structure ensures that interference contributions from different distinguishability sectors add incoherently, allowing the bunching probability $p_{\text{b}}$ to be interpreted as a monotonic function of $c_n$ (monotonicity is important here because it allows us to invert a single measured probability into a rigorous lower bound on $c_n$). If this assumption is violated, the inequalities above are no longer guaranteed to hold.

In experimentally relevant time-delay models, partial distinguishability is introduced through relative temporal shifts between photons, see Fig.~(\ref{fig:comparisionfigure}b). Such models naturally produce states that admit a partition representation but not necessarily a positive one. In this regime, the superposed HOM dips witness can overestimate $c_n$ and even exceed its ideal value, thereby falsely certifying genuine $n$-photon indistinguishability, see Fig.~(\ref{fig:comparisionfigure}c). This caveat is partially related to the dependence on the chosen reference photon. In a simple time-delay model for photon distinguishability, an unfortunate choice of reference photon can already lead to the failure of the witness. In Appendix \ref{app:part_weight} we provide an explicit example illustrating the behavior shown in Fig.~(\ref{fig:comparisionfigure}b/c).

While the witness is efficient in terms of sample complexity, with a generalized statistical pre-factor $\kappa_n = 2n - 2$, this efficiency comes at the cost of strong structural assumptions on the input state.

\subsection{Two-mode correlator}\label{sec:two_mode_correlator}
The two-mode correlator witness, proposed by van der Meer \emph{et al.}~\cite{vanderMeer:2021}, is based on measuring second-order number correlations between selected output modes of a programmable multiport interferometer, as illustrated in Fig.~(\ref{interferometers}d). 
The interferometer is chosen to maximize the sensitivity of a two-mode correlator
\begin{equation}
C_{i,j} = \langle n_i n_j \rangle - \langle n_i \rangle \langle n_j \rangle ,
\end{equation}
where $n_i$ denotes the photon number operator in output mode $i$. 
For perfectly indistinguishable photons propagated through a particular interferometer \(U_{\max}\), the correlator attains its maximum possible value
\begin{equation}
C_q^{(n)} = \frac{1}{4} - \frac{1}{2n},
\end{equation}
whereas for fully distinguishable photons the correlator is non-positive, \(C_d \leq 0\) for any interferometer.
Therefore, observing $C_{i,j}>0$ witnesses two-photon interference and it can also be shown that observing $C_{i,j}>C_q^{(n-1)}$ witnesses genuine $n$-photon interference, regardless of what interferometer was actually implemented, i.e., in a semi-device-independent manner, as was shown in the original work \cite{vanderMeer:2021}. Naturally, actually implementing the interferometer \(U_{\max}\) is the optimal choice to maximize the contrast between different interference regimes, enabling the correlator to act as a maximally robust indicator of multi-photon interference.

The authors of Ref.~\cite{vanderMeer:2021} only go as far as using the two-correlator itself as a statistically significant witness of multi-photon interference. To convert $C_{i,j}$ into a bound on $c_n$, we upper bound the correlator attainable by states with $c_n=0$ (i.e.,  with no fully symmetric $n$-photon component). Denoting this worst-case value by $C_{\max}^{(n-1)}$, positivity of the partition representation implies a worst-case decomposition for the observed correlator of
\begin{equation}
C_{i,j} = c_n\, C_q^{(n)} + (1-c_n)\, C_{\max}^{(n-1)} ,
\end{equation}
and therefore
\begin{equation}\label{eq:twomodemax}
c_n \ge \frac{C_{i,j}-C_{\max}^{(n-1)}}{C_q^{(n)}-C_{\max}^{(n-1)}},
\end{equation}
which is a direct lower bound on the symmetric weight and, consequently, on the LOQC fidelity. Note that $C_q(n)$ is the maximum possible value of the two-mode correlator for any state achieved by $n$ photons injected in $U_{\mathrm{max}}$ and adding a single distinguishable photon to $n-1$ identical ones can only reduce the value of $C_{ij}$ as per Eq. (2) of Ref.~\cite{vanderMeer:2021}. It follows that $C_{\max}^{(n-1)}\leq C_q^{(n-1)}$. For the special case of $n=3$ (the case we experimentally implement in this work) we get that $C_q^{(n-1)}=C_q^{(2)}=0$ and thus Eq.~(\ref{eq:twomodemax}) reduces to the simplified form

\begin{equation}
c_n \geq \frac{ C_{i,j}}{C_q^{(n)}}.
\end{equation}

As shown in Fig.~(\ref{fig:comparisionfigure}e), this bound is not tight. It is, however, analytically proven to be semi-device independent, which is also reflected in our numerical simulations in Fig.~(\ref{fig:comparisionfigure}e). It is worth mentioning that while the bound on $c_n$ requires a positive partition representation, it remains an open problem whether the original two-point correlator witness does.

The validity of this witness relies on the positivity of the partition representation. If the state contains coherences between different distinguishability sectors, or if some partition weights are negative - as occurs, for example, in realistic time-delay models mentioned earlier - then the correlator need not vary monotonically with $c_n$, and the bound may fail.
Conceptually, this limitation arises because the witness projects the full many-body interference structure onto a single pairwise observable, thereby discarding higher-order interference contributions.

From an experimental perspective, the two-mode correlator witness is efficient due to its low circuit depth and the fact that it only requires access to second-order correlations.
The generalized statistical pre-factor scales polynomially as $\kappa_n = \mathcal{O}(n^2)$, reflecting the growing variance of number correlations (see Appendix~\ref{app:sample_complexity}).

\subsection{Cyclic interferometer}
The witness proposed by Pont \emph{et al.}  \cite{Pont:2022} uses a cyclic 2-layer beam-splitter network with $2n$ modes and a tunable phase $\alpha$, as shown in Fig.~(\ref{interferometers}b). This work specifies the input-output pair of mode assignment lists $(\mathbf{g},\mathbf{h})$; The input $\mathbf{g}$ must have exactly one photon in each pair of adjacent modes with grouping $(1,2)(3,4)(2n-1,2n)$, and the output $\mathbf{h}$ must have exactly one photon in each pair of adjacent modes with grouping $(2,3)(4,5)(2n,1)$. For any such valid input–output pair $(\mathbf{g},\mathbf{h})$ the resulting interference fringe is
\begin{equation}
    P(\alpha)= 2^{-(2n-1)}[1+(-1)^{n+p+q} c_n \text{cos}\alpha],
\end{equation}
where $p$ is the number of occupied even modes in the \emph{input} and $q$ is the number of occupied even mores in the \emph{output}. For the purposes of this work, we implement the simpler case where the input has one photon in each odd-numbered mode, $\mathbf{g}=(1,3,5, \dots,  2n-1)$. In that case $p=0$ and the interference fringe takes the form
\begin{equation}\label{eq:pontwitness}
    P(\alpha)= 2^{-(2n-1)}[1+(-1)^{n+q} c_n \text{cos}\alpha].
\end{equation}

The witness uses the visibility of the interference fringe $V=(P_{\rm max}-P_{\rm min})/(P_{\rm max}+P_{\rm min})$, defined here for a single output configuration, which is exactly identified with $c_n$ via Eq.~(\ref{eq:pontwitness}). For the input state $\mathbf{g}=(1,3,5,\ldots,2n-1)$, there are exactly $2^n$ eligible output configurations. By symmetry, half of these configurations ($2^{n-1}$) have an odd number of occupied even modes and half have an even number. Each individual configuration with an odd (even) number of occupied even modes occurs with probability $(1\pm c_n\cos\alpha)/2^{2n-1}$ respectively. Summing over all configurations of a given parity therefore yields total probabilities
\begin{align}
P_{\mathrm{odd}}&=\frac{2^{n-1}(1+c_n\cos\alpha)}{2^{2n-1}}=\frac{1+c_n\cos\alpha}{2^n}, \\
P_{\mathrm{even}}&=\frac{2^{n-1}(1-c_n\cos\alpha)}{2^{2n-1}}=\frac{1-c_n\cos\alpha}{2^n}.
\end{align}

Choosing a fixed phase $\alpha$ that simultaneously maximizes $P_{\mathrm{odd}}$ and minimizes $P_{\mathrm{even}}$ allows one to identify these quantities with $P_{\max}$ and $P_{\min}$, from which the visibility directly yields $V=c_n$. Under the assumption that the input state can be written in a partition representation, the distinguishability witness thus takes the form
\begin{equation}
    c_n \geq V^*
\end{equation}
where $V^*$ is the experimentally estimated visibility $V$ for all eligible output configurations. This bound is tight, see Fig.~(\ref{fig:comparisionfigure}e), because the measured quantity (the visibility) is precisely equal to $c_n$. However, our numerical tests have shown that the witness is not semi–device independent. Random perturbations of the interferometer frequently overestimate the value of $c_n$ leading to overestimations of the true LOQC fidelity, as can be seen in Fig.~(\ref{fig:comparisionfigure}d).

When it comes to restrictions on the input state, the Cyclic Interferometer witness has a broader validity range than originally stated. Because it directly measures the coefficient associated with the full-cycle partition \cite{annoni2025}, which is directly linked to the visibility and therefore always non-negative, it is valid for any state with a partition representation, including those with negative coefficients. However, the witness also comes with a caveat. The interferometer is relatively shallow. As a consequence, the observed interference fringe is not sensitive to all pairwise distinguishabilities between photons. In particular, within the commonly used model of independent pure internal states the expectation value of the $n$-cycle permutation $(1,2,\dots,n)$ takes the form
\begin{equation}
\langle P_{(1,2,\dots,n)}\rangle
=\prod_{k=1}^{n}\langle\phi_k|\phi_{k+1}\rangle,
\qquad \phi_{n+1}\equiv\phi_1,
\end{equation}
and depends only on overlaps between neighbouring photons along the cycle. Overlaps between photons that are not adjacent in the cycle ordering do not enter this expression. This implies that a high fringe visibility does not, in general, guarantee mutual indistinguishability among all photon pairs, but only among those that are within a two-layer interferometer's distance apart from one another. For example, in the case of $n=4$ the cyclic interferometer measures the following expectation value
\begin{equation}
\bigl\langle P_{(4,1,2,3)}\bigr\rangle
=\braket{\phi_4|\phi_1}\braket{\phi_1|\phi_2}\braket{\phi_2|\phi_3}\braket{\phi_3|\phi_4}.
\end{equation}
Crucially, the overlap $\braket{\phi_4|\phi_2}$ does \emph{not} appear in this product. This means, we can engineer a set of photon states specifically such that all photons except two and four have a non-zero overlap, yet $\braket{\phi_4|\phi_2}=0$, see Fig.~(\ref{fig:comparisionfigure}a). The witness is supposed to certify genuine $n$-photon distinguishability, and hence its outcome in this case should be zero. But since the overlap $\braket{\phi_4|\phi_2}$ does not appear in the cyclic product that constitutes the witness, this pair of fully distinguishable photons will not be detected and thus the witness falsely certifies $n$-photon indistinguishability.

At first sight, Eq.~(\ref{eq:pontwitness}) appears to describe an exponentially small probability per output event, scaling as $2^{-(2n-1)}$. At the same time, there are exponentially many output configurations $\mathbf{h}$ that satisfy the post-selection condition of having exactly one photon per output mode pair. One might therefore wonder whether these two exponentials cancel, yielding an overall probability that is constant in $n$. This is not the case. The total probability of obtaining any `useful' event, i.e., any output configuration that contributes to the witness, is
\begin{equation}
P_{\mathrm{tot}} = P_{\mathrm{odd}} + P_{\mathrm{even}} = \frac{2}{2^n} = \frac{1}{2^{n-1}},
\end{equation}
which still decays exponentially with photon number. Thus, although the exponential suppression of individual output probabilities is partially compensated by an exponential number of valid output patterns, the cancellation is incomplete. The Cyclic Interferometer witness therefore remains exponentially inefficient in terms of sampling complexity, requiring $O(2^{n-1})$ experimental runs to observe a constant number of relevant events. This scaling is reflected in the generalized statistical pre-factor $\kappa_n = 2^{n-1}$.

\subsection{Fourier transform \label{Fourier_wit}}

The witness proposed by Somhorst \emph{et al.}  \cite{Somhorst:2023} implements a discrete Fourier transform $U_F$ on the external modes, see Fig.~(\ref{interferometers}c).
The central observable of the Fourier transform witness is the total probability $p_{\text{f}}$ of output configurations that are \emph{forbidden} (suppressed) for perfectly indistinguishable bosons traversing a discrete Fourier interferometer. These suppression laws, first identified in \cite{Tichy:2012gt} and previously proposed to distinguish fully distinguishable \cite{Tichy:2014} and $(n-1)$-distinguishable behavior \cite{Stanisic:2018}, arise from exact cancellations of many-body interference amplitudes enforced by the cyclic symmetries of the Fourier matrix. Importantly, suppression is an exact property for fully symmetric bosonic states and is violated as soon as permutation symmetry is broken by partial distinguishability. 
This makes $p_{\text{f}}$ simultaneously a coarse-grained yet highly sensitive and powerful diagnostic: It vanishes identically for all states in the target class $\mathcal{C}_{\mathrm{LO}}$, while increasing monotonically with the weight of non-symmetric components. As a consequence, the Fourier transform witness yields a direct and tight bound on the symmetric weight $c_n$ without requiring any reference photon, pairwise visibility model, or positivity assumption on a partition representation.

It is important to clarify the precise sense in which the Fourier transform witness applies to general input states, and how this relates to earlier work. In Ref.\ \cite{Somhorst:2023}, an $n$-photon indistinguishability witness has been defined by bounding the indistinguishable weight $c_n$ (there the protocol is is described in terms of bounding the distinguishable weight) in terms of the total probability $p_{\text{f}}$ of forbidden output event. The analysis proceeds by noting that a detector that sums over all internal degrees of freedom can be thought of as a fictitious internal-resolving detection in an arbitrary internal basis that is subsequently coarse grained over. The calculation proceeds relative to the (unknown) internal state of one of the detected photons and looks to bound the probability that a distinguishable photon would be detected in the same run. This anticipates our notions of the LOQC fidelity, already recognising that it is not the fidelity to a particular internal state but rather a relative quantity that is most relevant. The fidelity witness is derived by noting that the weight of the distinguishable subspace can be bounded by monitoring the observed probability of forbidden patterns, $p_\mathrm{f}$ and assuming all such events are caused by an element in the distinguishable subspace (which we would here call a partition) with the lowest probability of triggering a forbidden pattern, thereby attributing a worst case, maximal value of ($1-c_n$) for an observed $p_\mathrm{f}$.

In our current notation, the witness from Ref.\ \cite{Somhorst:2023} takes the form
\begin{equation}\label{eq:lambda}
c_n \ge 1 - \frac{p_{\text{f}}}{\lambda},
\end{equation}
where the pre-factor $\lambda$ was defined as a worst-case (smallest) probability of triggering forbidden output over all admissible input states. Although the bound was defined in this way for arbitrary $n$, an explicit value of $\lambda$ was derived only for the $n=3$ (note that, in fact, an arithmetical error lead to a reported value $4/9$, whereas it should have been $2/3$, such that the indistinguishabilities are actually underestimated). Here, we compute the scaling of the pre-factor $1/\lambda$, which is currently available only for specific noise models.

In particular, for a subset of partition states where at least one photon is orthogonal to all the others (this includes the \emph{orthogonal-bad-bit} (OBB) model) we derive an explicit expression for $\lambda_n$. The bound has the form (see Appendix \ref{app:OBB})
\begin{equation}
    c_n \geq 1-\frac{n}{n-1}p_{\text{f}}
\end{equation}
or in the $3$-photon case we implemented
\begin{equation}
    c_n \geq 1-\frac{3}{2}p_{\text{f}}.
\end{equation}


As shown in Fig.~(\ref{fig:comparisionfigure}d/e), for an ideal Fourier interferometer the witness provides a relatively tight lower bound on the LOQC fidelity for a broad class of input states. In addition, the witness is expected to be conservative with respect to imperfections in the implemented unitary. By analogy with the two-photon Hong-Ou-Mandel effect at a balanced beamsplitter, we conjecture that the Fourier transform maximizes the forbidden event probability $p_{\text{f}}$ for a given level of partial distinguishability. Under this assumption, deviations from the ideal Fourier interferometer can only reduce $p_{\text{f}}$, ensuring that the inferred fidelity is not overestimated. While a general proof of this extremal property is currently lacking, our numerical perturbation analysis in Fig.~(\ref{fig:comparisionfigure}d) provides supporting evidence for this behavior under small unitary deviations.

The Fourier interferometer witness takes advantage of the symmetries of the discrete Fourier transform to construct a fidelity estimator with minimal structural assumptions on the input state. Its validity does not depend on the existence of a partition representation, making it the most state-assumption-free witness considered here. The key conceptual reason for this robustness is that $p_{\text{f}}$ is invariant under \emph{permutation twirling},
\begin{equation}
p_{\text{f}}(\rho) = p_{\text{f}}(\mathcal{P}(\rho)),
\end{equation}
where $\mathcal{P}$ denotes the averaging over all external mode permutations. As shown in Appendix \ref{app:TwirlingCovariance}, this invariance implies that the Fourier transform witness depends only on orbit-averaged permutation invariants. Operationally, this means that the witness behaves \emph{as if} it were acting on a partition-representable state, even when the underlying physical state is fully general and may contain coherences between distinguishability sectors. This property is unique among the witnesses we consider.


For the states where we have derived an explicit expression for $\lambda_n$ we can directly compute the sample-complexity (see Appendix \ref{app:sample_complexity}) and find that the witness remains sample-efficient in $n$, with a generalized statistical pre-factor 
\begin{equation}
\kappa_n = \frac{n}{n-1}.
\end{equation}
For general partition states beyond the OBB model, the existence of a universal worst-case pre-factor remains an open problem. However, we conjecture that the favorable scaling also holds more generally. This would make intuitive sense because the partition with the lowest probability of causing a forbidden state would presumably be the one where $n-1$ photons are indistinguishable and only one is orthogonal, which is precisely the OBB model covered in our analysis (see also added note). From this line of reasoning, we conjecture that the OBB model provides a worst-case pre-factor and that the scaling will be at least as efficient for other noise models.\\

To summarize, in this section we have carried out a systematic theoretical and numerical comparison of four experimentally feasible indistinguishability witnesses based on two-mode correlations, superposed HOM interference, cyclic interferometry and Fourier suppression. By analyzing tightness, semi-device independence, assumptions on the input state and sample complexity, we have identified clear qualitative and quantitative trade-offs between the different approaches. The numerical results are summarized in Fig.~(\ref{fig:comparisionfigure}), which illustrates the relative performance of the witnesses, and in Table~\ref{tab:witness_comparison}, which condenses the key properties and resource requirements. This comparison shows that while several witnesses can provide meaningful bounds in restricted regimes, the Fourier-based witness achieves the most favorable overall balance between tightness, robustness, and scalability, establishing it as the most operationally meaningful fidelity witness among those studied here. This makes it a natural benchmark for LOQC certification protocols.

\begin{figure*}
    \centering
    \includegraphics[width=1\linewidth]{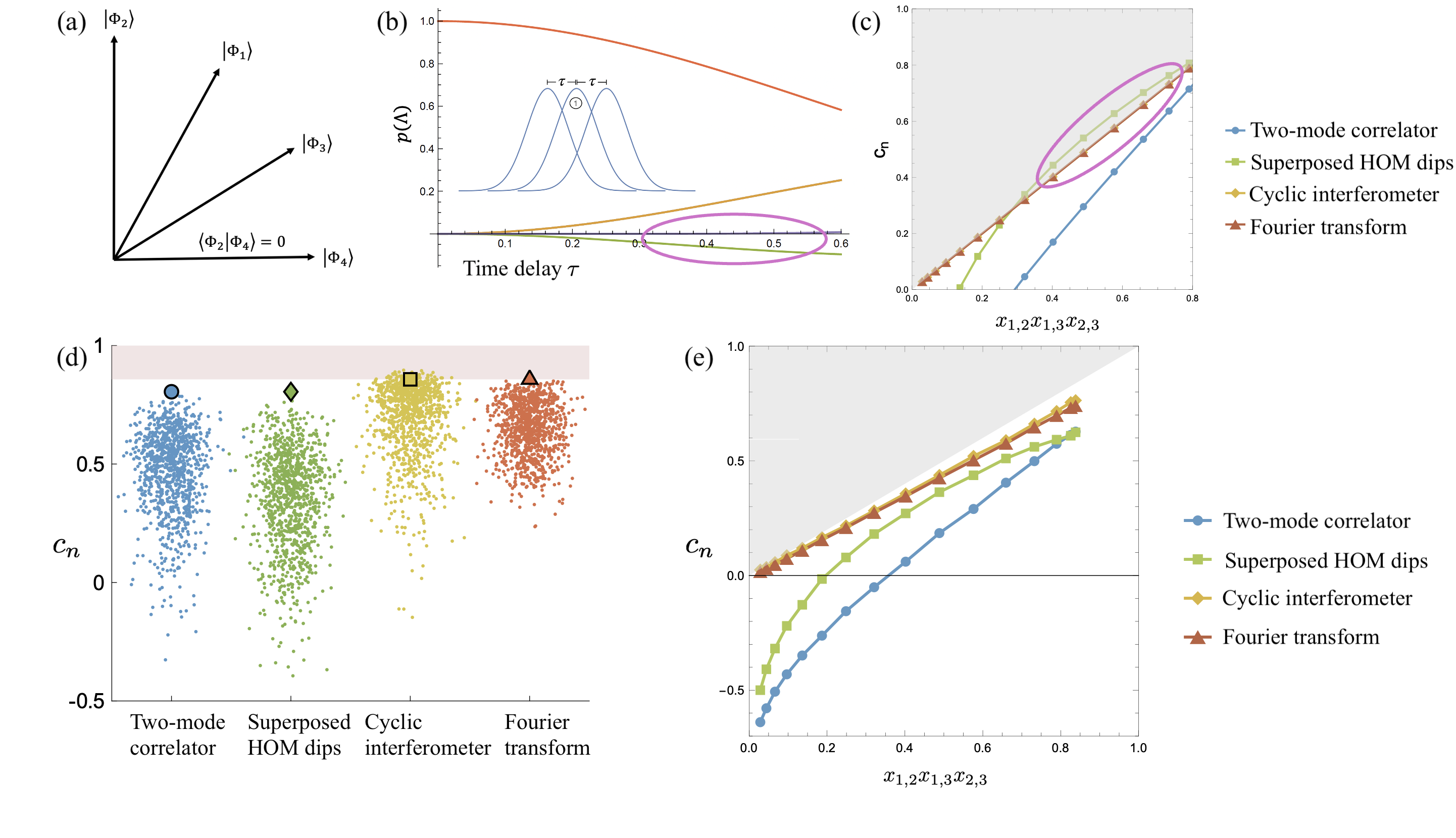}
    \caption{\textbf{Witness comparison.} \textbf{(a)} Geometric depiction of an $n=4$ photon states and their respective angles. This is an example of an n-photon state where one photon pair is completely orthogonal, yet it remains undetected by the Cyclic Interferometer witness. \textbf{(b)} Analytic dependence of the five partition weights (for $n=3$) on a distinguishability parameter (time delay $\tau$). For the time delay model discussed in the text and depicted in the center of the graph, one of the partition weights (the green one) is negative for all $\tau>0$. This can lead to the failure of some of the witnesses, which is shown in the next graph. \textbf{(c)} This graph shows the $c_n$ values in the noiseless case, $\epsilon=0$ (see (d) for noise model), for the negative partition representation state. We can see that in this case, the superposed HOM dips witness overshoots the true value of $c_n$. \textbf{(d)} Monte-Carlo analysis of device independence for the four witnesses. Each data point corresponds to a random perturbation of the ideal interferometer $U_{\mathrm{ideal}}$ of the form $U(\epsilon) = U_{\mathrm{ideal}} \cdot e^{\epsilon \log(U_{\mathrm{Haar}})},$ where $U_{\mathrm{Haar}}$ is a Haar-random unitary and $\epsilon \geq 0$ controls the perturbation strength. The big dots mark the ideal (noiseless) values for each interferometer. The shaded region is the value range above the true $c_n$ value. A semi-device-independent witness never overestimates $c_n$, corresponding to points lying below the upper boundary. Note, that the Fourier Interferometer witness is the only one that is tight and does not overshoot. The cyclic interferometer is tight but frequently overshoots, and the other witnesses do not even reach the boundary in the noiseless case (meaning that they are never tight). \textbf{(e)} This graph shows numerical simulations of the $c_n$ value as a function of the cyclic overlap (a distinguishability parameter, in this case $x_{12}x_{13}x_{23}$) for $n=3$, $m=6$, $\epsilon=0.1$, $N=500$, and HOM dip values $\textbf{x}=(0.98,0.94,0.91)$. An optimal witness would be linear in the cyclic overlap and thus follow the boundary of the gray shaded region. We can see that both the Fourier transform and the Cyclic Interferometer witness perform well in this sense.}
    \label{fig:comparisionfigure}
\end{figure*}

\begin{table*}[t]
\centering
\begin{tabular}{l|c|c|c|c|c}

Witness 
& State assumptions 
& Device independence 
& Tightness 
& Generalized pre-factor
& Computational complexity \\
\toprule
Two-mode correlator 
& Positive partition repr. 
& Yes 
& No 
& $\mathcal{O}(n^{2})$ 
& Efficient \\

Superposed HOM dips
& Positive partition repr.\ 
& Yes 
& No 
& $2n-2$ 
& Efficient \\

Cyclic interferometer
& Partition repr.\ 
& No 
& Yes 
& $2^{n}$ 
& Efficient \\

Fourier transform
& None
& Yes 
& Yes 
& $\frac{n}{n-1}$ (OBB)
& Efficient \\

\end{tabular}
\caption{
\textbf{Comparison of photonic indistinguishability witnesses.}
The table summarizes the key properties of the four witnesses considered in this work, including the structural assumptions on the input state, semi-device independence, tightness of the resulting bound on $c_n$, the generalized statistical pre-factor \(\kappa_n\) governing the sample complexity, and the computational complexity of implementing each protocol. This comparison highlights the trade-offs between generality, robustness, and experimental efficiency, and shows that the Fourier-transform-based witness uniquely combines minimal state assumptions with tight bounds and favorable scaling.
}
\label{tab:witness_comparison}
\end{table*}

\section{Experiment \label{experiment}}

In this section, we experimentally demonstrate the indistinguishability witnesses introduced above using a quantum photonic processor. The experiment is designed to allow a direct comparison of the different indistinguishability-based fidelity witnesses described in Section~\ref{comparison} under identical experimental conditions. We employ three single photons generated via spontaneous parametric down-conversion and inject them into a programmable interferometer, which enables the implementation of all four witnesses on the same hardware device. By introducing controlled temporal delays between the photons, we continuously 
tune their mutual distinguishability and probe the resulting witness values. This approach allows us to experimentally assess the tightness, robustness, and reliability of the corresponding fidelity bounds, and to test whether the theoretical advantages of the Fourier transform witness persist in an experimental setting. This section is structured as follows: first, the experimental setup is described in more detail, then the implementation of the different witnesses is described, and finally, the results are presented.



\subsection{Experimental setup}
The experiment consists of three parts which respectively generate, manipulate, and measure single photons and their associated state. These three parts are: a photon source, a photonic processor chip, and a bank of single photon detectors with readout electronics.

\subsubsection{Photon source}
Our single-photon source, visualized in Fig.~(\ref{fig:ExperimentalSetup}a), is based on a pair of \emph{periodically poled potassium titanyl phosphate} (ppKTP) crystals. These crystals are each operated in a Type-II degenerate configuration, enabling down-conversion from pump pulses of a \emph{titanium-sapphire} (Ti:Sapph) pump laser centered at $775$nm to two-mode squeezed vacuum at $1550$nm \cite{evans_2010_Phys.Rev.Lett.}, with an output bandwidth of approximately $\Delta \lambda = 20$ nm. The two arms of the produced state are then split using a polarizing beam splitter. Their purity is enhanced by passing them through bandpass filters with a bandwidth of $\Delta \lambda = 12$ nm. Next, we herald a product state of three single photons by heralding one of the arms of one of the two sources upon detecting a one-photon event, and postselecting our experiment on observing three photons at the output. We use time-of-arrival tuning to control the indistinguishability of the three-photon input state. Conditional on these
events, we prepare the input state vector $|1,1,1\rangle$ with a controllable indistinguishability \cite{tillmann_2013}. 

\subsubsection{Photonic processor}
Our photonic processor consists of an interferometer implemented using silicon nitride waveguides \cite{Triplex, QuiX2021}, with a total of $m = 12$ modes and a specified optical insertion loss (coupling plus propagation losses) of 2.5 dB (44\%) on average across the input channels. Reconfigurability is achieved through an arrangement of unit cells, each consisting of pairwise mode interactions realized as tunable Mach-Zehnder interferometers \cite{clements_optimal_2016, QuiX2021}, adjusted via the thermo-optic effect. For a complete 12-mode transformation, the average specified amplitude fidelity for randomly chosen unitaries is 
\begin{equation}
F_\text{processor} = \frac{1}{m} \mathrm{Tr} (|U^{\dagger}{\rm set}||U{\rm get}|) = 90.4 \pm 2.4\%, 
\end{equation} 
where $U_{\rm set}$ and $U_{\rm get}$ are the target and implemented transformations, respectively. However, for highly structured transformations (such as the identity), fidelities as high as 99\% are achievable. The transformation used in this work possesses significant structure—though not to the extent of the identity—and is thus expected to exhibit a fidelity lying between these two regimes. The processor also preserves the second-order coherence of the photons \cite{QuiX2021}.

\subsubsection{Detection}
Photon detection is performed using a bank of \emph{superconducting nanowire single-photon detectors} (SNSPDs) \cite{ReviewSNSPD, Marsili-SNSPD}, with standard correlation electronics for readout. Pseudo-photon-number-resolving detectors \cite{feito_measuring_2009} are assigned to each of the used output modes, with an additional click detector used to herald successful photon generation events. We postselect on triple-coincidence detection events and correct the resulting count rates to account for relative detection efficiency differences across the detector set. An additional noise source arises from higher-order pair generation combined with optical losses, which can lead to spurious postselected events. The pump power of 25 mW is chosen to balance photon generation rate and noise suppression. 

\begin{figure*}[htbp]
    \centering
    \includegraphics[width=.9\linewidth]{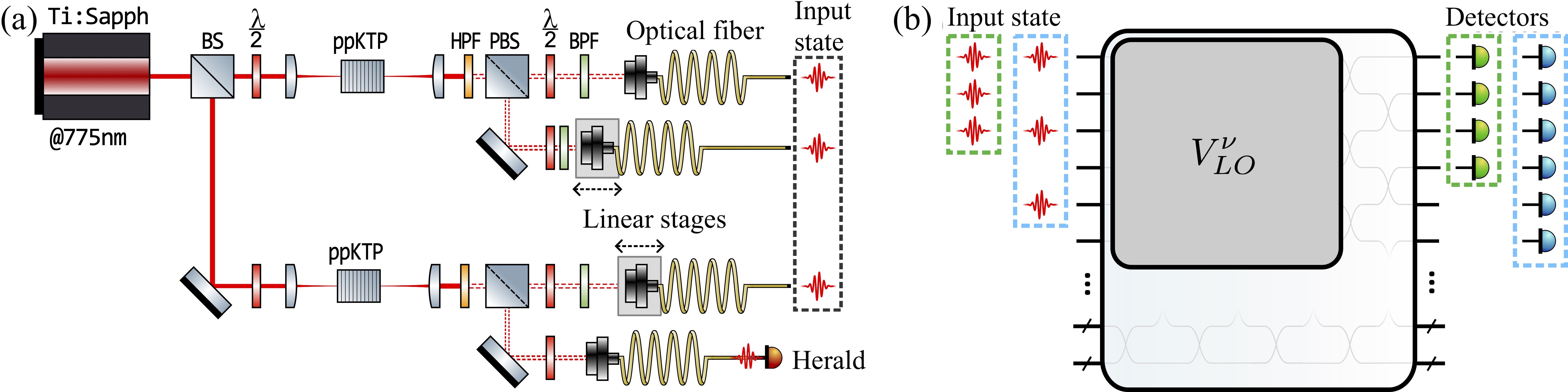}
    \caption{{\bf Experimental setup. (a)} A Ti:Sapph pump laser at 775\,nm is split by a beamsplitter into the two arms of the single photon source, and then the polarization is controlled with a half-wave plate ($\frac{\lambda}{2}$) to match the ppKTP crystal. The pumped ppKTP crystal generates photon pairs by type-II spontaneous down conversion. The pump is removed by a \emph{high-pass filter} (HPF). The photon pairs are split by polarizing beam splitters. The polarization is aligned with the fiber by another half-wave plate. Then, a \emph{band pass filter} (BPF) is used to improve the spectral purity of the photons of three of the photons that are used as the input state. The arrival time of the single photons in the input state is synchronized with the linear stages. The other photon is used to herald the single photon state. \textbf{(b)} The three single photons are used as the input state. The input is $\ket{1,1,1,0}$ (green box) for the superposed HOM dips, Fourier transform, and two-mode correlator witness. For the cyclic interferometer, the $\ket{1,0,1,0,1,0}$ (blue box) input state vector is used. The Clements decomposition is used to determine the voltages on the heater of the photonic processor to implement the necessary unitary transformation ($V_{LO}^{\nu}$) for the four different interferometers. The output is measured with a bank of superconducting nanowire single-photon detectors. For the superposed HOM, Fourier transform, and two-mode sorrelator, four three-photon pseudo-photon-number-resolving detectors (green detectors) are used, and for the cyclic interferometer, six two-photon pseudo-photon-number-resolving detectors (blue detectors) are used.}
    \label{fig:ExperimentalSetup}
\end{figure*}

\subsection{Experimental implementation}
To compare the witnesses, the photons are characterized, their overlaps are adjusted, and then the witnesses are measured. First, the characterization is described, then the implementation of the indistinguishability witnesses, and finally the demonstration of the full LOQC fidelity witness.

\subsubsection{Overlap characterization and tuning}
 On-chip measurements using the \emph{Hong-Ou-Mandel} (HOM) effect \cite{hong_measurement_1987} provide a set of calibration measurements to infer the state vector overlap between photons $x_{i,j} = \braket{\phi_i | \phi_j}$, where $\ket{\phi_i}$ represents the state vector of photon in input spatial-mode $i$. The maximum state vector overlap between photons, related to the HOM dip visibility via $V = x^2$, was measured.  After filtering, we measure visibilities of 98\%, 95\%, and 90\% for photons pairs 1\&2, 1\&3 and 2\&3, respectively (number refers to input spatial mode). It is important to note that HOM tests only provide access to $|x_{i,j}|^2$. Therefore, we make the additional assumption that $x_{i,j}$ is real. While this assumption may not hold in general, it suffices in cases where partial distinguishability is introduced by time-delays, as verified in Refs.\  \cite{Menssen:2017,rodari2024semideviceindependentcharacterizationmultiphoton}.
 To tune the indistinguishability, linear stages with micrometer precision are used to induce distinguishability in the arrival time of the single photons. The overlaps are characterized before and after each measurement of the witness to have a clear reference. 
 
\subsubsection{Comparison of indistinguishability witnesses}
The indistinguishability witnesses described in Section~\ref{comparison} are implemented in the experimental setup described above. The superposed HOM dips, two-mode correlator, and Fourier transform witness use the $\ket{1,1,1,0}$ input state vector and four three-photon pseudo photon-number-resolving detectors to discern the output click patterns of interest. The cyclic interferometer witness requires the $\ket{1,0,1,0,1,0}$ input state vector, and it uses six two-photon pseudo photon-number-resolving detectors. The four different required unitary transformations are applied to the input state by the photonic processor. From the probability distribution of the click patterns a bound for $c_n$ is found for each witness.

\subsubsection{Demonstration of the fidelity witness}
To demonstrate that the Fourier method can be used as a fidelity witnesses, we lower bound the fidelity for twelve Haar random matrices while tuning the pairwise overlap of the internal wave functions of the input photons. 
The same measurement setup is used as for the Fourier indistinguishability witness, but a different setting of the photonic processor is used. The Quix photonic processor is virtually split into two sections that both independently apply a $4\times 4$ unitary transformation. The section on the input side of the processor applies the $4\times 4$ Haar random unitary transformation to generate a state on which our method can be demonstrated. The section on the output side of the processor implements the fidelity protocol in two steps: first, the processor implements the inverse of that Haar random unitary transformation to measure the photon reversibility, and then the matrix product of the inverse Haar random unitary and the Quantum Fourier Transform is implemented to lower bound the indistinguishability.

\subsection{Experimental results}
The witnesses are measured with a 80\,MHz pump laser repetition rate. The click patterns are post-selected on three-photon events. The frequency of the three-photon click patterns is about 2\,Hz due to the probabilistic generation and photon loss. Each pairwise overlap setting of the Fourier, superposed HOM, and two-mode correlator is measured for 75 minutes, interleaving the different measurement settings to compensate for drift in the setup. The cyclic interferometer is measured for 300 minutes to acquire sufficient statistics given its poor sample complexity. The characterization of the overlaps before and after each measurement is about 30 minutes long. The Haar random unitary transformations for the demonstration of the full LOQC fidelity witness, as described in Theorem \ref{thm_exp}, are each measured for 20 minutes with the same characterization time as the indistinguishability witness.

Fig.~(\ref{fig: experimental results}a). shows the comparison of the indistinguishability witnesses. The markers denote the measured $c'_n$ values. The horizontal lines below each marker show the witnessed lower bound with the allowed error probability of 25\%, as calculated via Hoeffding's bound. The red-shaded area marks an overestimation of the indistinguishability. Indistinguishability below 0 is shaded gray to emphasize that this is a trivial lower bound. The experimental data is in accordance with the simulations. The Fourier and cyclic interferometer witnesses tightly bound the indistinguishability, while the superposition HOM and the two-mode correlator provide looser bounds.


In Fig.~(\ref{fig: experimental results}b) the fidelity witness that uses the Fourier protocol is demonstrated. A deduction is made to all data points such that they have 90\% certainty according to Hoeffding's inequality. The bigger markers denote the mean of the twelve matrices, and the smaller dots depict the values that were witnessed for the individual matrices. The graph part signifies that a trivial fidelity or indistinguishability is witnessed, as these values cannot be negative. The blue downwards-facing triangles show the reversibility ($p_1$); this part of the witness should be independent of the pairwise overlaps. The red upwards-facing triangles depict the Fourier indistinguishability witness ($c_n$). This witnessed indistinguishability is similar to the previous measurement of the Fourier transform witness. Then, the gate fidelity and the indistinguishability are combined per Eq.~(\ref{Tsource}) to get the LOQC fidelity as defined in 
Definition \ref{BigDefinition}. This fidelity is depicted by the magenta squares.

\begin{figure*}[htbp]
    \centering
    \includegraphics[width=0.82\linewidth]{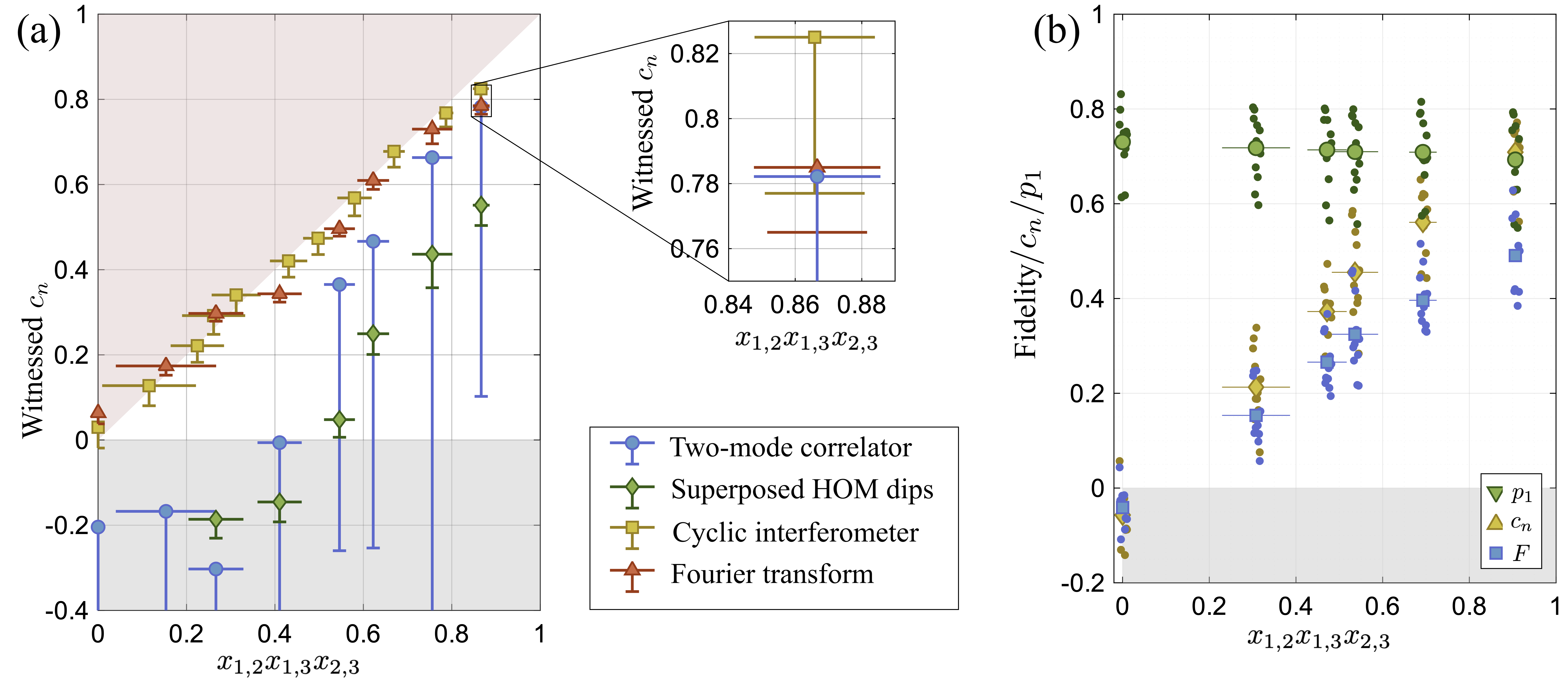}
    \caption{ {\bf Experimental results. (a)} Graph of the witnessed indistinguishability as a function of the product of pairwise overlaps of the internal states  ($x_{1,2}x_{2,3}x_{1,3}$). The markers denote the measured values for $c_n'$. The horizontal lines through these markers depict the standard deviation in $x_{1,2}x_{2,3}x_{1,3}$ and the horizontal lines under the markers denote $c_n(\varepsilon=0.25)$. The area that is marked gray does not witness indistinguishability, and the area marked by red has a higher indistinguishability than is allowed according to the theory. Overshooting corresponds to falsely overestimating the indistinguishability and therefore leads to invalid fidelity bounds.  {\bf(b)} Graph depicting the gate fidelity, indistinguishability, and fidelity witnesses of twelve Haar random matrices with $\epsilon=0.1$. The bigger markers denote the mean of the matrices, and the smaller dots depict the individual matrices. Small arbitrary offsets are added to the individual measurements for better visualization. The horizontal lines denote the standard deviation in the product of the pairwise overlaps. The gray-shaded part of the graph part signifies that a trivial fidelity or indistinguishability is witnessed.}
    \label{fig: experimental results}
\end{figure*}

\section{Discussion, conclusion, and future work}
\label{sec:discussion}

In this work we have developed a unified {operational} framework for certifying the fidelity of linear-optical quantum state-preparation protocols in the presence of partial photon distinguishability. We have clarified how traditional fidelity witnessing is neither practically achievable in standard experiments nor able to precisely capture the true resourcefulness of an experimentally prepared state. To address this, we defined an operationally well-motivated notion of fidelity that does meaningfully capture the quality of a prepared state for linear-optical quantum information processing, and then proposed several feasible methods to witness this quantity using well-calibrated interferometric measurements. Our central observation was that LOQC fidelity certification can be reduced to bounding two physically meaningful quantities: the probability that the experimentally prepared state occupies the correct external-mode single-photon subspace, dubbed photon reversibility, and a bound on the weight of the fully indistinguishable sector of the input state. 

We then investigated four methods from the literature for quantifying multi-photon indistinguishability,  experimentally implementing each witnesses using states from down-conversion photonic sources injected into a programmable, integrated interferometric chip. We complemented this with extensive theoretical and numerical analysis and evaluated their utility and limitations for fidelity witnessing, obtaining along the way some results on partial photon distinguishability of independent interest. The witnesses typically required the assumption that a positive partition representation could be associated with the prepared states - an accurate approximation for many noisy systems, and also enforceable by the application of random permutation mode permutations (twirling). We provide explicit examples of how the cyclic interferometer \cite{Pont:2022} can falsely certify fidelities when this assumption is broken. We also identify similar behavior for the superposed HOM witness \cite{Brod:2019cf} arising from a relatively simple noise model of asymmetric time delays, which is of additional interest to the study of anomalous bunching \cite{Seron:2023,Pioge:2023,Rodari:2024,Geller:2025}. On the other hand, for the Fourier method \cite{Somhorst:2023}, we show that the forbidden state probability on which it is based is invariant under permutation twirling and as such can extract useful results for any input state. For the two-mode correlation method \cite{vanderMeer:2021} the current derivation assumes a positive partition representation but we conjectured that this is unnecessary.

Regarding robustness to imperfections, the two-mode correlation method \cite{vanderMeer:2021} was analytically proven to be semi-device independent in the sense of always providing a valid lower bound even if the supposedly trusted interferometer deviates arbitrarily. We provide numerical evidence that this property also holds for the superposed HOM and Fourier witnesses and fails for the cyclic interferometer. Lastly, we addressed the critical issues of tightness and sample-complexity. Only the Fourier and cyclic methods provide in-principle exact characterizations of multi-photon indistinguishability, with the remaining methods only providing good approximations in the limit of highly indistinguishable photons. However, the cyclic method has an exponential sample complexity, rendering it unscalable to large system sizes. Conversely, we show for an error model slightly more restrictive than incoherent partition-like noise, that the Fourier method achieves $\mathcal{O}(1)$ sample-complexity and conjecture that this in fact holds generally (see note added). 

Lastly, based upon all of these considerations, we concluded that the Fourier witness represented the optimal combination of features and applied it to certify the creation of a series of Haar random states in a three-photon, four-mode system with a controlled level of distinguishability introduced via time-delays. 

Several open problems and future directions are suggested by this work. Immediate applications of these witnesses could include verified quantum sampling experiments or GHZ state creation as a resource for LOQC \cite{Bartolucci:2023,PsiQ:2025}. In this regard, it would also be of great interest to understand if they could be generalized beyond target states of definite photon number, for the purposes of certifying Gaussian \textsc{ BosonSamplers} \cite{Lund:2014,Hamilton:2017} or states for continuous variable quantum information, such as cat-states \cite{Ralph:2003} or Gottesmann-Kitaev-Preskill states \cite{GKP}. As well as the unresolved conjectures regarding the Fourier sample complexity and partial device-independence or the two-mode correlator's range of validity, other open questions include whether shadow tomography, compressed sensing, or matrix product state methods could be incorporated to render certification even more efficient in certain circumstances, and now one might extend this analysis beyond the partition-representation restrictions. 

In summary, we have provided a systematic approach to LOQC fidelity certification that unifies a variety of indistinguishability witnesses within a common operational framework. We investigated multiple specific witnesses and successfully applied them to experimentally certify complex photonic state creation. We expect that these results will be useful both for the analysis of current photonic experiments and for the development of scalable verification tools for future linear-optical quantum technologies.

\emph{Note added:} Whilst finalising this manuscript we became aware of some recent, independent work including some of the results presented here. Ref.\ \cite{Pioge:2026} conducted an extensive investigation of anomalous bunching and independently showed examples of this arising from simple time delays similar to our discussion of the super-posed HOM witness. Ref.\ \cite{Novo:2026} also considered Fourier interferometry as a version of the qubit cycle-test with applications including multi-photon benchmarking. Finally Ref.\ \cite{Sanz:2026} also experimentally implemented the Fourier based method for measuring genuine indistinguishability and showed that it is sample-complexity efficient for all partition states, thereby settling in the affirmative the conjecture that our OBB results would hold in the more general case.

\section*{Acknowledgements}

R.~S.\ is supported by the Studienstiftung des deutschen Volkes e.\ V.\  as well as by the BMFTR (PhoQuant, QPIC-1). N.~S. is supported by the Eindhoven Hendrik Casimir Institute. S.~N.v.d.~H. is supported by a contribution
from the project of the National Growth Fund program Photon Delta. S.~M.~is supported by the Connecting Industries project `Building Einstein’s Dice'. J.~E. is supported by the BMFTR (PhoQuant, QPIC-1, PasQuops, QSolid, QuSol), Berlin Quantum, the Munich Quantum Valley, the Clusters of Excellence ML4Q and MATH+, the DFG (CRC 183 and SPP 2541), and the European Research Council (ERC AdG DebuQC). J.~J.~R.~is supported by the project
``At the Quantum Edge'' 
of the research program VIDI,
which is financed by the Dutch Research Council (NWO). N.~W.~has been supported by the BMFTR (PhoQuant, QPIC-1). We thank H.M Chrzanowski for useful discussions and  J.~Saied and Frank Somhorst for their insights on the output probabilities and sample-complexity of Fourier interferometers in the orthogonal bad-bit model.

\appendix
\section{Partial distinguishability calculations \label{partial_app}}

For completeness, in this appendix we show mathematically how the photon permutation ($J_\sigma$) and mode permutation ($P_\sigma$) operators are equivalent and also derive Eq.~(\ref{Shchesnovich1}) as a special case of the results of Ref.~\cite{Shchesnovich:2015}. We begin with the observation in Appendix A of \cite{Shchesnovich:2015} that for any two permutations $\sigma$ and $\tau$ of elements indexing some list $A$
\eqn{\prod_\alpha A_{\alpha,\tau\sigma(\alpha)} = \prod_\alpha A_{\sigma^{-1}(\alpha),\tau(\alpha)}.\label{permid}}
follows from the fact that a permutation is simply a bijection between two sets of indices and that a product of scalars is independent of the order of multiplication. Note that this identity is already enough to see that the photon permutation operator defined in Eq.~(\ref{Jsig}) can be rewritten
as
\eqn{J_\sigma  \prod_{j=1}^n  \adag_{\mathbf{d_n}(j),\kappa_j} \ee  \prod_{j=1}^n \adag_{\mathbf{d_n}(j),\kappa_{\sigma(j)}} \nn \\
\ee \prod_{j=1}^n \adag_{\mathbf{d_n}(\sigma^{-1}(j)),\kappa_{j}}}
Now for states with at most one photon per mode (i.e.,  where $\mathbf{d_n}$ has not repeated entries then the permutation of the external degree of freedom of any single photon can be realized by in external mode permutation operator defined in Eq.~(\ref{Psig}) such that
\eqn{ \prod_{j=1}^n \adag_{\mathbf{d_n}(\sigma^{-1}(j)),\kappa_{j}} \ee P_{\sigma^{-1}}  \prod_{j=1}^n  \adag_{\mathbf{d_n}(j),\kappa_j} }
so we see concretely that for single-occupation states there is a mode permutation operator $P$ for any photon permutation operator $J$ as claimed in the main text.

Turning to the derivation of Eq.~(\ref{Shchesnovich1}) we start with Eqs.\ (9) and (16) from \cite{Shchesnovich:2015} re-written in our notation (in 
Ref.\ \cite{Shchesnovich:2015}, $P_\sigma$ is used to indicate photon permutation instead of $J_\sigma$) as
\eqn{P(\mathbf{s}|\mathbf{n}) \ee \frac{1}{\mu(\mathbf{n})\mu(\mathbf{s})} \sum_{\sigma_1\sigma_2} \mathcal{J}(\sigma_1,\sigma_2) \nn \\
&\times&\prod_{j = 1}^n U^*_{\mathbf{d_n}(\sigma_1(j)),\mathbf{d_s}(j)} U_{\mathbf{d_n}(\sigma_2(j)),\mathbf{d_s}(j)}\nn}
where
\eqn{\mathcal{J}(\sigma_1,\sigma_2) = \tr \bk{\hat{\Gamma}_{l_1} \otimes \hat{\Gamma}_{l_2}\otimes\cdots \otimes\hat{\Gamma}_{l_n}  J_{\sigma_2}^\dag\rho J_{\sigma_1}} }
and all sums should be understood as running over all permutations $\mathcal{S_n}$.
The $\hat{\Gamma}$ quantities represent the spectral response of the photodetectors which in the original analysis are allowed to be asymmetric. Here, we will assume all detectors have a flat response so these can all be replaced by identity operators so that we have 
\begin{equation}\mathcal{J}(\sigma_1,\sigma_2) = \EV{J_{\sigma_1\sigma_2^{-1}}} 
\end{equation}
using $J^\dag_\sigma = J_{\sigma^{-1}}$. Thus, setting $\sigma = \sigma_1\sigma_2^{-1}$ and $\tau = \sigma_2^{-1}$ and we can write
\eqn{P(\mathbf{s}|\mathbf{n}) 
\ee \frac{1}{\mu(\mathbf{n})\mu(\mathbf{s})} \sum_{\sigma} \EV{J_\sigma} \nn \\
&\times& \sum_{\tau}\prod_{j = 1}^n U_{\mathbf{d_n}(\tau^{-1}(j)),\mathbf{d_s}(j)} \prod_{j = 1}^n U^*_{\mathbf{d_n}(\sigma\tau^{-1}(j)),\mathbf{d_s}(j)} \nn \\
\ee \frac{1}{\mu(\mathbf{n})\mu(\mathbf{s})} \sum_{\sigma} \EV{J_\sigma} \nn \\
&\times& \sum_{\tau}\prod_{j = 1}^n U_{\mathbf{d_n}(j),\mathbf{d_s}(\tau(j))} \prod_{j = 1}^n U^*_{\mathbf{d_n}(j),\mathbf{d_s}(\tau\sigma^{-1}(j))} \nn \\
\ee \frac{1}{\mu(\mathbf{n})\mu(\mathbf{s})} \sum_{\sigma} \EV{J_\sigma} \nn \\
&\times& \sum_{\tau}\prod_{j = 1}^n U_{\mathbf{d_n}(j),\mathbf{d_s}(\tau(j))}  U^*_{\mathbf{d_n}(\sigma(j)),\mathbf{d_s}(\tau(j))} \nn \\
\ee \frac{1}{\mu(\mathbf{n})\mu(\mathbf{s})} \sum_{\sigma} \EV{J_\sigma} \mathrm{perm} \bk{M  *M^*_{\sigma,\mathbb{I}}},}
where we have used Eq.~(\ref{permid}) in the second and third line along with the fact that $(\sigma\tau)^{-1} = \tau^{-1}\sigma^{-1}$ and in the last line we have used the definition of the permanent
\eqn{\mathrm{perm}(M) = \sum_{\sigma \in \mathcal{S}_n} \prod_{i=1}^mM_{i,\sigma(i)}}
and our definition the matrix $M$ from Eq.~(\ref{M}) given by
\eqn{M_{i,j} = U_{\mathbf{d_n}(i),\mathbf{d_s}(j)}.}

\section{Alternative definitions of LOQC metrics \label{alt_defs}}

Here we briefly show the relationship between the LOQC fidelity and trace distance studied here and alternative quantity discussed in Eq.~(\ref{alt_cert}).

\begin{theorem}[Relationship between LOQC metrics]
The two kinds of LOQC metric satisfy the relationship
\eqn{\tilde{F}_{\mathrm{LO}}(\rho,\sigma) &\geq& F_{\mathrm{LO}}(\rho,\sigma) \nn ,\\
D_{\mathrm{LO}}(\rho,\sigma) &\geq& \tilde{D}_{\mathrm{LO}}(\rho,\sigma).}
\end{theorem}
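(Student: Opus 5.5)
The proof rests on two facts. First, $\tilde{F}_{\mathrm{LO}}$ and $\tilde{D}_{\mathrm{LO}}$ are the classical fidelity and total variation distance of outcome distributions, optimized over a restricted family of measurements. Second, LOQC measurements cannot tell apart states that are LOQC equivalent. We combine these with the data-processing inequality for the fidelity and trace distance. Throughout, the second argument $\sigma$ is taken to be a target state in $\mathcal{C}_\mathrm{LO}$, as in Definition~\ref{BigDefinition}.

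\textbf{Step 1.} Fix any $\tau\sim\sigma$. By the definition of LOQC equivalence, every LOQC measurement $\{F_\mathbf{s}\}$, with $F_\mathbf{s}=U_\mathrm{LO}\ket{\mathbf{s}}\bra{\mathbf{s}}U_\mathrm{LO}^\dag$ (with the internal degrees of freedom summed over), satisfies $\mathrm{tr}(F_\mathbf{s}\tau)=\mathrm{tr}(F_\mathbf{s}\sigma)$ for all $\mathbf{s}$. Both $\sum_\mathbf{s}\sqrt{\mathrm{tr}(F_\mathbf{s}\rho)\mathrm{tr}(F_\mathbf{s}\sigma)}$ and $\frac12\sum_\mathbf{s}|\mathrm{tr}(F_\mathbf{s}\rho)-\mathrm{tr}(F_\mathbf{s}\sigma)|$ depend on $\sigma$ only through these probabilities. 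We may therefore replace $\sigma$ by $\tau$ inside the optimization defining $\tilde{F}_{\mathrm{LO}}$ and $\tilde{D}_{\mathrm{LO}}$ without changing either quantity.

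\textbf{Step 2.} For a fixed POVM $\{F_\mathbf{s}\}$, the measurement channel is CPTP. The standard monotonicity of fidelity (for instance the Fuchs--Caves characterization, $F(\rho,\tau)=\min_{\{E\}}(\sum\sqrt{\mathrm{tr}E\rho\,\mathrm{tr}E\tau})^2$) then gives
\begin{equation}
\Big(\sum_\mathbf{s}\sqrt{\mathrm{tr}(F_\mathbf{s}\rho)\mathrm{tr}(F_\mathbf{s}\tau)}\Big)^2\geq F(\rho,\tau).
\end{equation}
Likewise, Helstrom's bound gives $\frac12\sum_\mathbf{s}|\mathrm{tr}(F_\mathbf{s}\rho)-\mathrm{tr}(F_\mathbf{s}\tau)|\leq D(\rho,\tau)$. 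Taking the minimum (respectively maximum) over LOQC POVMs on the left only restricts the set of measurements, so the inequalities survive:
\begin{equation}
\tilde{F}_{\mathrm{LO}}(\rho,\sigma)\geq F(\rho,\tau),\qquad \tilde{D}_{\mathrm{LO}}(\rho,\sigma)\leq D(\rho,\tau),
\end{equation}
and these hold for every $\tau\sim\sigma$.

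\textbf{Step 3.} The left-hand sides do not depend on $\tau$. We now maximize over $\tau\sim\sigma$ on the right of the fidelity inequality and minimize over $\tau\sim\sigma$ on the right of the trace-distance inequality. By Eqs.~(\ref{fiddef1}) and (\ref{trdef1}) this yields $\tilde{F}_{\mathrm{LO}}(\rho,\sigma)\geq F_{\mathrm{LO}}(\rho,\sigma)$ and $D_{\mathrm{LO}}(\rho,\sigma)\geq\tilde{D}_{\mathrm{LO}}(\rho,\sigma)$.

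The only delicate point is Step~1. There we must make sure the LOQC measurement operators are exactly those under which $\sim$ was defined, namely external-mode unitaries followed by photon counting that is insensitive to the internal degrees of freedom. The invariance then follows directly from the definition of LOQC equivalence, equivalently from Eq.~(\ref{Shchesnovich1}). If $\sigma$ were an arbitrary state rather than a target, we would instead take the extremum over $\tau\sim\sigma$ in the same way, and the argument would be unchanged.
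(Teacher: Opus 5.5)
Your proof is correct and follows essentially the same route as the paper. For each $\tau\sim\sigma$ you bound the restricted-measurement quantity by the unrestricted one (monotonicity for the fidelity, Helstrom for the trace distance), use LOQC equivalence to show that $\tilde{F}_{\mathrm{LO}}$ and $\tilde{D}_{\mathrm{LO}}$ are unchanged when $\sigma$ is replaced by $\tau$, and then optimize over $\tau$; you also write out the fidelity case that the paper leaves as ``entirely analogous,'' and you correctly use the minimum in $D_{\mathrm{LO}}$, where the paper's proof has a typo writing a maximum.
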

\emph{Proof}:
For any two quantum states $\rho,\sigma$ the trace distances satisfies
\eqn{D(\rho,\sigma) \ee \max_{\{ \mathcal{E}_i\}} \frac{1}{2} \sum_i \left | \mathrm{tr}\left (\mathcal{E}_i \rho\right ) - \mathrm{tr}\left (\mathcal{E}_i \sigma\right )\right| \nn \\
&\geq& \max_{\{ F_\mathbf{s}\}} \frac{1}{2} \sum_\mathbf{s} \left | \mathrm{tr}\left (F_\mathbf{s} \rho\right ) - \mathrm{tr}\left (F_\mathbf{s} \sigma\right )\right|}
where the inequality holds since $\{\mathcal{E}_i\}$ is an arbitrary POVM which strictly includes the LOQC subset $\{F_\mathbf{s}\}$. Therefore it holds that
\eqn{D_\mathrm{LO}(\rho,\rho_t) \ee\max_{\tau \sim \rho_t} D(\rho, \tau) \nn \\ &\geq&  \max_{\tau \sim \rho_t} \tilde{D}_\mathrm{LO}(\rho, \tau) = \tilde{D}_\mathrm{LO}(\rho, \rho_t)}
where the last equality holds since, by definition, $\tilde{D}_\mathrm{LO}(\rho, \tau) = \tilde{D}_\mathrm{LO}(\rho, \rho_t)$ whenever $\tau\sim\rho_t$. An entirely analogous argument leads to the corresponding result for the fidelities.

\section{Relation between generalized indistinguishabilities and partition weights \label{app:part_weight}}

For an $n$-photon state, the \emph{generalized indistinguishabilities} are defined as expectation values of permutation operators $\hat{P}_{\sigma}$ acting on the modes
\begin{equation}
    M_\sigma = \braket{\hat{P}_\sigma}_\rho =\text{Tr}(\rho\hat{P}_\sigma), \enspace \sigma \in S_n.
\end{equation}
When a state has a \emph{partition representation} the generalized indistinguishabilities $M_\sigma$ depend only on the cycle structure of $\sigma$. If two permutations $\sigma$ and $\tau$ have the same cycle structure, denoted as $(\sigma)$ and $(\tau)$ respectively, then $M_\sigma=M_\tau$ and we can consistently denote the GI values as $M_{(\sigma)}$.\\

A \emph{distinguishability partition} $(\Lambda)$ is a set partition of $\{1, \dots,  n\}$ into blocks $\Lambda_1, \Lambda_2,\dots $, where each block corresponds to photons that are mutually indistinguishable in their internal degrees of freedom, while photons in different blocks are in orthogonal internal states. There is natural partial order on partitions by how `coarse' they are. Concretely, writing $(\Lambda)\succeq (\Pi)$ means that $(\Lambda)$ is coarser than $(\Pi)$, i.e., every block of $(\Pi)$ is contained in some block of $(\Lambda)$. Intuitively, $(\Lambda)$ allows at least as much indistinguishability as $(\Pi)$.

For a partition state vector $\ket{\psi(\Lambda)}$, 
Ref.\ \cite{annoni2025} show that the GI for a permutation $\sigma$ is simply a 0 or 1 condition,
\begin{equation}
    M_\sigma(\psi(\Lambda))=
\begin{cases}
1, \enspace \text{if } (\Lambda) \succeq (\sigma),\\
0, \enspace \text{otherwise}.
\end{cases}
\end{equation}
This means that $P_\sigma$ has expectation values $1$ iff $\sigma$ never moves an element between the distinguishability blocks $\Lambda_i$. Equivalently, each cycle of $\sigma$ must lie entirely inside some block of $(\Lambda)$. 

A state $\rho$ has a (possibly quasi-) partition representation if it is LOQC equivalent to a mixture of partition states
\begin{equation}
    \rho \sim \sum_{(\Lambda)} p(\Lambda) \ket{\psi(\Lambda)}\bra{\psi(\Lambda)}.
\end{equation}
By linearity of the expectation values, the indistinguishabilities of the mixture are
\begin{equation}
    M_\sigma = \sum_{(\Lambda)} p(\Lambda) M_\sigma(\psi(\Lambda)) = \sum_{(\Lambda)\succeq (\sigma)}p(\Lambda).
\end{equation}
In other words, if we fix a cycle-partition $(\sigma)$ then $M_{(\sigma)}$ equals to the total weight of all partition configurations $(\Lambda)$ that are coarser than $(\sigma)$. We can define the matrix
\begin{equation}
    R_{(\sigma),(\Lambda)}=
\begin{cases}
1, \enspace \text{if } (\sigma) \succeq (\Lambda),\\
0, \enspace \text{otherwise,}
\end{cases}
\end{equation}
and write the indistinguishabilities in the following matrix 
form as a function of the partition weights
\begin{equation}
    M_\sigma = \sum_{(\Lambda)} R_{(\sigma),(\Lambda)}p(\Lambda) .
\end{equation}
Since $R$ is invertible, the linear system above can be inverted to recover $p(\Lambda)$ from the measured $M_{(\sigma)}$. The inverse $R^{-1}$ is encoded by the \emph{Möbius function} $\mu$ of the partition, yielding the following inversion formula (see Appendix B of Ref.\ \cite{annoni2025})
\begin{equation}
    p(\Lambda) = \sum_{(\sigma)} \mu_{(\Lambda),(\sigma)}M(\sigma) .
\end{equation}
In the case of $n=3$ photons we have five partitions $(1,2,3), (1,2)(3), (1,3)(2),(2,3)(1),(1)(2)(3)$ and get
\[
R=
\begin{pmatrix}
1 & 0 & 0 & 1 & 0 \\
0 & 1 & 0 & 1 & 0 \\
0 & 0 & 1 & 1 & 0 \\
0 & 0 & 0 & 1 & 0 \\
1 & 1 & 1 & 1 & 1
\end{pmatrix}
\]
and 
\[
\mu=
\begin{pmatrix}
1 & 0 & 0 & -1 & 0 \\
0 & 1 & 0 & -1 & 0 \\
0 & 0 & 1 & -1 & 0 \\
0 & 0 & 0 & 1 & 0 \\
-1 & -1 & -1 & 2 & 1
\end{pmatrix}.
\]
That gives us
\begin{align}
p&\equiv 
\begin{pmatrix}
p(\Lambda_{(1,2)(3)}) \\
p(\Lambda_{(1,3)(2)}) \\
p(\Lambda_{(2,3)(1)}) \\
p(\Lambda_{(1,2,3)}) \\
p(\Lambda_{(1)(2)(3)})
\end{pmatrix}\\
&=
\begin{pmatrix}
\EV{P_{1,2}}-\EV{P_{1,2,3}} \\
\EV{P_{1,3}}-\EV{P_{1,2,3}} \\
\EV{P_{2,3}}-\EV{P_{1,2,3}} \\
\EV{P_{1,2,3}} \\
1-\EV{P_{1,2}}-\EV{P_{1,3}}-\EV{P_{2,3}}+2\,\EV{P_{1,2,3}}
\end{pmatrix}.
\nonumber
\end{align}
Note that, in general, it is not guaranteed that all the probability weights $p(\Lambda)$ are positive. For example, if we use a time-delay model for the distinguishability of the photons where photon pair one-two as well as one-three are time $\tau$ apart, then photons two-three are $2\tau$ apart (see Fig.~(\ref{fig:comparisionfigure})). For this distinguishability model we get
\[
p=
\begin{pmatrix}
- e^{-\frac{3\tau^{2}}{2}} + e^{-\frac{\tau^{2}}{2}} \\
- e^{-\frac{3\tau^{2}}{2}} + e^{-\frac{\tau^{2}}{2}} \\
e^{-2\tau^{2}} - e^{-\frac{3\tau^{2}}{2}} \\
e^{-\frac{3\tau^{2}}{2}} \\
1 - e^{-2\tau^{2}} + 2 e^{-\frac{3\tau^{2}}{2}} - 2 e^{-\frac{\tau^{2}}{2}}
\end{pmatrix}
\]
where the third entry
\[
e^{-2\tau^2}-e^{-\frac{3\tau^2}{2}}
= e^{-2\tau^2}\left(1 - e^{\frac{\tau^2}{2}}\right)
\le 0
\]
is non-positive for all $\tau \neq 0$.

\section{Twirling covariance of the Fourier transform witness}\label{app:TwirlingCovariance}
In this section we show why the observable $p_{\text{f}}$ in the Fourier transform witness is unaffected by permutation twirling. 
More concretely, we want to show whether
\begin{equation}
\boxed{
    p_{\text{f}}(\rho) \stackrel{?}{=} p_{\text{f}}(\mathcal{P}(\rho)) \enspace \forall \rho.
    }
\end{equation}
Our argument has the following structure. We begin by noting that $p_{\text{f}}$ is a linear function of $\braket{J_{\sigma}}_{\rho}$. We then assert under which condition such linear function is invariant under permutation twirling $\mathcal{P}(\cdot)$ and finally we show that this condition is satisfied in case of the Fourier interferometer due to its symmetric properties.

\paragraph{General linear form of $p_{\text{f}}$.}
For any fixed interferometer $U$ and any output pattern $s$, we get the output probability
\begin{equation}
    p(s|\rho)=\frac{1}{\mu(n)\mu(s)}\sum_{\sigma \in S_n}\braket{J_{\sigma}}_{\rho} K_{\sigma}(U,s),
\end{equation}
where $K_{\sigma}=\text{perm}(M\odot M^*_{\sigma,\mathbb{I}})$ ($M$ as defined in Eq.~(\ref{M})) are the permanent weights defined in Eq.~(\ref{Shchesnovich1}) for each path that contributes to the total transition probability. Now let us consider the sum over all forbidden outcomes $\mathcal{S}_{\text{forb}}$ that would have zero probability for a perfectly indistinguishable input,
\begin{equation}
    p_{\text{f}}(\rho)=\sum_{s\in \mathcal{S}_{\text{forb}}} p(s|\rho)=\sum_{\sigma \in S_n} \braket{J_{\sigma}}_{\rho} C_{\sigma}
\end{equation}
with
\begin{equation}
    C_{\sigma} := \sum_{s\in S_{\text{forb}}} \frac{1}{\mu(n)\mu(s)}K_{\sigma}(U_F,s).
\end{equation}
We note that $p_{\text{f}}$ is a linear functional of the vector $\braket{J_{\sigma}}_{\rho}$.\\

\paragraph{What permutation twirling does to $\braket{J_{\sigma}}$:}
A twirled state has the  form
\begin{equation}
    \rho \mapsto \rho'\equiv\mathcal{P}(\rho)=\frac{1}{n!}\sum_{\tau \in S_n} J_{\tau}\rho J^{\dagger}_{\tau}.
\end{equation}
This translates to the expectation value of $J_{\sigma}$ as
\begin{align}
 \nonumber
    \braket{J_{\sigma}}_{\rho'}&= \text{tr}(\rho'J_{\sigma}) \\
    \nonumber
    &= \text{tr}\bigg(\frac{1}{n!}\sum_{\tau \in S_n}J_{\tau}\rho J_{\tau}^{\dagger}J_{\sigma}\bigg) 
    \\
     \nonumber
    &= \frac{1}{n!}\sum_{\tau \in S_n} \text{tr}\bigg( J_{\tau}\rho J_{\tau}^{\dagger}J_{\sigma}\bigg) \\
     \nonumber
    &= \frac{1}{n!}\sum_{\tau \in S_n} \text{tr}\bigg( \rho J_{\tau}^{\dagger}  J_{\sigma}J_{\tau} \bigg) 
    \\
     \nonumber
    &= \frac{1}{n!}\sum_{\tau \in S_n} \text{tr}\bigg( \rho J_{\tau^{-1}\sigma \tau} \bigg) \\
    &=\frac{1}{n!}\sum_{\tau \in S_n} \braket{J_{\tau^{-1}\sigma \tau}}_{\rho}.\label{twirledexpectvalue}
\end{align}
We see that twirling conjugates every $J_{\sigma}$ by all permutations and averages the result (the conjugation of $A$ refers to the operation $A \mapsto U^{\dagger}AU$, where $U$ is a unitary operation. Thus, conjugating $A$ means expressing it in a relabeled basis). The set of all permutations of the form $\tau^{-1}\sigma \tau$ is called the conjugacy class of $\sigma$. Since all elements in a conjugacy class are identical up to a change in basis (in our case they consist of all permutations that look the same up to a relabeling of the points) they share the same cyclic structure (integer partition). For example, in the case of three photons we have three conjugacy classes or cyclic structures,
\begin{itemize}
    \item \{(1)(2)(3)\} 
    \item \{(1,2)(3),(1,3)(2), (2,3)(1)\} - all 2-cycles
    \item \{(1,2,3)(1,3,2)\} - all 3-cycles.
\end{itemize}
Twirling (i.e., performing the group average) averages the $\braket{J_{\sigma}}$ over all permutations with the same cycle structure. It is in this sense that twirling enforces conjugacy-class invariance (cycle-type invariance). This is a coarse-graining: it identifies all permutations with the same cycle type. The orbit invariance discussed earlier and in \cite{annoni2025} is a sub-symmetry of cycle-type invariance.\\

\paragraph{When is a linear functional invariant under twirling?}
Consider any linear functional of $\braket{J_{\sigma}}$
\begin{equation}
f(\rho)=\sum_{\sigma}C_{\sigma}\braket{J_{\sigma}}_{\rho}.
\end{equation}
For the twirled state, this takes the  form
\begin{align}
f(\rho')&=\sum_{\sigma}C_{\sigma}\braket{J_{\sigma}}_{\rho'}\\
\nonumber
    &= \sum_{\sigma} C_{\sigma} \frac{1}{n!}\sum_{\tau}\braket{J_{\tau^{-1}\sigma \tau}}_{\rho}\\
    \nonumber
    &=\frac{1}{n!}\sum_{\sigma}\sum_{\tau}C_{\sigma}\braket{J_{\tau^{-1}\sigma \tau}}_{\rho},
    \nonumber
\end{align}
where we have used the twirled expectation value in Eq.~(\ref{twirledexpectvalue}). If we now relabel $\sigma'=\tau^{-1}\sigma \tau$ this becomes
\begin{equation}
    f(\rho')= \frac{1}{n!}\sum_{\sigma'}\sum_{\tau}C_{\tau\sigma'\tau^{-1}}\braket{J_{\sigma'}}_{\rho}.
\end{equation}
We can thus state that $f(\rho)$ is invariant under twirling iff the condition 
\begin{equation}
\boxed{
    \frac{1}{n!}\sum_{\tau}C_{\tau\sigma'\tau^{-1}} \stackrel{!}{=} \frac{1}{n!}\sum_{\tau}C_{\sigma'} \enspace \forall \sigma'
    }
\end{equation} 
is satisfied. To complete the proof, it only remains to show that $C_{\sigma}$ is a \textit{class function} on the permutation group $S_n$,  i.e.,  $C_{\tau\sigma\tau^{-1}}=C_\sigma$, in the case where the interferometer is a Fourier transformation.\\

\paragraph{Why is $C_{\sigma}$ a class function on $S_n$.}
Consider the discrete Fourier transform and its associated set of suppressed outcomes $\mathcal{S}_{\text{forb}}$.\\

\begin{lemma}[Fourier-twirling equivalence]
The coefficients $C_{\sigma}$ in the Shchesnovich expansion $p_{\text{f}}(\rho)=\sum_{\sigma}C_{\sigma}\braket{J_{\sigma}}_{\rho}$ satisfy \eqn{C_{\tau\sigma\tau^{-1}}=C_{\sigma} \enspace} for all $\sigma, \tau \in S_n$. Hence, $C_{\sigma}$ depends only on the cycle structure of $\sigma$, and
\begin{equation}
p_{\text{f}}(\rho)=p_{\text{f}}(\mathcal{P}(\rho))
\end{equation}
for all density operators $\rho$.
\end{lemma}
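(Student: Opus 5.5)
The plan is to compute the coefficients $C_\sigma$ in closed form for the discrete Fourier interferometer $U_F$, $(U_F)_{j,l}=\omega^{jl}/\sqrt{n}$ with $\omega=e^{2\pi i/n}$, acting on the single-occupation input $\mathbf{d_n}=[1,\dots,n]$ with $m=n$. I would then read off directly how $C_\sigma$ behaves under conjugation. I expect the computation to show that the claim holds for $n\le 3$ (in particular for the experimentally relevant case $n=3$) but not for general $n$.

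\emph{Step 1: pass to the allowed set.} Since $\sum_{\mathbf{s}}p(\mathbf{s}|\rho)=1$, we have $p_{\text{f}}=1-p_{\text{a}}$, where $p_{\text{a}}$ is the probability of the allowed outcomes. By the Tichy suppression law, an output $\mathbf{s}$ is allowed iff $Q(\mathbf{s}):=\sum_j \mathbf{d_s}(j)\equiv 0 \pmod n$. It therefore suffices to show that the coefficients $A_\sigma$ of $p_{\text{a}}=\sum_\sigma A_\sigma\langle J_\sigma\rangle_\rho$ form a class function.

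\emph{Step 2: rewrite the sum over outputs.} I insert the indicator $\mathbb{1}[Q\equiv0]=\frac1n\sum_{k=0}^{n-1}\omega^{kQ(\mathbf{s})}$. I then convert the sum over occupation patterns weighted by $1/\mu(\mathbf{s})$ into a sum over ordered mode-assignment lists $\mathbf{d}\in[n]^n$ weighted by $1/n!$. In the expansion of $\mathrm{perm}(M*M^*_{\sigma,\mathbb{I}})$ as $\sum_{\tau'}\prod_j U_{j,\mathbf{d}(\tau'(j))}U^*_{\sigma(j),\mathbf{d}(\tau'(j))}$, the phase $\omega^{kQ}$ is symmetric in the list. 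Hence the $\tau'$ sum contributes a factor $n!$, and the remaining sum factorizes over photons:
\begin{equation}
A_\sigma=\frac1n\sum_{k=0}^{n-1}\prod_{j=1}^n\Big(\sum_{l=1}^n (U_F)_{j,l}(U_F)^*_{\sigma(j),l}\,\omega^{kl}\Big)=\frac1n\sum_{k=0}^{n-1}\prod_{j=1}^n\delta_{\sigma(j),\,j+k}.
\end{equation}
Here the index $j+k$ is taken mod $n$; the sign convention of the shift depends on the DFT convention and is immaterial. Thus $A_\sigma=\frac1n$ if $\sigma=c^k$ for the cyclic shift $c$, and $A_\sigma=0$ otherwise. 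This gives $p_{\text{f}}=1-\frac1n\sum_k\langle J_{c^k}\rangle_\rho$, the cycle-test form of the Fourier witness.

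\emph{Step 3: check class invariance.} By the argument preceding the lemma, the twirling invariance $p_{\text{f}}(\rho)=p_{\text{f}}(\mathcal{P}(\rho))$ for all $\rho$ holds iff $A_\sigma$, equivalently $C_\sigma$, is constant on conjugacy classes. For $n=3$, the powers of $c$ are $\{e,(123),(132)\}$, which is exactly the union of the identity class and the full 3-cycle class. Both classes receive weight $1/3$, and the transpositions receive weight $0$, so the lemma follows. For $n=2$ it is trivial. For $n\ge4$ this step is the main obstacle: the set $\{c^k\}$ contains only $n-1$ nontrivial elements, whereas, e.g., the class of $n$-cycles has $(n-1)!$ elements. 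So $A_\sigma$ is not a class function, and some states with coherences across relabelings would violate the identity.

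I would first re-examine two possible loopholes. One is whether the paper's $\mathcal{S}_{\text{forb}}$ is meant as the full complement of the allowed set. The other is whether only the weaker averaged statement $\sum_\sigma A_\sigma\langle J_\sigma\rangle_{\mathcal{P}(\rho)}=\sum_\sigma \bar A_\sigma\langle J_\sigma\rangle_\rho$ is intended, with $\bar A$ the class average. The latter statement is always true. Failing either loophole, I would state the lemma for $n=3$ (and prime-cycle-structure restricted states) and replace the general claim with the identity $p_{\text{f}}=1-\frac1n\sum_k\langle J_{c^k}\rangle_\rho$.
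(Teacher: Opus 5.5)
Your computation is correct, and so is your conclusion: the lemma holds for $n\le 3$ and fails for every $n\ge 4$. The gap is therefore in the paper's proof, not in yours.

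The paper argues abstractly. It combines (i) closure of $\mathcal{S}_{\text{forb}}$ under cyclic shifts of the output pattern with (ii) the claim that for every conjugating $\tau\in S_n$ there is a cyclic shift $\zeta$ with $K_{\tau^{-1}\sigma\tau}(U_F,s)=K_\sigma(U_F,\zeta\cdot s)$. Step (ii) is false. Apart from the counting mismatch ($n$ shifts versus $n!$ relabelings), shifting every output mode by one multiplies each term of $K_\sigma(U_F,s)$ by $\prod_j\omega^{j-\sigma(j)}=1$ in your convention. So a cyclic output shift leaves $K_\sigma$ unchanged and implements no conjugation at all.

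Your closed form identifies the true symmetry. With $C_e=1-\tfrac1n$, $C_{c^k}=-\tfrac1n$ for $k=1,\dots,n-1$, and $C_\sigma=0$ otherwise, you get $C_{\tau\sigma\tau^{-1}}=C_\sigma$ for all $\sigma$ iff $\tau$ normalizes $\langle c\rangle$. That means $\tau$ is an affine relabeling $j\mapsto aj+b$ with $\gcd(a,n)=1$, a group of order $n\varphi(n)$, which equals $n!$ only for $n\le 3$. Inserting the character sum for the zero-transmission indicator and factorizing over photons is thus not merely an alternative route: it is what gives the correct range of validity. For $n=3$, the experimentally relevant case, it is a complete proof and cleaner than the paper's.

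Three points would sharpen your write-up.

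\emph{The failure is stronger than you say.} You attribute it to states with coherences across relabelings, but ordinary partition states already violate the conclusion. For $n=4$ and $c=(1234)$, the partition state with blocks $\{1,3\},\{2,4\}$ has $\langle J_{c^2}\rangle=1$ and hence $p_{\text{f}}=1/2$. The two other pairings give $p_{\text{f}}=3/4$. Twirling the first yields the uniform mixture of the three pairings, so $p_{\text{f}}(\mathcal{P}(\rho))=2/3\neq 1/2$. This explicit example also removes the need for the `only if' direction of your ``iff'' in Step~3, which silently uses linear independence of the $J_\sigma$ (true for internal dimension at least $n$).

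\emph{Neither loophole rescues the statement.} Appendix~\ref{app:OBB} defines valid patterns by the zero-transmission law. At $n=4$, all ten valid patterns have nonzero probability for $|1,1,1,1\rangle$ (namely $3/32$, $1/8$ or $1/16$), so the counterexample does not depend on how $\mathcal{S}_{\text{forb}}$ is defined. The class-averaged identity is true but is not the lemma.

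\emph{Your ``prime'' restriction can be made precise.} For prime $n$, every nontrivial $c^k$ is an $n$-cycle. States with a (possibly non-positive) partition representation have $\langle J_\sigma\rangle=p_{(1,\dots,n)}$ on all $n$-cycles, so $p_{\text{f}}(\rho)=p_{\text{f}}(\mathcal{P}(\rho))$ does hold for them. It does not hold for all $\rho$, and the coefficient identity $C_{\tau\sigma\tau^{-1}}=C_\sigma$ fails for every $n\ge 4$.

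Finally, the last claim of Thm.~\ref{thm_wit} and the `None' entry for the Fourier witness in Table~\ref{tab:witness_comparison} rest on the general lemma. Your correction therefore matters beyond this appendix.
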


\begin{proof}
We begin by restating the definition of $C_\sigma(U_F,s)$ in terms of the permanent weight (or Fourier kernel) $K_\sigma(U_F,s)$
as
\begin{equation}
    C_{\sigma}(U_F,s) := \sum_{s\in S_{\text{forb}}} \frac{1}{\mu(n)\mu(s)}K_{\sigma}(U_F,s).
\end{equation}
Furthermore, in order to show that $C_{\tau\sigma\tau^{-1}}(U_F,s)=C_{\sigma}(U_F,s)$ we need the following two ingredients: 
\begin{enumerate}
    \item The set of suppressed outputs $\mathcal{S}_{\text{forb}}$ is closed under cyclic relabeling of the output modes $\zeta$ (see Ref.\ \cite{Tichy:2012gt}), i.e.,  \begin{equation}\label{closureSforb}
        s\in S_{\text{forb}} \Rightarrow s' \equiv \zeta \cdot s \in S_{\text{forb}} \enspace \forall \zeta \in 
        \mathbb{Z}_m.
    \end{equation}

    \item The image of conjugations $\tau\sigma\tau^{-1}$ spans all the permutations with the same cycle structure as $\sigma$ (see  Ref.\ \cite{annoni2025}). Another way of moving between elements within the same cycle structure is via cyclic shifts, $\zeta\in\mathbb{Z}_m$. It follows that for every (conjugating) permutation $\tau \in S_n$ there exists a corresponding cyclic shift $\zeta\in\mathbb{Z}_m$ of the output pattern $s \mapsto s'=\zeta \cdot s \in \mathcal{S}_{\text{forb}}$. The Fourier Kernel $K_\sigma$ thus transforms as follows under conjugations
    \begin{align}
    \nonumber
        K_{\sigma}(U_F,s) &\to
        K_{\tau^{-1}\sigma \tau}(U_F,s)\\
        &= K_{\sigma}(U_F,s'),\label{Kconjcovariance}
    \end{align}
    and is therefore \emph{invariant} under the simultaneous action of conjugation and cyclic shifts.
\end{enumerate}
Since $C_\sigma$ contains a sum over the set of forbidden patterns $S_\text{forb}$, it is invariant under cyclic shifts. This is because, as we have seen, cyclic shifts act merely as a re-parametrization of the set of forbidden outputs, and the sum over all of its elements is invariant under any re-parametrization. Given the two statements above, we can write
\begin{align}
C_{\tau\sigma\tau^{-1}} &= \sum_{s\in S_{\text{forb}}}\frac{1}{\mu(n)\mu(s)} K_{\tau^{-1}\sigma \tau}(U_F,s)\nonumber\\
    \nonumber
    &= \sum_{s'\in S_{\text{forb}}}\frac{1}{\mu(n)\mu(s')} K_{\sigma}(U_F,s')\\
    &=C_{\sigma} \label{classfunctionproof},
\end{align}
where in the second line we have used Eq.~(\ref{Kconjcovariance}), the invariance property of the Kernel, and in the third line we have employed  Eq.~(\ref{closureSforb}), the closure of the set of forbidden outputs under cyclic shifts. This completes the proof.
\end{proof}

\section{Probabilities of valid output patterns with OBB distinguishability}
\label{app:OBB}

In this appendix, we show the calculation of obtaining the probability of observing valid output patterns for a given $c_n$ in the \emph{orthogonal-bad-bit} (OBB) model of partial photon distinguishability. The purpose of this note is to quantify how distinguishability affects the probability of observing many-particle interference suppression laws in a Fourier interferometer. Throughout, we consider only distinguishability errors.

\subsection{Setup and zero-transmission law}
Assume that we input $n$ single photons into an $n$-mode Fourier interferometer, perform perfect photon-number-resolving detection on all $n$ output modes, and calculate the probability of obtaining a \emph{valid} output pattern, i.e., one satisfying the \emph{zero-transmission law} (ZTL). More precisely, a Fock measurement pattern $(s_0,\ldots,s_{n-1})$ is valid if
\begin{equation}
    \sum_{i=0}^{n-1} i s_i \equiv 0 \pmod n .
\end{equation}
Here, $s_i$ is the number of photons detected in mode $i$.
We will need the following elementary property of the ZTL.

\begin{lemma}[Existence of unique mode]
Let $(t_0,\ldots,t_{n-1})$ be a pattern of $n-1$ photons in $n$ modes. There is a unique mode in which a photon can be added to obtain a valid $n$-photon pattern.
\end{lemma}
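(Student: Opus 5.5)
The plan is to reduce the statement to a congruence. Adding a photon in mode $j$ to $(t_0,\ldots,t_{n-1})$ gives the $n$-photon pattern $s$ with $s_i = t_i + \delta_{ij}$. Its weighted mode sum is therefore
\begin{equation}
\sum_{i=0}^{n-1} i s_i = \sum_{i=0}^{n-1} i t_i + j .
\end{equation}
Define $T := \sum_{i=0}^{n-1} i t_i \pmod n$, which depends only on the fixed $(n-1)$-photon pattern. The validity condition of the ZTL then reads $T + j \equiv 0 \pmod n$.

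The key steps are as follows.

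\emph{Existence.} Take $j \equiv -T \pmod n$, chosen as the representative in $\{0,\ldots,n-1\}$. This is a legitimate mode label, and adding a photon there produces a pattern with total photon number $n$ and weighted sum $\equiv 0 \pmod n$, so the pattern is valid.

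\emph{Uniqueness.} Suppose $j,j' \in \{0,\ldots,n-1\}$ both yield valid patterns. Then $T+j \equiv T+j' \equiv 0 \pmod n$, hence $j \equiv j' \pmod n$. Since both lie in a complete residue system, $j=j'$.

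No step here is a real obstacle; the only care needed is bookkeeping. Modes are labelled $0,\ldots,n-1$, matching the residues mod $n$. Adding a photon changes only one occupation number, so the weighted sum shifts by exactly the mode label. Existing occupations (bunched photons) do not matter, because the shift $+j$ is independent of $t_j$.

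I will also note the intended use in the OBB analysis. If $n-1$ indistinguishable photons produce some $(n-1)$-photon pattern and the orthogonal photon lands uniformly at random, then validity occurs for exactly one of the $n$ landing modes. This is what later gives a forbidden-event probability of $(n-1)/n$ for the bad-bit sector.
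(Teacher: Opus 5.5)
Your proof is correct and follows the same route as the paper: both reduce validity to the congruence $j + \sum_i i\,t_i \equiv 0 \pmod n$ and conclude that this fixes a unique $j \in \{0,\ldots,n-1\}$. You additionally spell out the existence step and note that the shift is independent of $t_j$, which the paper leaves implicit.
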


\begin{proof}
If a photon is added in mode $j$, the validity condition requires
\begin{equation}
    j + \sum_{i=0}^{n-1} i t_i \equiv 0 \pmod n ,
\end{equation}
or equivalently
\begin{equation}
    j \equiv -\sum_{i=0}^{n-1} i t_i \pmod n .
\end{equation}
Since $j \in \{0,\ldots,n-1\}$, this uniquely specifies the value of $j$ required to obtain a valid $n$-photon pattern.
\end{proof}

\subsection{Single fully distinguishable photon}

We introduce notation for internal degrees of freedom. Let $\{\xi_j\}$ be an orthonormal basis for the space of internal states, and let $a_i^\dagger[\xi_j]$ denote the creation operator for a photon in spatial mode $i$ with internal state $\xi_j$.
We will consider state vectors $\ket{\psi}$ involving both internal and external degrees of freedom. By the \emph{external state} corresponding to $\ket{\psi}$ we mean the state obtained by tracing out the internal modes.

\begin{theorem}[Probability of obtaining a valid pattern from Fourier interferometry]
Let $\ket{\psi}$ be the  input state vector of $n$ photons in $n$ modes
with the photon in mode $r$ orthogonal to all the others,
\begin{equation}
    \ket{\psi}=a_0^\dagger[\xi_{i_0}]\cdots a_{n-1}^\dagger[\xi_{i_{n-1}}]\ket{\vec{0}},
\end{equation}
where for some $r$ we have $\langle \xi_{i_r} | \xi_{i_j} \rangle = 0$ for $j \neq r$. Then the probability of obtaining a valid pattern from the Fourier interferometer is exactly $1/n$.
\end{theorem}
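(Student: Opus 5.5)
The plan is to show that the photon in mode $r$ propagates through the Fourier interferometer independently of the other $n-1$ photons. The output pattern is then the sum of an $(n-1)$-photon pattern $t$ and a single extra photon in some mode $j$. The position $j$ is uniformly distributed over the $n$ output modes, independently of $t$. The Lemma (existence of unique mode) then finishes the argument.

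\emph{Step 1: factorization of the output state.} Write $\ket{\psi} = a_r^\dagger[\xi_{i_r}]\,\ket{\psi'}$, where $\ket{\psi'} = \prod_{k\neq r} a_k^\dagger[\xi_{i_k}]\ket{\vec 0}$ contains the other $n-1$ photons. Their internal states may be arbitrary and need not be orthogonal to each other. Under the transformation $a^\dagger_{i,\kappa}\mapsto\sum_j U_{i,j} b^\dagger_{j,\kappa}$ with $U=U_F$ this becomes
\begin{equation}
\hat V_{U_F}\ket{\psi} = \Big(\sum_{j} (U_F)_{r,j}\, b_j^\dagger[\xi_{i_r}]\Big)\,\hat V_{U_F}\ket{\psi'} .
\end{equation}
Since $\xi_{i_r}$ is orthogonal to every internal state appearing in $\ket{\psi'}$, we can complete $\xi_{i_r}$ to an orthonormal internal basis in which $\ket{\psi'}$ has no component along $\xi_{i_r}$. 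The Fock space then factorizes as (internal mode $\xi_{i_r}$) $\otimes$ (orthogonal complement). In this decomposition the output state is a product: one photon in internal mode $\xi_{i_r}$, spread over the external modes with amplitudes $(U_F)_{r,j}$, times the evolved state $\hat V_{U_F}\ket{\psi'}$ on the complement.

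\emph{Step 2: the detection POVM factorizes accordingly.} The element $\Pi(\mathbf s)$ sums over all internal configurations, so it can be written as a sum over splittings $\mathbf s = \mathbf t + \mathbf e_j$. Here $\mathbf e_j$ is the occupation of the $\xi_{i_r}$ sector (exactly one photon, in mode $j$), and $\mathbf t$ is the occupation of the complement sector. Applying this to the product state from Step 1 gives
\begin{equation}
p(\mathbf s) = \sum_{j,\mathbf t:\ \mathbf t+\mathbf e_j=\mathbf s} |(U_F)_{r,j}|^2\, q(\mathbf t),
\end{equation}
where $q(\mathbf t)$ is the internal-insensitive output distribution of the $(n-1)$-photon state $\ket{\psi'}$. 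I expect this bookkeeping to be the main obstacle: one must check that the Fock normalization factors $1/\mu(\mathbf s)$ combine correctly when photon $r$ lands in a mode already occupied by other photons. I would handle it by working with single-particle internal-resolved Fock states (creation operators in different internal modes commute and occupy separate Fock factors), so no bunching factor ever mixes the two sectors. As a cross-check, I would verify the formula against Eq.~(\ref{Shchesnovich1}). There every $\sigma$ that moves $r$ has $\EV{J_\sigma}=0$, so only permutations fixing $r$ survive, and the permanent expansion reduces to the convolution above.

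\emph{Step 3: uniformity and counting.} For the Fourier matrix $|(U_F)_{r,j}|^2 = 1/n$ for all $j$. A pattern is valid precisely when $j$ equals the unique mode $j^\ast(\mathbf t)$ given by the Lemma (existence of unique mode). Therefore
\begin{equation}
P_{\rm valid} = \sum_{\mathbf t} q(\mathbf t)\sum_j \tfrac1n\,\mathbb 1[j=j^\ast(\mathbf t)] = \tfrac1n\sum_{\mathbf t} q(\mathbf t) = \tfrac1n .
\end{equation}
This holds regardless of the distinguishability structure among the remaining $n-1$ photons.
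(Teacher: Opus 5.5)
Your proof is correct and follows essentially the same route as the paper's: factor out the photon in mode $r$, use $|(U_F)_{r,j}|^2=1/n$, and apply the unique-mode lemma. The only difference is in the final step. You write the output distribution as a convolution of the single-photon distribution with the $(n-1)$-photon distribution $q$. The paper instead projects the amplitude expansion $\frac{1}{\sqrt n}\sum_s c_s\,\omega^{-rj_s}a^\dagger_{j_s}[\xi_{i_r}]\ket{\psi_s}$ onto valid patterns and uses the mutual orthogonality of these terms to obtain $\frac{1}{n}\sum_s|c_s|^2=\frac{1}{n}$.
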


\begin{proof}
First, since creation operators commute, we may express
\begin{equation}
    \ket{\psi}=a_r^\dagger[\xi_{i_r}]a_0^\dagger[\xi_{i_0}]\cdots\widehat{a_r^\dagger[\xi_{i_r}]}\cdots a_{n-1}^\dagger[\xi_{i_{n-1}}]\ket{\vec{0}},
\end{equation}
where the hat indicates that the $r$th factor is omitted.
Applying the Fourier transform yields
\begin{eqnarray}
    F_n \ket{\psi}&=&
    \left(\frac{1}{\sqrt{n}}\sum_j \omega^{-rj} a_j^\dagger[\xi_{i_r}]\right)\\
    &\times &
    \left(F_n a_0^\dagger[\xi_{i_0}]\cdots\widehat{a_r^\dagger[\xi_{i_r}]}\cdots a_{n-1}^\dagger[\xi_{i_{n-1}}]\ket{\vec{0}}\right),
    \nonumber
\end{eqnarray}
where $\omega := e^{2\pi i/n}$.
Expanding the second factor gives
\begin{equation}
    F_n a_0^\dagger[\xi_{i_0}]\cdots\widehat{a_r^\dagger[\xi_{i_r}]}\cdots a_{n-1}^\dagger[\xi_{i_{n-1}}]\ket{\vec{0}}=\sum_s c_s \ket{\psi_s},
\end{equation}
where $s$ ranges over Fock patterns of $n-1$ photons in $n$ modes, $\ket{\psi_s}$ has external state vector $\ket{s}\!\bra{s}$, and $\sum_s |c_s|^2 = 1$.
The $n$-photon state after the Fourier transform is
\begin{equation}
    F_n \ket{\psi}=\frac{1}{\sqrt{n}}\sum_s c_s\sum_j \omega^{-rj}a_j^\dagger[\xi_{i_r}]\ket{\psi_s}.
\end{equation}
For a fixed $(n-1)$-photon pattern $s$, exactly one value of $j$ yields a valid $n$-photon pattern; call it $j_s$. The contributing terms are
\begin{equation}
    \frac{1}{\sqrt{n}}\sum_s c_s \omega^{-r j_s}a_{j_s}^\dagger[\xi_{i_r}]\ket{\psi_s}.
\end{equation}
Since the added photon is distinguishable, these terms are mutually orthogonal, so that
\begin{equation}
    \langle \psi_t |a_{j_t}[\xi_{i_r}]a_{j_s}^\dagger[\xi_{i_r}]| \psi_s \rangle= \delta_{s,t}.
\end{equation}
Their total weight is, therefore,
\begin{equation}
    \frac{1}{n}\sum_s |c_s|^2=\frac{1}{n}.
\end{equation}
\end{proof}

\subsection{Probabilities of valid patterns in the OBB model}

This result applies directly to the OBB model. It also applies more generally to any unitary whose matrix elements have equal modulus $1/\sqrt{n}$, such as Hadamard interferometers or tensor-product Fourier transforms.

\begin{corollary}[Application of the theorem]
Consider a state of $n$ single photons in different modes, each of the form
\(
\rho = (1-\epsilon)\,\rho_{\mathrm{id}} + \epsilon\,\rho_{\mathrm{dist}}
\)
in the OBB model. Then the probability of obtaining a valid pattern is
\begin{equation}
    \left(1-\frac{1}{n}\right)(1-\epsilon)^{n} + \frac{1}{n}.
\end{equation}
\end{corollary}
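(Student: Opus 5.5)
The plan is to exploit linearity of the output probabilities in the input state and split the OBB product state into a mixture of two kinds of terms. In one kind all photons are identical. In the other, at least one photon is orthogonal to all the others, which is exactly the case covered by the preceding theorem.

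\textbf{Step 1: expand the input.} Write the $n$-photon input as $\bigotimes_{k}\left[(1-\epsilon)\rho_{\mathrm{id}}^{(k)}+\epsilon\,\rho_{\mathrm{dist}}^{(k)}\right]$, where the superscript labels the input mode. Expanding gives a convex combination over subsets $B\subseteq\{0,\dots,n-1\}$ of ``bad'' photons, with weights $(1-\epsilon)^{n-|B|}\epsilon^{|B|}$. If $\rho_{\mathrm{dist}}$ is mixed, I would further decompose each bad photon's internal state into pure states. The OBB assumption is that each bad photon's internal state is orthogonal to the common ideal internal state $\xi_{\mathrm{id}}$ and to every other bad photon's state. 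So every resulting pure term has the product form $a_0^\dagger[\xi_{i_0}]\cdots a_{n-1}^\dagger[\xi_{i_{n-1}}]\ket{\vec 0}$ assumed in the theorem. The probability of a valid pattern, $\sum_{s\text{ valid}}\tr(\rho\,\Pi(s))$, is linear in $\rho$. Hence it equals the weighted average of the valid-pattern probabilities of these terms.

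\textbf{Step 2: evaluate each term.} For $B=\emptyset$, which has weight $(1-\epsilon)^n$, all $n$ photons share the internal state $\xi_{\mathrm{id}}$. By the zero-transmission law for the Fourier interferometer \cite{Tichy:2012gt}, every pattern violating $\sum_i i s_i\equiv 0 \pmod n$ has vanishing probability. The valid-pattern probability is therefore $1$. For $B\neq\emptyset$, pick any $r\in B$. By the OBB orthogonality, photon $r$ is orthogonal to all other photons, so the preceding theorem applies and the valid-pattern probability is exactly $1/n$. This holds for every pure component, independent of $B$ and of the internal states of the remaining photons.

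\textbf{Step 3: sum.} The total is
\begin{equation}
(1-\epsilon)^n\cdot 1+\left(1-(1-\epsilon)^n\right)\frac{1}{n}=\left(1-\frac{1}{n}\right)(1-\epsilon)^n+\frac{1}{n},
\end{equation}
which is the claimed expression.

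\textbf{Expected obstacle.} The only delicate point is Step 1. One must make sure the decomposition of $\rho_{\mathrm{dist}}$ into pure components keeps the required orthogonality, i.e.\ that the OBB model really places the bad photons in internal states orthogonal to the good ones and to each other. If bad photons could share an internal state, I would instead group them. The theorem still applies as long as at least one photon is orthogonal to all others. Where that fails, I would fall back on the more general statement that any term with at least one photon orthogonal to the rest contributes $1/n$. The sign convention of the Fourier matrix, which affects whether the ZTL reads $\equiv 0$, also needs to be matched with the convention used in the preceding theorem; I expect this to be routine.
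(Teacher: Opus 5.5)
Your proposal is correct and is essentially the paper's proof: the paper expands $\rho^{\otimes n}=\sum_k \binom{n}{k}\epsilon^k(1-\epsilon)^{n-k}\Phi_k$ (your subsets $B$, grouped by $|B|=k$), assigns valid-pattern probability $1$ to $\Phi_0$ and $1/n$ to every $\Phi_k$ with $k\ge 1$ via the preceding theorem, and sums. Your reading of the OBB model, with each bad photon orthogonal to the ideal state and to every other bad photon, is exactly what is needed; the fallback in your last paragraph restates the same condition and would not cover bad photons sharing a state (there the answer genuinely changes: for $n=4$ with photons $0,2$ good and $1,3$ sharing a bad state one gets $1/2$ rather than $1/4$), but the model excludes that case.
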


\begin{proof}
As usual, we expand
\begin{equation}
    \rho^{\otimes n}
    =
    \sum_{k=0}^n
    \binom{n}{k}
    \epsilon^k (1-\epsilon)^{n-k}\Phi_k,
\end{equation}
where $\Phi_k$ is the mixed state of all terms with $k$ distinguishable photons.
Let $h(\eta)$ be the probability of obtaining a valid pattern with starting state $\eta$. For $k\ge1$, $h(\Phi_k)=1/n$, while
$h(\Phi_0)=1$. Then
\begin{align}
    h\!\left(\rho^{\otimes n}\right)
    &= \sum_{k=0}^n
    \binom{n}{k}
    \epsilon^k (1-\epsilon)^{n-k} h(\Phi_k)\\
    \nonumber
    &= (1-\epsilon)^{n}
    + \frac{1}{n}\sum_{k=1}^n
    \binom{n}{k}
    \epsilon^k (1-\epsilon)^{n-k} \\
     \nonumber
    &= \left(1-\frac{1}{n}\right)(1-\epsilon)^{n}
    + \frac{1}{n}\sum_{k=0}^n
    \binom{n}{k}
    \epsilon^k (1-\epsilon)^{n-k}\\
     \nonumber
    &= \left(1-\frac{1}{n}\right)(1-\epsilon)^{n} + \frac{1}{n}.
\end{align}
\end{proof}

\paragraph*{Sampling complexity for the Fourier witness.} $h\!\left(\rho^{\otimes n}\right)\equiv p_{\text{valid}}$ is the probability of obtaining a valid pattern with an $n$ photon state $\rho^{\otimes n}$. Therefore, we can write the probability for forbidden patterns as $p_{\text{f}}=1-p_{\text{valid}}$. Furthermore, what we called $(1-\epsilon)^n$ in the previous Corollary is precisely the indistinguishable weight $c_n$. Thus, we can write
\begin{align}
    p_{\text{f}} &=1-p_{\text{valid}}\\
    &= 1- \bigg[\left(1-\frac{1}{n}\right)c_n + \frac{1}{n}\bigg].
    \nonumber
\end{align}
Solving for $c_n$ in terms of $p_{\text{f}}$ yields
\begin{equation}
    c_n = 1-\frac{n}{n-1}p_{\text{f}}
\end{equation}
which has the form of Eq.~(\ref{eq:lambda}), with $\lambda=(n-1)/{n}$.

\section{Sample complexity}\label{app:sample_complexity}
In Section~\ref{comparison} the sample complexity of the witnesses was compared, in this appendix the calculations to compare the sample complexity is elaborated on.
We start with Hoeffding's inequality \cite{Hoeffding1963}. Hoeffding's inequality expresses the largest deviation ($t$) in measured quantities ($X_n$) as a function of the error probability ($\epsilon$) and the number of samples ($N$). Therefore, Hoeffding's bounds are used to bound the fidelity. 

\begin{equation}
\mathbb{P}\left( S_N -  E\left[S_N\right]  \ge t \right) \le \exp \left( \frac{-2 t^2}{\sum_{i=1}^N (b_i - a_i)^2} \right) .
\end{equation}
 
To find the amount of deviation that could be expected for a confidence, an error probability is defined as $\epsilon$. The chance of measuring more than the maximal difference ($\delta$) between the mean probability ($p_x$) from the probability density function that was measured with $N$ samples and the expected value of the mean probability $(E[p_x]$) is lower than this error probability. This is expressed as
\begin{equation}
    P\left(p_x - \mathbb{E}[p_x] > \delta \right) \leq \epsilon.
\end{equation}
The probability is obtained from samples $S_N$ by calculating the probabilty $p_x = E\left[S_N\right]/N$ such that
\begin{equation}
    P\left( \frac{S_N}{N} -  \frac{E\left[S_N\right]}{N} > \delta \right) \leq \epsilon.
\end{equation}
Both sides of the inequality are multiplied by the amount of samples ($N$),
\begin{equation}
    P\left( S_N -  E\left[S_N\right] > \delta N \right) \leq \epsilon.
\end{equation}
From Hoeffding's inequality, we can substitute with $t=N\delta$, to get
\begin{equation}
    \epsilon = \exp \left( \frac{-2 (N\delta)^2}{\sum_{i=1}^N (b_i - a_i)^2} \right) .
\end{equation}
Since the variable we are measuring is probability, we know that $0 \le p_x \le 1$. So,  $\sum_{i=1}^N (b_i - a_i)^2 = N$. This leads to the expression
\begin{equation}\label{eq: epsi is}
    \epsilon = \exp \left( -2 N\delta^2 \right).
\end{equation}
From Eq.~(\ref{eq: epsi is}), we obtain the maximum error 
\begin{equation} \label{eq: delta hoef}
    \delta(N, \epsilon) = \sqrt{\frac{\ln(\frac{1}{\epsilon})}{2N}}
\end{equation}
 in the probabilities that can be added or subtracted from the probabilities that are used for the bounds on the indistinguishability and the fidelity. Now that this error in measured probabilities is defined, it must be used to calculate correct bounds on the fidelity and then also on the indistinguishability. Theoretically, the fidelity can be calculated as 
\begin{equation}
    F \leq p_{U^\dagger U} - (1 - c_n),
\end{equation}
and the fidelity becomes
\begin{equation}
    F \leq (p_{U^\dagger U} - \delta_{p_{U^\dagger U}}(N,\epsilon)) - c_n.
\end{equation}
The indistinguishability must also be corrected. This correction is dependent on the method used to determine $c_n$. The actual lower bound $c_n(\epsilon)$ is a function of the probability for the bound to be valid. An error parameter $\xi$ is subtracted from the measured indistinguishability parameter, which will be denoted by $c_n'$. This error parameter is factorized into the method-dependent factor ($\kappa_n$) and the factor obtained from Hoeffding's bounds ($\delta(N,\epsilon)$)
is
\begin{equation}
    c_n(\epsilon) \geq c_n' - \xi(N, \epsilon) = c_n' - \kappa_n \delta(N, \epsilon).
\end{equation}
$\kappa_n$ is dependent on the method  and the number of photons $n$ that are used, and $\delta(N, \epsilon)$ is given in Eq.~(\ref{eq: delta hoef}). 

The next step is to calculate the method-dependent factors for the different methods of witnessing the indistinguishability. This is done by adding or subtracting the error from the Hoeffding's bounds directly to the measured probabilities. This is systematically done for all four witnesses.

For the Fourier method, the correction is added to the probability of measuring a forbidden mode
\begin{equation}
    c_n > 1 - \frac{n}{n-1}(p_{\text{f}} + \delta) = c_n' - \left(1 + \frac{1}{n-1}\right)\delta,
\end{equation}
with $n$ as the number of photons. The correction factor is found to be
\begin{equation}\kappa_{n,\text{Fourier}}=\frac{n}{n-1}. 
\end{equation}
This means that the scaling is asymptotically approaching 1 for big $n$. For the $n=3$ case $\kappa_{3,\text{Fourier}}=\frac{3}{2}$.



For the cyclic interferometer method, the correction is added to the probability of the positive fringe and deducted from the negative such that there is a total penalty of $2\delta$ given by
\begin{equation}
    c_n > 2^{n-1}(|p_+ - p_-| - 2\delta) = c_n'- 2^n\delta.
\end{equation}
It can be seen that $\kappa_{n,\text{Pont}}=2^n$ which is exponentially growing with the number of photons. For the 
$3$-photon case, the 
pre-factor is $8$.

For the superposition HOM method, we subtract the error from the bunching probability
\begin{equation}
    c_n > (2n-2)(p_{\text{b}}-\delta) - 2n +3= c_n' - (2n-2)\delta.
\end{equation}
So, $\kappa_{n,\text{Brod}}=2n-2$ and for the case that will be measured  $\kappa_{3,\text{Brod}}=4$. 

For the two-mode correlator method, we use the correlator instead of the probability. We have to slightly change the definition of the error. First, we define the correlator as 
\begin{equation}
    C_{1,2} = \left< n_1 n_2 \right> - \left< n_1 \right>\left< n_2 \right>.
\end{equation}
 Then we determine the error for the number operator. The difference is that the number operators can be measured to be more than 1. So, $b_i$ equals the number of photons in the system. However, for $\left< n_1 n_2\right>$ the max is $(\frac{n}{2})^2$ for even and $\frac{n+1}{2}\frac{n-1}{2}$ 
 for odd $n$. The bound for the 
 correlator would be
\begin{equation}
    C_{1,2} > \left< n_1 n_2 \right> - \delta_{\left< n_1 n_2 \right>} - (\left< n_1 \right>-\delta_{\left< n_1 \right>})(\left< n_2 \right> -\delta_{\left< n_2 \right>}),
\end{equation}
with delta as the error defined with $b_i=1$. Then, the indistinguishability witness becomes
\begin{widetext}

\begin{equation}
 c_n  >
    \begin{cases}
        \frac{4n}{n-2}(\left< n_1 n_2 \right> - (\frac{n}{2})^2 \delta - (\left< n_1 \right> + n\delta)(\left< n_2 \right> + n\delta)),    &  \text{for } n=\text{even}, \\
        \frac{4n}{n-2}(\left< n_1 n_2 \right> - \frac{n-1}{2}\frac{n+1}{2} \delta - (\left< n_1 \right> + n\delta)(\left< n_2 \right> + n\delta)),    & \text{for } n=\text{odd}.
    \end{cases}
\end{equation}
\end{widetext}
This equation does 
not lend itself to the pre-factor notation. However, it is already clear that the bound grows polynomially (with $O(n^2)$), but also that the value is much larger than the others for the three-photon case. To be exact $n=3$ is given by
\begin{equation}
    c_n > 12(\left< n_1 n_2 \right> - 2 \delta - (\left< n_1 \right> + 3\delta)(\left< n_2 \right> + 3\delta)).
\end{equation}
To summarize the information, the pre-factors are collected in Table \ref{tab: pre-factors}. These pre-factors are directly related to the sampling complexity. If the pre-factor is larger, it will take more samples to get the desired bound tightness and certainty. This table is also presented in the 
main text.

\begin{table}[H] 
    \centering
    \begin{tabular}{@{}lcc@{}}
    \hline \hline
    Method  & Generalized pre-factor ($\kappa_n$)  & pre-factor for $n=3$ ($\kappa_3$)     \\ \hline
    Fourier  transform            & $\frac{n}{n-1}$  & $\frac{3}{2}$\\
    Cyclic interferometer                     & $2^n$           & 8       \\
    Superposition HOM  dip                     & $2n-2$         & 4         \\
    Two-mode correlator                   & $O(n^2)$        &  $ > 24$    \\
    \hline \hline
    \end{tabular}
    \caption{The pre-factors $\kappa_n$ for the calculation of the Hoeffding's bounds with $c_n(\epsilon) > c_n' - \kappa_n\delta(N, \epsilon)$ for the different witnesses. 
    The derivations can be found in Appendix \ref{app:sample_complexity}. For the two-mode correlator method, this calculation is not possible, but the scaling is $O(n^2)$.} \label{tab: pre-factors}
\end{table}


%

\end{document}